\documentclass{ut-thesis}
\usepackage[backend=bibtex,style=numeric,maxbibnames=99]{biblatex} 
\usepackage[colorlinks]{hyperref} 
\usepackage{graphicx} 
\usepackage{booktabs} 
\usepackage{amsthm}
\usepackage{amsmath}
\usepackage{amsfonts}
\usepackage{mathtools}
\usepackage{cleveref} 
\usepackage{algorithmic}
\usepackage[ruled,lined,boxed]{algorithm2e}
\usepackage{thmtools}
\usepackage{wasysym}
\usepackage{multirow}
\usepackage{amssymb}


\usepackage{tikz}
\usetikzlibrary{graphs}

\Crefname{algocf}{Algorithm}{Algorithms}

\PassOptionsToPackage{frozencache,cachedir=.}{minted}
\usepackage[most]{tcolorbox}


\newtcblisting{myminted}{%
    listing engine=minted,
    minted language=python,
    listing only,
    breakable,
    enhanced,
    minted options = {
            linenos,
            breaklines=true,
            breakbefore=.,
            fontsize=\footnotesize,
            numbersep=2mm
        },
    overlay={%
            \begin{tcbclipinterior}
                \fill[gray!25] (frame.south west) rectangle ([xshift=4mm]frame.north west);
            \end{tcbclipinterior}
        }
}

\BeforeBeginEnvironment{minted}
{\begin{tcolorbox}[breakable, enhanced]}

        \AfterEndEnvironment{minted}
        {\end{tcolorbox}}

\hypersetup{
    colorlinks=true,
    citecolor=blue
}

\allowdisplaybreaks
\newcommand{\Sym}{\mathcal{S}}
\newcommand{\Y}{\mathcal{Y}}
\newcommand{\V}{\mathcal{V}}
\newcommand{\X}{\mathcal{X}}
\newcommand{\J}{\mathcal{J}}
\newcommand{\Q}{\mathcal{Q}}

\newcommand{\Z}{\mathcal{Z}}
\newcommand{\binstrs}{\{0, 1\}^\star}

\newcommand{\M}{\mathcal{M}}
\newcommand{\1}{\mathbf{1}}
\newcommand{\N}{\mathcal{N}}

\DeclareMathOperator*{\suchthat}{s.t.}
\DeclareMathOperator*{\argmax}{argmax}

\newcommand{\emptystr}{\varepsilon}

\theoremstyle{plain}
\newtheorem{theorem}{Theorem}[section]

\theoremstyle{definition}
\newtheorem{definition}[theorem]{Definition}
\newtheorem{lemma}[theorem]{Lemma}
\newtheorem{remark}[theorem]{Remark}
\newtheorem{crv}[theorem]{CRV}

\theoremstyle{remark}
\newtheorem{example}[theorem]{Example}

\DeclarePairedDelimiter\abs{\lvert}{\rvert}  
\DeclarePairedDelimiter{\ceil}{\lceil}{\rceil}
\newcommand\g{\,\vert\,}  

\newcommand{\encode}{\mathtt{encode}}
\newcommand{\decode}{\mathtt{decode}}
\newcommand{\flookup}{\mathtt{forward\_lookup}}
\newcommand{\rlookup}{\mathtt{reverse\_lookup}}
\newcommand{\bstinsert}{\mathtt{insert}}
\newcommand{\bstremove}{\mathtt{remove}}

\newcommand{\Naturals}{\mathbb{N}}

\newcommand{\Integers}{\mathbb{Z}}

\newcommand{\multiset}[1]{\{\!\!\{#1\}\!\!\}}

\DeclarePairedDelimiterX{\Ex}[1]{[}{]}{#1}
\DeclareMathOperator*{\ExOp}{\mathbb{E}}
\newcommand{\E}{\ExOp\Ex}

\DeclarePairedDelimiterX{\KLx}[2]{(}{)}{%
    #1\;\delimsize\|\;#2%
}
\DeclareMathOperator*{\KLOp}{KL}
\newcommand{\KL}{\KLOp\KLx}

\DeclarePairedDelimiterX{\ELBOx}[1]{(}{)}{#1}
\DeclareMathOperator*{\ELBOOp}{ELBO}
\newcommand{\ELBO}{\ELBOOp\ELBOx}


\DeclareMathOperator{\A}{\mathcal{A}}

\DeclareMathOperator{\bin}{\mathtt{bin}}
\newcommand{\B}{\mathcal{B}}

\newcommand{\eclass}[1]{[#1]}

\newcommand{\quotient}[2]{{#1}/{#2}}


\newcommand{\defeq}{\coloneqq}
\newcommand{\SetOrders}{\mathcal{S}_n(\mathcal{\tilde{X}})}
\newcommand{\SetDataSet}{\mathcal{\tilde{X}}}
\newcommand{\changed}[1]{#1}
\author{Daniel Severo}
\title{Random Permutation Codes: \\ Lossless Source Coding of Non-Sequential Data}
\degree{Doctor of Philosophy}
\department[]{Graduate Department of Electrical and Computer Engineering}
\gradyear{2025}
\addbibresource{main.bib}
\begin{document}
\frontmatter
\maketitle
\begin{abstract}
  This thesis deals with the problem of communicating and storing non-sequential data.
We investigate this problem through the lens of lossless source coding, also sometimes referred to as lossless compression, from both an algorithmic and information-theoretic perspective.

Lossless compression algorithms typically preserve the ordering in which data points are compressed.
However, there are data types where order is not meaningful, such as collections of
files, rows in a database, nodes in a graph, and, notably, datasets in machine learning applications.

Compressing with traditional algorithms is possible if we pick an order for the elements
and communicate the corresponding ordered sequence.
However, unless the order information is somehow removed during the encoding process, this procedure will be sub-optimal, because the order contains information and therefore more bits are used to represent the source than are truly necessary.

In this work we give a formal definition for non-sequential objects as random sets of equivalent sequences, which we refer to as \emph{Combinatorial Random Variables} (CRVs).
The definition of equivalence, formalized as an equivalence relation, establishes the non-sequential data type represented by the CRV.
The achievable rates of CRVs is fully characterized as a function of the equivalence relation as well as the data distribution.

The optimal rates of CRVs are achieved within the family of \emph{Random Permutation Codes} (RPCs) developed in later chapters.
RPCs randomly select one-of-many possible sequences that can represent the instance of the CRV.
The selection is done through sampling with bits-back coding \cite{townsend2019practical, frey1996free} and asymmetric numeral systems \cite{duda2009asymmetric}, and guarantees the achievability of the optimal rate.

Specialized RPCs are given for the case of multisets, graphs, and partitions/clusterings, providing new algorithms for compression of databases, social networks, and web data in the JSON file format.
The computational and memory complexity of RPCs is discussed and shown to be attractive for the applications considered.
\end{abstract}
\begin{acknowledgements}
  To my mother, father, and sister, who, despite having nothing, gave me everything.
  To my brother, who makes me proud of the person he is becoming.
  To my wife, for all her sacrifice, patience, and partnership, during these times of change.
  Finally, to my advisors, friends, and collaborators, for pointing me in the right direction.
\end{acknowledgements}
\tableofcontents
\listoftables
\listoffigures
\listoftheorems[ignoreall, show={theorem,lemma,remark}, swapnumber, title={List of Theorems, Lemmas, and Remarks}]
\listoftheorems[ignoreall, show={definition}, swapnumber, title=List of Definitions]
\listoftheorems[ignoreall, show={example}, swapnumber, title=List of Examples]
\mainmatter
\chapter*{Notation}

\paragraph{Random Variables, Distributions, and Entropy}
\begin{enumerate}
    \item In most cases, random variables are represented by capital letters such as $X,Y,Z$, while their instances are lower case $x,y,z$. Some exceptions are made and will be clearly specified in the text. For example, random permutations and their instances will both be represented by either $\sigma$ or $\pi$.
    \item The alphabet of a random variable $X$ is denoted with $\mathcal{X}$, the calligraphic version of the same symbol. Exceptions are made clear in the text with the most important being for multiset-valued random variables.
    \item A sequence $X_1, \dots, X_n$ of size $n$ is abbreviated as $X^n$.
    \item All distributions $P_{(\cdot)}$ are discrete and will be sub-indexed by their respective random variables, such as $P_X$ and $P_{X\g Y}$.
    \item Entropy of a random variable $X \sim P_X$ is denoted by both $H(X)$ and $H(P_X)$.
    \item Given two distributions $P$ and $Q$, over $x \in \X$ and $y \in \Y$, the product distribution $P \cdot Q$ over $\X\times\Y$ assigns probabilities $(PQ)(x, y) = P(x)\cdot Q(y)$.
    \item \changed{$\E{\log P_X}$ is a shorthand for $\E{\log P_X(X)} = \sum_{x \in \X} P_X(x)\cdot \log P_X(x)$. Similarly, $\E{\log P_{Z \g X}} = \sum_{(x, z) \in \X \times \Z} P_{X, Z}(x, z) \cdot \log P_{X, Z}(x, z)$.}
    \item $\KL{P_Z}{P_{Z \g X}(\cdot \g x)}$ is the KL divergence with second argument the conditional distribution $Q_Z(z) = P_{Z \g X}(z \g x)$, which is a function of $x$.
    \item When dealing with collections of i.i.d.\ random variables $X_i \sim P$ we will sometimes drop the subscript and write $X \sim P$.
\end{enumerate}

\paragraph{Multisets}
\begin{enumerate}
    \setcounter{enumi}{9}
    \item The number of elements contained in a multiset $\M$, including repetitions, is sometimes referred to as the ``size of $\M$" and is denoted by $\abs{\M}$.
    \item When the elements of a multiset $\M$ are the same as those in a sequence $X^n$, we write $\M = \{X_1, \dots, X_n\} = \multiset{X^n}$.
    \item The number of elements in $\M = \{x_1, \dots, x_n\}$ that are equal to some $x \in \X$ is denoted as $\M(x) = \sum_{i=1}^n \1\{x = x_i\}$.
    \item Given a multiset $\M$ and an element $x$, the multiset $\M \setminus \{x\}$ has the same elements as $\M$ but with the occurrence count of element $x$ decreased by $1$ (unless the occurrence count was $1$, then $x$ is removed from $\M$).
    \item Given two multisets $\M$ and $\M'$, we define $\M \setminus \M'$ to be the resulting multiset after decreasing the occurrences of elements in $\M$ by their occurrence counts in $\M'$.
    \item Occurrence counts cannot be negative, hence $\{x\} \setminus \{x, x\}$ results in the empty multiset $\emptyset$.
    \item Similarly, $\M \cup \M'$ denotes the additive union of two multisets.
\end{enumerate}

\paragraph{Graphs}
\begin{enumerate}
    \setcounter{enumi}{16}
    \item All graphs in this work are labeled, have a fixed number of nodes ($n$), a variable number of edges ($m$), and are in general non-simple (i.e., allow loops and repeated edges), unless mentioned otherwise.
    \item A sequence of graphs on $n$ nodes, where an edge is added at each step, can be represented as a sequence of vertex elements $v_i$, taking on values in the vertex set $\V$, with the $i$-th edge defined as $e_i = (v_{2i-1}, v_{2i})$.
          The $i$-th graph is taken to be $G_i = \{\{v_1, v_2\}, \dots, \{v_{2i-1}, v_{2i}\}\}$ and the $k$-th vertex of an edge is indicated by $e[k]$.
\end{enumerate}
\paragraph{Coding with Asymmetric Numeral Systems (ANS)}
Given a quantized probability distribution (\Cref{def:quantized-pmf}) $Q_X(x) = \frac{q_x}{N}$,
\begin{enumerate}
    \setcounter{enumi}{18}
    \item Encoding with ANS is denoted by both $\encode(s, x, Q_X)$; and ``$\encode(s, x)$ with $Q_X$".
    \item Decoding with ANS is denoted by both $\decode(s, Q_X)$; and ``$\decode(s)$ with $Q_X$".
\end{enumerate}

\paragraph{Miscellaneous}
\begin{enumerate}
    \setcounter{enumi}{20}
    \item The ascending factorial function $a: \mathbb{R}\times\mathbb{N} \mapsto \mathbb{R}$ is defined by $a(x, k) = x(x+1)(x+2)\dots(x+k-1)$, for $k > 0$, with $a(x, 0) = 1$ for all $x$, and is abbreviated as $x^{\uparrow k}$.
    \item $\defeq$ denotes equality by definition.
    \item \changed{$[n] \defeq \{1, \dots, n\}$ and $[n) \defeq \{0, \dots, n-1\}$ denote intervals of integers.}
    \item The symbol $\equiv$ denotes equivalence between mathematical objects under the given context.
    \item Rate (\Cref{def:rate}) and asymptotic rate (\Cref{def:asymptotic-rate}) will sometimes be multiplied by the sequence length, but will still be referred to by their original names, when clear from context.
    \item \changed{The set of all finite-length binary strings is $\binstrs \defeq \{\varepsilon, 0, 1, 00, 01, \dots\}$, where $\varepsilon$ is the empty string.}
    \item Logarithms are always base $2$.
\end{enumerate}
\chapter*{Preface}
\changed{
    This thesis attempts to study, and propose algorithms for, the communication and storage of \emph{non-sequential data} using as few bits of information as possible.
    To illustrate, consider the problem of representing a set of $n$ distinct elements, from a total of $m$ possible candidates, on a modern digital computer.
    A common solution is to represent the set as a sequence of its elements, such as in an array or list, in decreasing or increasing order (i.e., sorted).
    However, a set is a mathematical object void of any ordering between elements: any of the $n!$ possible orderings of the sequence would be valid representatives.
    One can therefore hypothesize a communication protocol where only the first $n-1$ elements are stored, while each of the $m - (n-1)$ possible values for the last element are mapped to one of the $(n-1)!$ orderings of the first $n-1$ elements.
    The value of the last element need not be stored, and can be deduced directly from the ordering of the first $n-1$ elements, assuming the map is known.
    This example illustrate the key observation underlying this thesis: \textbf{a non-sequential object can be stored using less bits than what is required to represent an arbitrary sequence of its elements; by encoding information in the order between elements in the sequence.}

    Throughout this manuscript we will identify different types of non-sequential objects including a variety of graphs, generalizations of sets (i.e., multisets), partitions, clusters, and certain families of permutations of integers.
    For each we will establish the minimal number of bits required to store such objects through the lens of information theory \cite{shannon1948mathematical,cover1999elements}.
    We will show all these objects can be unified under a common framework (\Cref{chapter:crv-rpc}) allowing us to develop computationally efficient compression algorithms for these data types.

}

Encoding information in the order between elements of a sequence is second-nature to human beings.
In scholarly writings, the expected contribution of an individual is communicated to the academic community via their position in the author-list.
When viewing the results of a competition, we expect the ordering to encode the ranking of players.
Changing the order between words in a sentence can drastically alter the overall meaning (at least in some languages).

There are situations where we would like to avoid having to explicitly choose a specific ordering.
For example, an author-list where all persons contributed equally, or the listing of players in a team.
Similarly, politicians of bilingual countries, e.g., Canada, often go through great lengths during public speeches to avoid giving preferential treatment to a single language, by switching back and forth between both languages.

To remove information from the ordering it is common to decide the order through a random coin toss.
The hope of the transmitter is that the receiver, knowing the order was decided randomly, will be less likely to extract meaning from the order between elements.
The decision of the sequence order is outsourced to a source of randomness not controlled by the transmitter.

The solution proposed in the previous paragraph, as well as the family of algorithms developed in this thesis, \emph{Random Permutation Codes}, makes use of the following fact defining the initial seed of our work: \textbf{from an information-theoretic point of view, a randomly ordered sequence carries the same information as a non-sequential collection of the same elements.}

\vspace{2em}

\changed{
    \paragraph*{Outline} The following is a summary of each paragraph intended to inform the reader of what to expect in each chapter.
    \begin{itemize}
        \item[\Cref{chapter:lossless-source-coding}]
              begins by defining what is a source code and shows a greedy algorithm is the optimal code minimizing the number of bits needed to represent any data source, but is infeasible to use in practice.
              An alternative, and computationally less demanding, family of codes, known as Prefix-Free Codes (\Cref{def:extended-codes}), are discussed and shown to achieve the same rate as the optimal code asymptotically with the sequence length.
              As a highlight, we give an example of how the greedy code can achieve lossless compression rates below the entropy of the source.
              The chapter concludes by briefly discussing the optimal code within the Prefix-Free family, Huffman Codes, which are known to be sub-optimal up to $1$ bit for each symbol in the sequence.

        \item[\Cref{chapter:entropy-coding-with-ans}]
              develops the theory of coding with probability models, referred to as ``entropy coding'', via asymmetric numeral systems (ANS) \cite{duda2009asymmetric}: a last-in-first-out compression algorithm.
              The achievable rates for compression with ANS are characterized in the regime where the state of ANS is large.
              We show ANS can be used as an ``invertible sampler'', by performing a decode operation with the desired probability distribution.
              The chapter concludes with a discussion on bits-back or free-energy coding \cite{frey1996free,townsend2019practical}, the mechanism that allows us to store information in the order between elements in further chapters.

        \item[\Cref{chapter:combinatorial-objects}]
              discusses a myriad of non-sequential objects and establishes notation for further chapters.
              This chapter also serves as a review of basic mathematical concepts such as equivalence relations (\Cref{def:equivalence-relation}), total orderings (\Cref{def:total-order}), permutations on multisets (\Cref{def:permutations-on-multisets}), and others.

        \item[\Cref{chapter:roc}]
              studies compression of sets and multisets and proposes an optimal compression algorithm, \emph{Random Order Coding} (ROC), that is quasi-linear in the number of elements. ROC exploits ANS as an invertible sampler to perform sampling without replacement, implicitly encoding information in the ordering between elements (as discussed previously at the start of this preface).
              The chapter discusses in detail the data structure developed which allows the algorithm to execute in quasi-linear time (a modified binary search tree).
              A discussion is provided on the connection between multisets, method of types \cite{cover1999elements}, and universal source coding of independent and identically distributed symbols.
              We conclude with a variety of experiments on synthetic multisets, sets of images, as well as sets of sets of text (i.e., JSON maps).

        \item[\Cref{chapter:rcc}]
              deals with compression of partitions of arbitrary sets or, equivalently, clustering of data points.
              The algorithm introduced, \emph{Random Cycle Coding} (RCC), encodes information in the ordering between elements, similar to ROC.
              However, RCC exploits the cycle structure of permutations to define a partition of the elements in the set.
              We show this procedure optimally compresses partitions which have probability proportional to the product of the number of elements in each subset of the partition.
              The chapter concludes with a focus on experiments for the particular application of similarity search with vector databases such as FAISS \cite{johnson2019billion}, as well as synthetic data.

        \item[\Cref{chapter:rec}]
              discusses the fundamental limits on graph compression for the undirected and direct case, as well as the generalization of graphs known as hyper-graphs.
              It is shown that the non-sequential nature of graphs comes from the freedom of permuting edges, as well as vertices within an edge when edges are undirected.
              An algorithm is given, named \emph{Random Edge Coding} (REC), achieving the optimal rate.
              The computational complexity of REC is quasi-linear in the number of edges present in the graph, making it efficient for sparse graphs, where the number of edges is significantly smaller than the total number of possible edges.
              The chapter shows how to leverage Pólya's Urn model \cite{mahmoud2008polya} as a probabilistic model over graphs to achieve competitive compression performance on graphs with millions of vertices and billions of edges.

        \item[\Cref{chapter:crv-rpc}]
              concludes the thesis by giving a general definition for non-sequential objects as a random variable with alphabet equal to equivalence classes over sequences, under some given equivalence relation.
              Changing the definition of the equivalence relation allows us to model different non-sequential objects using the same framework.
              We refer to random variables created via this mechanism as \emph{Combinatorial Random Variables} (CRVs, \Cref{def:combinatorial-random-variables}), and establish the fundamental limits on compression of these data types.
              A broad family of algorithms is introduced which achieves the optimal compression rate of CRVs, and generalizes ROC, RCC, and REC, which we call \emph{Random Permutation Codes} (RPCs).
              The chapter concludes with a discussion on suggestions for future work.
    \end{itemize}
}

\paragraph*{Contributions}
The contributions of this thesis are detailed in the following paragraph.
Claims of novelty are supported by the literature review presented in each chapter, but are inevitably to the best of our knowledge.

\Cref{chapter:lossless-source-coding} reviews lossless source coding.
All results are known, but some parts of the exposition is novel.
In particular, the characterization of the gap between the optimal and prefix-free codes, shown in  \Cref{example:optimal-rate-for-a-uniform-source}, as well as the centering of the discussion around optimal, and greedy, non-prefix-free codes.

\Cref{chapter:entropy-coding-with-ans} closely follows the structure of Chapter 5 in \cite{cover1999elements}, but differs in that it establishes source coding theorems with the use of asymmetric numeral systems \cite{duda2009asymmetric}.
The characterization of the achievable ANS rates (\Cref{theorem:source-coding-ans}), through the concept of the large state regime (\Cref{theorem:optimality-ans}), are known in practice \cite{duda2009asymmetric}, but a formal characterization has not been given in the literature.
The same applies to the results on correctness of sampling via ANS decoding (\Cref{lemma:ans-sampling}).
The characterization of the bits-back coding rates (\Cref{lemma:bb-ans-rate}) is known \cite{townsend2019practical}, but the formalization is new.

\Cref{chapter:roc},  \Cref{chapter:rcc}, and \Cref{chapter:rec} are entirely novel and first appeared in the following published manuscripts:
\changed{
    \begin{enumerate}
        \item[\cite{severo2021your}] Daniel Severo, James Townsend, Ashish J Khisti, Alireza Makhzani, and Karen Ullrich. ``Your dataset is a multiset and you should compress it like one”. In: \emph{NeurIPS 2021 Workshop on Deep Generative Models and Downstream Applications}. 2021
        \item[\cite{severo2023compressing}] Daniel Severo, James Townsend, Ashish J Khisti, Alireza Makhzani, and Karen Ullrich. "Compressing multisets with large alphabets." \emph{IEEE Journal on Selected Areas in Information Theory} 3, no. 4 (2022): 605-615.
        \item[\cite{severo2023random}] Daniel Severo, James Townsend, Ashish J Khisti, and Alireza Makhzani. ``One-Shot Compression of Large Edge-Exchangeable Graphs using Bits-Back Coding”. In: \emph{International Conference on Machine Learning}. PMLR. 2023, pp. 30633–-30645
    \end{enumerate}
}
as well as the following work currently under review:
\changed{
    \begin{itemize}
        \item Daniel Severo, Ashish J Khisti, and Alireza Makhzani. ``Random Cycle Coding: Lossless Compression of Cluster Assignments via Bits-Back Coding". Submitted to \emph{Advances in Neural Information Processing Systems}, 2024.
    \end{itemize}
}
The results in the final chapter, \Cref{chapter:crv-rpc}, are also novel, and debut in this thesis.
\chapter{Lossless Source Coding}\label{chapter:lossless-source-coding}
This chapter reviews the central topic of this thesis: the design of algorithms for representing data, with perfect fidelity, in digital media.
This problem is formulated following the developments of Shannon \cite{shannon1948mathematical} into what is now known as \emph{Information Theory}, and, more specifically in our case, \emph{Source Coding}.

In this classic setting, data and information are viewed through a statistical and probabilistic lens, formalized as a random variable with a known discrete probability law.
The amount of information carried by this mathematical object is completely defined by the probability law, being completely agnostic to the semantics of the data itself.
Under this theory, the resources required to store and transmit text, numerical quantities, high-resolution images and videos, or any other arbitrary data, are equal, as long as the probability law governing their appearance in our observations is the same.

A common misconception amongst practitioners is that the average number of bits required to store a single observation is lower bounded by a quantity known as the \emph{entropy}.
This is true only on average and for an infinite number of samples.
It is possible to communicate a sample from a data source using less bits than the entropy, on average, when the number of samples is finite, as shown in \Cref{subsec:rates-below-entropy-are-achievable}.

\section{Codes}
Let $X^n \sim P_{X^n}$ be a sequence of discrete random variables, with alphabet $\X^n$, representing a random sample of the data source.
The objective of lossless source coding is to find a binary representation for the elements of $\X^n$, under a set of given constraints, such that any $x^n \in \X^n$ can be recovered from its representation unambiguously and without loss of information.
This is achieved through the use of a \emph{lossless source code}.

\begin{definition}[Lossless Source Code]\label{def:code}
    A \emph{lossless source code}, or simply \emph{code}, is a bijection $C:\X^n \mapsto \binstrs$.
    The \emph{codeword} of $x^n \in \X^n$ is the image $C(x^n) \in \binstrs$.
    The \emph{codebook} is the set of codewords $C(\X^n) \subset \binstrs$.
    The number of bits in the binary string $C(x^n)$ is called the \emph{code length} of $x^n$ and is abbreviated as $\ell_{x^n}$ when the code is clear from context.
\end{definition}

\emph{Lossless compression} is concerned with finding codes that assign codewords of small lengths for the purpose of storing and transmitting data.
The restriction on length can take on many forms with the most common being minimizing the average- or worst-case length.
In this thesis we will restrict our discussion to the average case, defined by the following quantity which we refer to as the \emph{rate}.
\begin{definition}[Rate]\label{def:rate}
    Given a code $C$, and data source $(X^n, \X^n, P_{X^n})$, the average number of bits required to communicate an instance of $x^n \in \X^n$, drawn with probability $P_{X^n}(x^n)$, is the expected code length of $C$ under $P_{X^n}$, normalized by sequence length,
    \begin{align}\label{eq:bitrate}
        R(C, P_{X^n}) \defeq \frac{1}{n}\cdot\E{\ell_X} = \sum_{x^n \in \X^n}P_{X^n}(x^n)\cdot\ell_{x^n}.
    \end{align}
\end{definition}

\begin{definition}[Asymptotic Rate]\label{def:asymptotic-rate}
    For the same setup as \Cref{def:rate}, \changed{assuming $P_{X^n}$ is defined for every $n \in \Naturals$}, the asymptotic rate is the limit of the rate for increasing $n$,
    \begin{align}
        \lim_{n \rightarrow \infty} R(C, P_{X^n}).
    \end{align}
\end{definition}

\subsection{Optimal Codes}
A lossless code assigns a binary string to each element of $\X^n$.
For any code $C$, if there exists an element $s \in \binstrs$ which is not present in the codebook, $s \notin C(\X^n)$, and has length smaller than some code-word, $\abs{s} < \ell_{x^n}$, then redefining $C(x^n) \defeq s$ will decrease the rate.
The recursive application of this argument implies that a code achieving the smallest rate must assign code-words sequentially based on length, with ties broken arbitrarily.
Codes achieving the smallest rate for a given data source are referred to as \emph{optimal codes}.
\begin{definition}[Optimal Code]\label{def:optimal-code}
    Given a data source $(X^n, \X^n, P_{X^n})$, an \emph{optimal code} is the solution to the following optimization problem
    \begin{align}\label{opt:optimal-code}
        \begin{split}
            \min_C    & \quad R(C, P_{X^n})                  \\
            \suchthat & \quad C \text{ is a code for } \X^n.
        \end{split}
    \end{align}
    The following construction is a solution to \eqref{opt:optimal-code} for \emph{any} data source.
    Sort the elements of $\X^n$ by their probabilities under $P_{X^n}$, in descending order, breaking ties arbitrarily.
    Assign code words sequentially, ordered by their lengths, starting from the most probable sequence, $\argmax_{x^n \in \X^n} P_{X^n}(x^n)$.
\end{definition}
\begin{example}[Optimal Codes]\label{example:optimal-codes}
    For $\X = \{a, b, c, d\}$ and $n=1$; $C_1$, $C_2$, and $C_3$, are optimal codes.
    \changed{
        \begin{align}
            P_X(a) & = 0.5 & C_1(a) & = \varepsilon & C_2(a) & = \varepsilon & C_3(a) & = \varepsilon \\
            P_X(b) & = 0.3 & C_1(b) & = 0           & C_2(b) & = 0           & C_3(b) & = 1           \\
            P_X(c) & = 0.3 & C_1(c) & = 1           & C_2(c) & = 1           & C_3(c) & = 0           \\
            P_X(d) & = 0.1 & C_1(d) & = 00          & C_2(d) & = 10          & C_3(d) & = 11
        \end{align}
    }
\end{example}

\subsection{Sequential Codes}
\begin{definition}[Sequential Code]
    A \emph{sequential code} is a collection of functions,
    \begin{align}
        C_i \colon \X^{i-1} \times \X \mapsto \binstrs,
    \end{align}
    where $C_i^\prime (x) \defeq C_i(x^{i-1}, x)$ is a lossless code over $\X$ for any sequence $x^{i-1}$.
\end{definition}

\begin{example}[Optimal Sequential Codes]\label{example:optimal-code-seq}
    $C_n$ in \Cref{table:example-optimal-code-seq} is an optimal code over $\X = \{a, b\}$ for any $n \in \Naturals$, when symbols are independent and identically distributed, and $P_X(a) \geq P_X(b)$,
    \begin{table}[h!]
        \centering
        \changed{
            \begin{tabular}{llllllllll}
                \toprule
                $C_n(x)$ & $\varepsilon$ & $0$   & $1$   & $00$  & $01$  & $10$  & $11$  & $000$ & $\dots$ \\
                \midrule
                $n=1$    & $a$           & $b$   &       &       &       &       &       &       &         \\
                $n=2$    & $aa$          & $ab$  & $ba$  & $bb$  &       &       &       &       &         \\
                $n=3$    & $aaa$         & $aab$ & $aba$ & $baa$ & $abb$ & $bab$ & $bba$ & $bbb$ &         \\
                $\dots $ &               &       &       &       &       &       &       &       &         \\
                \bottomrule
            \end{tabular}
        }
        \caption{Optimal code from \Cref{example:optimal-code-seq}}
        \label{table:example-optimal-code-seq}
    \end{table}
\end{example}

An optimal sequential code can be implemented via a pair of encoding and decoding functions that specify the next codeword as a function of the current codeword, the incoming symbol, and the number of symbols seen so far $i \in [n]$.
The signatures of the pair of functions are
\begin{align}\label{eq:codes-encode-decode}
    \encode & \colon \binstrs \times \X \times [n]  \mapsto \binstrs,           \\
    \decode & \colon \binstrs \times [n]            \mapsto \X \times \binstrs.
\end{align}
\begin{example}[Encode and Decode functions]
    For the table of \Cref{example:optimal-code-seq} the steps taken to encode $bab$, with the step argument omitted for clarity, are
    \changed{
        \begin{align*}
            \encode(\emptystr, b) & = 0  & \decode(10) & = (b, 1)          \\
            \encode(0, a)         & = 1  & \decode(1)  & = (a, 0)          \\
            \encode(1, b)         & = 10 & \decode(0)  & = (b, \emptystr),
        \end{align*}
    }
    where $\emptystr$ is the empty string.
\end{example}
As it stands, encoding and decoding must be done via table or dictionary lookups.
For each $i \in [n]$ the dictionary contains $\abs{\X}^i$ key-value pairs totalling,
\begin{align}
    \sum_{i=1}^n \abs{\X}^i
    = \frac{\abs{\X}\cdot\left(\abs{\X}^n -1 \right)}{\abs{\X} - 1}
    = \Omega\left(\abs{\X}^n \right),
\end{align}
which quickly becomes infeasible for real-world applications.

\subsection{Extended Codes}
Optimal codes are the solution to \eqref{opt:optimal-code} and provide the shortest description, on average, for any discrete random variable, but with complexity that scales exponentially in the sequence length $n$.
We therefore search for a solution with lower complexity by restricting the family of possible codes.
This leads to an upper bound in the objective of $\eqref{opt:optimal-code}$.
Under certain conditions, the following family can define codes over sequences $\X^n$ through a base code over $\X$, referred to as a \emph{symbol} code, which can be implemented with dictionaries of size $\Omega(\abs{\X})$.
\begin{definition}[Extended Codes]\label{def:extended-codes}
    Given a symbol code $C\colon \X \mapsto \binstrs$, its \emph{extension}, $C \colon \X^n \mapsto \binstrs$, is defined as the concatenation of codewords,
    \begin{align}
        C(x^n) \defeq C(x_1)C(x_2) \dots C(x_n).
    \end{align}
    The extension is called an \emph{extended code} if it is a valid code (\Cref{def:code}) for $\X^n$.
    The symbol code is said to be \emph{uniquely decodable} if its extension is a code.
    Extended codes are sequential codes.
\end{definition}
\begin{example}[Fixed-length Codes]
    The following code over $\X = [8]$ is uniquely decodable.
    \begin{align}
        C(0) & = 000 & C(4) & = 100 \\
        C(1) & = 001 & C(5) & = 101 \\
        C(2) & = 010 & C(6) & = 110 \\
        C(3) & = 011 & C(7) & = 111
    \end{align}
    The codewords can be recovered by partitioning the extended codeword into $n$ contiguous substrings of length $3$. For example,
    \begin{align}
        \text{Encoding: } & 263 \rightarrow C(263) = C(2)C(6)C(3) = 010110011,                          \\
        \text{Decoding: } & 010110011 \rightarrow 010,110,011 \rightarrow C(2)C(6)C(3) \rightarrow 263.
    \end{align}
    Equality between codeword lengths of the symbol code is sufficient for unique decodability.
    These are known as \emph{fixed-length} codes.
\end{example}
\begin{example}[Non-code]\label{example:extension-not-a-code}
    For $C_2$ of \Cref{example:optimal-code-seq}, and $n=3$, both $acd$ and $cad$ map to $11110$ implying the extension is not a bijection and therefore not a code.
\end{example}

\subsection{Prefix-free Codes}
We can transform any extension to a valid code by augmenting $\X^n$ to introduce a separation symbol.
Interleaving the separation symbol with the concatenated codewords from the symbol code would indicate where one codeword ends and another begins.
The same can be achieved by requiring that no codeword be a prefix of another in the symbol code.
\begin{definition}[Prefix-free Codes]\label{def:prefix-free-codes}
    A symbol code over $\X$ is \emph{prefix-free} if no codeword is a prefix of another.
    The extension of a prefix-free code can be decoded by traversing the code-word $C(x^n)$, from left-to-right, until a codeword is matched.
    The extension of a prefix-free code is \emph{always} a valid code over sequences.
\end{definition}
\changed{
    \begin{example}[Prefix-free Codes]\label{example:prefix-free-codes}
        For $\X = \{a, b, c, d\}$, $C_1$, $C_2$, defined below, are prefix-free codes, but $C^\star$ is not due to $C^\star(b)$ being a prefix of $C^\star(d)$.
        However, $C^\star$ is the optimal code for $n=1$ if $P_X(a) \geq P_X(b) \geq P_X(c) \geq P_X(d)$.
        \begin{align}
            C_1(a) & = 0   & C_2(a) & = 1   & C^\star(a) & = \varepsilon \\
            C_1(b) & = 10  & C_2(b) & = 01  & C^\star(b) & = 0           \\
            C_1(c) & = 110 & C_2(c) & = 001 & C^\star(c) & = 1           \\
            C_1(d) & = 111 & C_2(d) & = 000 & C^\star(d) & = 00
        \end{align}
    \end{example}
}
The optimal code always assigns the smallest codewords in $\{0, 1\}^\star$ to the most likely symbols.
This precludes it from being prefix-free for any alphabet of size larger than $2$.

Assigning $1$ as a codeword in a prefix-free code eliminates the possibility of adding any other codeword starting with $1$ to the codebook.
This observation leads to the following condition on prefix-free codes.
\changed{
    \begin{theorem}[Kraft's Inequality \cite{cover1999elements}]\label{theorem:kraft}
        The set of code-lengths $\{\ell_x \in \Naturals \colon x \in \X\}$ of a prefix-free symbol code obey the following inequality, known as Kraft's Inequality,
        \begin{align}
            \sum_{x \in \X} 2^{-\ell_x} \leq 1.
        \end{align}
        Conversely, for any set of code-lengths satisfying Kraft's Inequality, there exists a prefix-free symbol code with this set of code-lengths.
    \end{theorem}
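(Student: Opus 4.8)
The plan is to identify each binary string $s$ with the node of the infinite complete rooted binary tree reached by reading $s$ bit by bit from the root, so that ``$s$ is a prefix of $s'$'' becomes ``$s$ is an ancestor of $s'$''. Under this dictionary a prefix-free symbol code is an antichain in the tree, and Kraft's Inequality is the statement that such antichains cannot overflow the tree; the converse is a greedy packing of the tree at the prescribed depths.

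For the forward direction I would first assume $\X$ is finite (for infinite $\X$ the bound follows by applying the finite case to every finite subset and taking a supremum) and set $\ell_{\max} \defeq \max_{x \in \X}\ell_x$. Working inside the complete binary tree $T$ of depth $\ell_{\max}$, which has exactly $2^{\ell_{\max}}$ nodes at depth $\ell_{\max}$, each codeword $C(x)$ sits at depth $\ell_x$ and is the root of a subtree containing exactly $2^{\ell_{\max}-\ell_x}$ of those deepest nodes; call this set $L_x$. Prefix-freeness says no $C(x)$ is an ancestor of a distinct $C(x')$, which is precisely disjointness of the sets $L_x$. Hence $\sum_{x\in\X} 2^{\ell_{\max}-\ell_x} = \bigl|\bigcup_{x\in\X} L_x\bigr| \le 2^{\ell_{\max}}$, and dividing by $2^{\ell_{\max}}$ gives $\sum_{x\in\X} 2^{-\ell_x} \le 1$.

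For the converse, given lengths obeying the inequality I would relabel $\X = \{1,\dots,k\}$ so that $\ell_1 \le \ell_2 \le \dots \le \ell_k$, define the running sums $w_j \defeq \sum_{i=1}^{j-1} 2^{-\ell_i}$ (so $w_1 = 0$ and $w_j < 1$ for all $j$ by hypothesis), and take $C(j)$ to be the length-$\ell_j$ truncation of the binary expansion of $w_j$, which is well defined because $w_j \in [0,1)$. Writing $0.C(j)$ for the real number with that expansion padded by zeros, truncation gives $0.C(j) \le w_j < 0.C(j) + 2^{-\ell_j}$, while for $i < j$ the running sums satisfy $w_j \ge w_i + 2^{-\ell_i} \ge 0.C(i) + 2^{-\ell_i}$. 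If $C(i)$ were a prefix of $C(j)$ then the dyadic interval $[\,0.C(j),\,0.C(j)+2^{-\ell_j})$ would be contained in $[\,0.C(i),\,0.C(i)+2^{-\ell_i})$, forcing $w_j < 0.C(j)+2^{-\ell_j} \le 0.C(i)+2^{-\ell_i}$ and contradicting the previous bound; the case $i > j$ collapses via the sorting to $\ell_i = \ell_j$, hence to $C(i)=C(j)$, which the same interval argument excludes. Thus the constructed code is prefix-free with exactly the given code-lengths.

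I expect the converse to be the main obstacle: the forward direction is a clean counting argument, whereas the converse needs care in tracking the relationship between $w_j$, its truncation $0.C(j)$, and the nested dyadic intervals, and in using the nondecreasing ordering of the lengths exactly where it is needed (to guarantee that codewords indexed $\le j$ have length at most $\ell_j$, and to reduce the $i>j$ case). An equivalent and perhaps more transparent framing of the converse is purely combinatorial: process the symbols in nondecreasing order of length and, at each step, delete from $T$ the subtree rooted at any still-available node of the required depth, taking that node as the codeword; Kraft's Inequality is precisely the invariant ensuring such a node always exists. I would carry out the dyadic-expansion version in detail and remark on this tree-based alternative.
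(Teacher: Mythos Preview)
Your proof is correct, but both halves take a different route from the paper's. For the forward direction, the paper works inside the same depth-$\ell_{\max}$ tree but argues inductively: it starts from the full leaf set (where the sum is exactly $1$) and observes that every prefix-free code can be reached by repeatedly removing a codeword or replacing two siblings by their parent, operations that never increase the Kraft sum. Your disjoint-leaf-sets counting argument is the standard one and is both shorter and more transparent; the paper's version trades directness for a constructive picture of how all prefix-free codebooks arise. For the converse, the paper gives exactly the greedy tree-packing you mention only as an aside: pick any free node at the required depth, delete its subtree, repeat. Your primary construction via truncated binary expansions of the running sums $w_j$ is the Shannon-code construction; it buys you an explicit closed-form codeword $C(j)$ rather than an existence argument, at the cost of the interval bookkeeping you flagged. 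Since the paper's own converse is your ``equivalent and perhaps more transparent'' alternative, you might consider promoting that to the main argument and relegating the dyadic version to a remark, which would also make the two halves of your proof share the same tree picture.
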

}
\begin{proof}
    This proof is adapted from \cite[Theorem 5.2.1]{cover1999elements}.

    Consider the complete binary tree shown on the left in \Cref{fig:binary-tree-kraft}.
    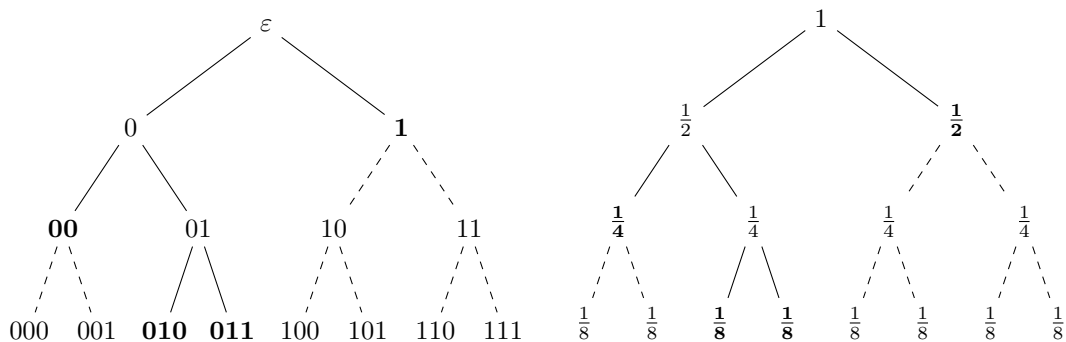
\begin{figure}[h!]
        \centering
        \begin{tikzpicture}[scale=0.9,level distance=1.5cm,
                level 1/.style={sibling distance=4cm},
                level 2/.style={sibling distance=2cm},
                level 3/.style={sibling distance=1cm}]
            \node {$\emptystr$}
            child {node {$0$}
                    child {node {$\mathbf{00}$}
                            child[dashed] {node {$000$}}
                            child[dashed] {node {$001$}}
                        }
                    child {node {$01$}
                            child {node {$\mathbf{010}$}}
                            child {node {$\mathbf{011}$}}
                        }
                }
            child {node {$\mathbf{1}$}
                    child[dashed] {node {$10$}
                            child {node {$100$}}
                            child {node {$101$}}
                        }
                    child[dashed] {node {$11$}
                            child {node {$110$}}
                            child {node {$111$}}
                        }
                };
        \end{tikzpicture}
        \quad
        \begin{tikzpicture}[scale=0.9, level distance=1.5cm,
                level 1/.style={sibling distance=4cm},
                level 2/.style={sibling distance=2cm},
                level 3/.style={sibling distance=1cm}]
            \node {$1$}
            child {node {$\frac{1}{2}$}
                    child {node {$\mathbf{\frac{1}{4}}$}
                            child[dashed] {node {$\frac{1}{8}$}}
                            child[dashed] {node {$\frac{1}{8}$}}
                        }
                    child {node {$\frac{1}{4}$}
                            child {node {$\mathbf{\frac{1}{8}}$}}
                            child {node {$\mathbf{\frac{1}{8}}$}}
                        }
                }
            child {node {$\mathbf{\frac{1}{2}}$}
                    child[dashed] {node {$\frac{1}{4}$}
                            child {node {$\frac{1}{8}$}}
                            child {node {$\frac{1}{8}$}}
                        }
                    child[dashed] {node {$\frac{1}{4}$}
                            child {node {$\frac{1}{8}$}}
                            child {node {$\frac{1}{8}$}}
                        }
                };
        \end{tikzpicture}
        \caption{Binary trees with binary strings (left) and $2^{-\ell}$ (right) as nodes. Codewords in bold form a prefix-free code for an alphabet of size $4$.}
        \label{fig:binary-tree-kraft}
    \end{figure}
    The vertex set is equal to all possible codewords in $\binstrs$ up to length $3$.
    The left and right child nodes are constructed by appending $0$ and $1$, respectively, to the parent node.
    The right tree in \Cref{fig:binary-tree-kraft} was created by replacing each potential codeword of length $\ell$ for $2^{-\ell}$.
    The sum of right and left children always equal the parent node.
    Any node is a prefix of its descendants in the left tree.
    Adding any node to the codebook precludes the entire subtree of its descendants from being valid codewords.
    For example, adding $1$ and $00$ to the codebook eliminates the subtrees with dashed edges.
    Consider the uniform codebook composed of all leaves in the left tree of \Cref{fig:binary-tree-kraft},
    \begin{align}
        \{000, 001, 010, 011, 100, 101, 110, 111\},
    \end{align}
    where it is clear that $\sum_{x \in \X} 2^{-\ell_x} = 1$.
    This is a prefix-free code for an alphabet of size $k = 8$.
    A prefix-free code for $k-1$ can be constructed from either removing one codeword from the codebook, $(\sum_{x \in \X} 2^{-\ell_x} < 1)$, or replacing two codewords, that are siblings in the tree, for their parent $(\sum_{x \in \X} 2^{-\ell} = 1)$.
    Similarly, replacing any codeword with its children creates a code for $k+1$.
    Repeating this procedure allows us to construct any possible prefix-free code, which will always satisfy Kraft's Inequality.

    \changed{
        Conversely, given a set of code-lengths, we can construct a codebook using the same procedure outlined before.
        First, select a length from the set and associate it to one of the nodes in the left tree with the same codeword length.
        Eliminate the entire subtree of descendants of the selected node as possible candidates.
        Repeat this procedure until every code-length has an associated code-word.
        By construction, the code composed of the resulting codebook will be prefix-free.
    }
\end{proof}

\changed{
    Kraft's Inequality is a necessary condition on the set of code-word lengths for the existence of a prefix-free code.}
From an optimization perspective, this inequality can used as a constraint to parameterize sets of codeword lengths within the prefix-free family.
The best solutions, i.e., codes with shortest average length, are those that achieve equality in Kraft's Inequality.
Kraft's Inequality directly limits the rate achievable by prefix-free codes.
The codebook must increase in size to accommodate new symbols as the alphabet increases; if we wish to guarantee unique decodability.
This is characterized by the following theorem.

\begin{theorem}[Source Coding Theorem: Prefix-Free Codes \cite{cover1999elements}]\label{theorem:source-coding-theorem-prefix-free-codes}
    The rate of any prefix-free code is lower bounded by the entropy of the mixture of marginals,
    \begin{align}
        \bar{P}_X(x) = \frac{1}{n}\cdot\sum_{i=1}^n P_{X_i}(x).
    \end{align}
\end{theorem}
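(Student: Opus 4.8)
The plan is to reduce the rate of a prefix-free code to a single-letter expected length under the mixture-of-marginals distribution $\bar{P}_X$, and then apply Kraft's Inequality (\Cref{theorem:kraft}) together with the non-negativity of the KL divergence (Gibbs' inequality). So the proof splits cleanly into a ``decomposition'' step and a ``single-letter bound'' step.

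First I would use that a prefix-free code over $\X^n$ is, per \Cref{def:extended-codes}, the extension of a prefix-free symbol code over $\X$, so code lengths decompose additively: $\ell_{x^n} = \sum_{i=1}^n \ell_{x_i}$. Substituting into the rate \eqref{eq:bitrate}, exchanging the two sums, and marginalizing $P_{X^n}$ onto its $i$-th coordinate,
\begin{align*}
    R(C, P_{X^n}) = \frac{1}{n}\sum_{i=1}^n \sum_{x^n \in \X^n} P_{X^n}(x^n)\,\ell_{x_i} = \frac{1}{n}\sum_{i=1}^n \sum_{x \in \X} P_{X_i}(x)\,\ell_x = \sum_{x \in \X}\bar{P}_X(x)\,\ell_x,
\end{align*}
i.e.\ the rate equals the expected symbol-code length under $\bar{P}_X$. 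This step uses only linearity of expectation, so no assumption on the joint $P_{X^n}$ (such as independence) is required.

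Then I would lower bound this expectation by $H(\bar{P}_X)$. Let $c \defeq \sum_{x \in \X} 2^{-\ell_x}$, which satisfies $c \leq 1$ by Kraft's Inequality, and set $Q_X(x) \defeq 2^{-\ell_x}/c$, a genuine probability distribution on $\X$. A short rearrangement gives
\begin{align*}
    \sum_{x \in \X}\bar{P}_X(x)\,\ell_x - H(\bar{P}_X) = \sum_{x \in \X}\bar{P}_X(x)\log\frac{\bar{P}_X(x)}{2^{-\ell_x}} = \KL{\bar{P}_X}{Q_X} - \log c \geq 0,
\end{align*}
since $\KL{\bar{P}_X}{Q_X} \geq 0$ and $-\log c \geq 0$. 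Chaining the two displays yields $R(C, P_{X^n}) \geq H(\bar{P}_X)$, as claimed.

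There is no serious obstacle here; the two points needing care are (i) reading ``prefix-free code'' as the \emph{extension} of a prefix-free symbol code, which is what licenses the additive length decomposition and hence the reduction to a single-letter problem, and (ii) absorbing the possible slack in Kraft's Inequality by renormalizing $2^{-\ell_x}$ into the honest distribution $Q_X$, which is exactly what produces the harmless $-\log c \geq 0$ term. I would also note that equality holds precisely when $c = 1$ and $\bar{P}_X(x) = 2^{-\ell_x}$ for every $x$, i.e.\ when each mixed-marginal probability is a negative integer power of two — the condition that will motivate the subsequent discussion of Huffman codes and the gap to the optimal code.
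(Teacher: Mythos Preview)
Your proof is correct. The decomposition step, rewriting the rate as $\sum_{x}\bar{P}_X(x)\,\ell_x$, is identical to what the paper does. The difference is in the single-letter bound: the paper relaxes the integer constraint on $\ell_x$, forms the Lagrangian with Kraft's inequality as the constraint, and solves $\partial\mathcal{L}/\partial\ell_x = 0$ to obtain $\ell_x^\star \geq -\log\bar{P}_X(x)$, then substitutes back. You instead normalize $2^{-\ell_x}$ into a bona fide distribution $Q_X$ and invoke $\KL{\bar{P}_X}{Q_X}\geq 0$ plus $-\log c \geq 0$. Your route is more elementary (no calculus or optimization machinery) and makes the equality condition $c=1,\ \bar{P}_X(x)=2^{-\ell_x}$ drop out immediately; the Lagrangian route has the advantage of exhibiting the minimizing lengths explicitly, which motivates the subsequent discussion of dyadic sources and Huffman codes. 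Both are standard and equally valid.
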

\begin{proof}
    The family of prefix-free codes is composed of all codes that satisfy Kraft's Inequality (\Cref{theorem:kraft}).
    The best performing code in this family is the solution to the following optimization problem,
    \begin{align}\label{opt:prefix-free-code}
        \begin{split}
            \min_C    & \quad R(C, P_{X^n}) = \frac{1}{n} \cdot \E{\ell_{X^n}}         \\
            \suchthat & \quad \ell_x \in \Naturals, \sum_{x \in \X} 2^{-\ell_x} \leq 1
        \end{split}
    \end{align}
    The code-word length of the extended code equals the sum of symbol code-word lengths due to concatenation,
    \begin{align}
        \ell_{x^n} = \sum_{i=1}^n \ell_{x_i}.
    \end{align}
    The objective function can be rewritten to expose the mixture of marginals,
    \begin{align}
        \frac{1}{n} \cdot \E{\ell_{X^n}} & = \frac{1}{n} \cdot \sum_{i=1}^n \E{\ell_{X_i}}                                    \\
                                         & = \frac{1}{n} \cdot \sum_{i=1}^n \sum_{x \in \X}P_{X_i}(x)\cdot\ell_x              \\
                                         & =\sum_{x \in \X}\ell_x \cdot \left(\frac{1}{n}\cdot\sum_{i=1}^n P_{X_i}(x)\right).
    \end{align}
    The rest of the proof mirrors that of \cite[5.3 Optimal Codes]{cover1999elements}.
    Dropping the integer constraint in \eqref{opt:prefix-free-code} gives a lower bound on the solution.
    If $\tilde{x}_i \in \X$ are the symbols in the alphabet, then applying the Lagrange-Multiplier method \cite{boyd2004convex} yields,
    \begin{align}
        \mathcal{L}(\ell_{\tilde{x}_1}, \dots, \ell_{\tilde{x}_{\abs{\X}}}) & = \sum_{x \in \X}\ell_x \cdot \bar{P}_X(x) + \lambda \cdot \left(\sum_{x \in \X} 2^{-\ell_x} - 1 \right), \\
        \frac{\partial\mathcal{L}}{\partial\ell_x}(\ell_x^\star)            & = \bar{P}_X(x) - \lambda \cdot 2^{-\ell_x^\star} \cdot \log_e(2) = 0.
    \end{align}
    Using the constraint given by Kraft's Inequality we can bound the value of $\lambda$,
    \begin{align}
        \sum_{x \in \X} 2^{-\ell_x^\star} = \frac{1}{\lambda\log_e(2)} \leq 1.
    \end{align}
    Substituting $\lambda$ back into $\mathcal{L}^\prime(\ell_x^\star)$ gives the final solution,
    \begin{align}
        \ell_x^\star & \geq -\log\bar{P}_X(x),
    \end{align}
    with rate,
    \begin{align}
        R(C^\star, P_{X^n}) & =\sum_{x \in \X}\ell_x^\star \cdot \bar{P}_X(x) \geq H(\bar{P}_X).
    \end{align}
\end{proof}
\Cref{theorem:source-coding-theorem-prefix-free-codes} shows the rate of a prefix-free code is lower bounded by the entropy of the mixture of marginals, but does not address its achievability.
The lower bound appears due to the integer constraint on code-lengths.
The integer constraint is automatically satisfied if the probability values of the data distribution are powers of $2$,
\begin{align}
    P_{X}(x) = 2^{-k_x},
\end{align}
where $k_x \in \Naturals$ for all $x \in \X$.
A distribution satisfying this constraint is said to be \emph{dyadic}.
A prefix code with code-word lengths $\ell_x = -\log P_X(x)$ will always exist for any dyadic source, with rate equal to the entropy $H(P_X)$.
\begin{example}[Prefix-Free Code for a Dyadic Source]
    The following distribution is dyadic and has an optimal prefix-free code $C$.
    \begin{align}
        P_X(a) & = 2^{-1} & C(a) & = 0   \\
        P_X(b) & = 2^{-2} & C(b) & = 10  \\
        P_X(c) & = 2^{-3} & C(c) & = 110 \\
        P_X(d) & = 2^{-3} & C(d) & = 111
    \end{align}
    The code-lengths are guaranteed to be integer if $P_X$ is dyadic, implying
    \begin{align}
        \ell_x = -\log P_X(x) &  & R(C, P_{X^n}) = H(P_X).
    \end{align}
\end{example}
A general algorithm was given in \cite{huffman1952method} that constructs the optimal prefix-free code, i.e., a prefix-free code that has rate less than or equal to any other prefix-free code, for any source distribution.
These codes are famously referred to as Huffman Codes and are heavily used in practice as sub-components of larger compression systems (see \cite{steinruecken2014b} for a review).
The gap between the entropy and the rate of a Huffman Code is at most $1$ bit per symbol for any i.i.d.\ data source \cite{huffman1952method,cover1999elements}.
An extra bit per symbols can be significant for practical applications where the sequence length is large.
Huffman Codes are beyond the scope of this manuscript.
A full description is given in \cite{cover1999elements}.

\section{Lossless Source Coding Theorems}
Extended and prefix-free codes provide upper bounds to the objective of \eqref{opt:optimal-code} in exchange for a lower implementation complexity.
In this section we discuss the gap in bits between the solution given by optimal, extended, and prefix-free codes as a function of $n$.
For ease of exposition, we limit the discussion to identically distributed data sources, $P_{X_i} = P_X$ for all $i \in [n]$.

\subsection{Rates Below Entropy are Achievable for Finite-Length Sequences}\label{subsec:rates-below-entropy-are-achievable}
The rate of any prefix-free code is lower bounded by the entropy of the mixture of marginals, which equals
the entropy of the data source, $H(P_X)$, when symbols are identically distributed.
Optimal codes assign codewords sequentially based on length and have the lowest rates amongst all codes, including prefix-free codes.
For dyadic sources, optimal codes achieve rate below entropy for any finite sequence length.

\begin{example}[The Optimal Rate of a Dyadic Source is Below Entropy]\label{example:optimal-rate-dyadic-below-entropy}
    For any dyadic probability distribution over an alphabet $\X$, the rate achieved by the optimal code (\Cref{def:optimal-code}) is less than or equal to the entropy of the source, with equality when $\abs{\X} = 2$.

    Codewords higher up in the binary tree of \Cref{fig:binary-tree-dyadic-vs-optimal} have smaller code lengths.
    The optimal code assigns codewords in bread-first fashion, filling shallower depths first.
    Given a prefix-free code, we can construct the optimal code by swapping out existing codewords in the codebook for codewords higher up in the binary tree, as shown in \Cref{fig:binary-tree-dyadic-vs-optimal}.
    For $\abs{\X} = 2$ the only optimal codebook, which is also prefix-free, is $\{0, 1\}$
    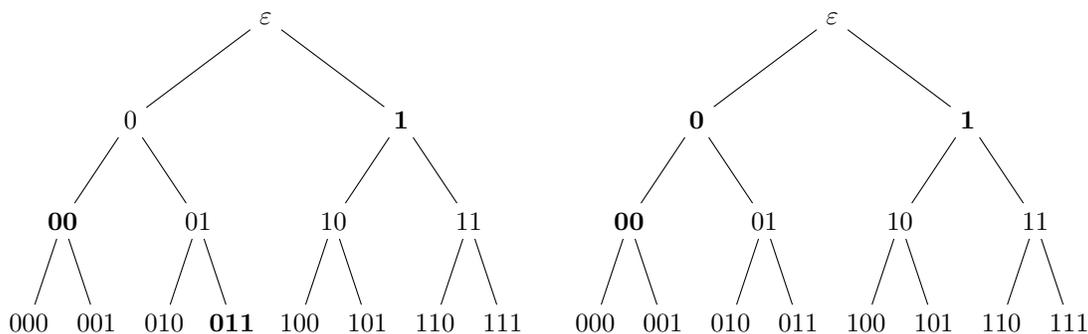
\begin{figure}[h!]
        \centering
        \begin{tikzpicture}[scale=0.9,level distance=1.5cm,
                level 1/.style={sibling distance=4cm},
                level 2/.style={sibling distance=2cm},
                level 3/.style={sibling distance=1cm}]
            \node {$\emptystr$}
            child {node {$0$}
                    child {node {$\mathbf{00}$}
                            child {node {$000$}}
                            child {node {$001$}}
                        }
                    child {node {$01$}
                            child {node {$010$}}
                            child {node {$\mathbf{011}$}}
                        }
                }
            child {node {$\mathbf{1}$}
                    child {node {$10$}
                            child {node {$100$}}
                            child {node {$101$}}
                        }
                    child {node {$11$}
                            child {node {$110$}}
                            child {node {$111$}}
                        }
                };
        \end{tikzpicture}
        \quad
        \begin{tikzpicture}[scale=0.9,level distance=1.5cm,
                level 1/.style={sibling distance=4cm},
                level 2/.style={sibling distance=2cm},
                level 3/.style={sibling distance=1cm}]
            \node {$\emptystr$}
            child {node {$\mathbf{0}$}
                    child {node {$\mathbf{00}$}
                            child {node {$000$}}
                            child {node {$001$}}
                        }
                    child {node {$01$}
                            child {node {$010$}}
                            child {node {$011$}}
                        }
                }
            child {node {$\mathbf{1}$}
                    child {node {$10$}
                            child {node {$100$}}
                            child {node {$101$}}
                        }
                    child {node {$11$}
                            child {node {$110$}}
                            child {node {$111$}}
                        }
                };
        \end{tikzpicture}
        \caption{Binary tree for a prefix-free codebook $\{1, 00, 011\}$ (left). Optimal codebook for $\abs{\X}=3$ (right).}
        \label{fig:binary-tree-dyadic-vs-optimal}
    \end{figure}
\end{example}

\begin{figure}[t]
    \centering
    \includegraphics[width=\textwidth]{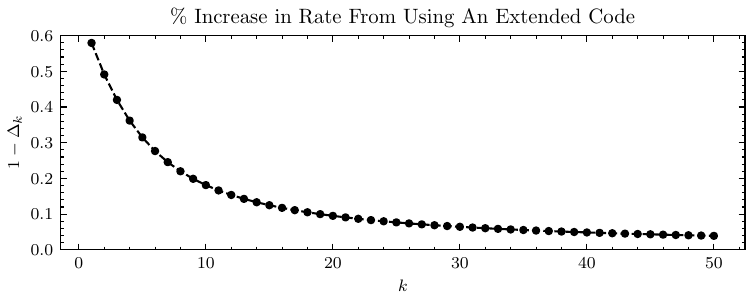}
    \caption{Percentage increase, with respect to the optimal, from using an extended uniform code. See \Cref{example:optimal-rate-for-a-uniform-source}.}
    \label{fig:fig-optimal-code-uniform-source}
\end{figure}

\changed{
    \begin{example}[Optimal Rate for a Uniform Source]\label{example:optimal-rate-for-a-uniform-source}
        The optimal rate for a data source with uniform distribution over $\X$ can be characterized exactly.
        The entropy of a uniform distribution is equal to
        \begin{align}
            H(P_X) = \E{-\log P_X(X)} = \log\abs{\X}.
        \end{align}
        The optimal code assigns codewords sequentially, in increasing order of length, starting from the most likely symbol in $\X$.
        There are $2^\ell$ codewords of length $\ell$.
        Without loss of generality, we can reparameterize the alphabet size $\abs{\X}^n$ to be a sum of polynomial of powers of $2$,
        \begin{align}\label{eq:alphabet-as-sum-of-k-powers}
            \abs{\X}^n & \defeq \sum_{\ell=0}^k 2^\ell \\
                       & = 2^{k+1} - 1
        \end{align}
        The sequence length can be written as a function of $k$,
        \begin{align}\label{eq:n-as-function-of-k}
            n = \frac{\log \left( 2^{k+1} - 1 \right)}{\log\abs{\X}}.
        \end{align}
        We can now show the rate is strictly less than the entropy of the uniform source for all $ k \geq 2$,
        \begin{align}
            R(C^\star, P_{X^n})
             & = \frac{1}{n} \cdot \sum_{x^n \in \X^n} P_{X^n}(x^n) \cdot \ell_{x^n}                                    \\
             & = \frac{1}{n\abs{\X}^n} \cdot \sum_{x^n \in \X^n} \ell_{x^n}                                             \\
             & = \frac{1}{n\abs{\X}^n} \cdot \sum_{\ell=0}^k \abs{\{x^n \in \X^n \colon \ell_{x^n} = \ell\}} \cdot \ell \\
             & = \frac{1}{n\abs{\X}^n} \cdot \sum_{\ell=0}^k 2^\ell \cdot \ell                                          \\
             & = \frac{1}{n\abs{\X}^n} \cdot \left( 2^{k+1} \cdot k - 2^{k+1} + 2 \right).
        \end{align}
        Together with \Cref{eq:n-as-function-of-k}, the rate can be shown to equal,
        \begin{align}
            R(C^\star, P_{X^n}) = \Delta_k \cdot \log\abs{\X},
        \end{align}
        where,
        \begin{align}
            \Delta_k \defeq \frac{2^{k+1} \cdot k - 2^{k+1} + 2}{(2^{k+1} - 1) \cdot \log\left(2^{k+1} - 1\right)} \leq 1,
        \end{align}
        approaching equality as $k \rightarrow \infty$, and with equality when $k=1$.
        The rate of the best performing extended code is worse (larger) than the optimal by a multiple of $1 - \Delta_k$.
        Extended codes approach optimality for the uniform source as the sequence length increases.

        \Cref{fig:fig-optimal-code-uniform-source} shows $1 - \Delta_k$ as a function of $k$.
        The loss in performance drops off quickly and is irrelevant for the applications considered in this thesis.
        For example, when the symbols are pixels of an image, or bytes of a file, where the alphabet size is $\abs{\X} = 256$,
        a sequence length of $n > 25$ is sufficient for $1 - \Delta_k < 1\%$.
    \end{example}
}

\subsection{Prefix-Free Codes are Optimal Asymptotically}
The following theorem guarantees that the set of rates achievable by extended and prefix-free codes are equal when $n \rightarrow \infty$.
The rate of extended codes are therefore also lower bounded by the mixture of marginals presented in \Cref{theorem:source-coding-theorem-prefix-free-codes}.
\begin{theorem}[Sufficiency of Prefix-free Codes \cite{cover1999elements}]\label{theorem:mcmillan}
    For any extended code there exists a prefix-free code with the same set of codeword lengths and rate when $n \rightarrow \infty$.
\end{theorem}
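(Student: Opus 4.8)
This is the classical Kraft–McMillan theorem, and the plan is to show that any uniquely decodable symbol code (equivalently, any code whose extension is a valid code — see \Cref{def:extended-codes}) has code-word lengths satisfying Kraft's Inequality. Once that is established, \Cref{theorem:kraft} immediately furnishes a prefix-free symbol code with exactly the same multiset of code-word lengths; since the rate of an extension depends only on the symbol code-word lengths (via $\ell_{x^n} = \sum_{i=1}^n \ell_{x_i}$, and hence $R = \sum_{x\in\X}\ell_x\cdot\bar P_X(x)$), the two codes have identical rate for every $n$, not just in the limit. So the real content is the ``McMillan'' direction: unique decodability $\Rightarrow$ Kraft.

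The core step is the standard counting argument. Let $C\colon\X\mapsto\binstrs$ be uniquely decodable with lengths $\{\ell_x\}$, and set $S \defeq \sum_{x\in\X} 2^{-\ell_x}$. I would raise $S$ to the $k$-th power and expand,
\begin{align}
    S^k = \left(\sum_{x\in\X} 2^{-\ell_x}\right)^k = \sum_{x_1,\dots,x_k \in \X} 2^{-(\ell_{x_1} + \dots + \ell_{x_k})} = \sum_{m=1}^{k\ell_{\max}} a(m)\cdot 2^{-m},
\end{align}
where $\ell_{\max} = \max_x \ell_x$ and $a(m)$ counts the length-$k$ sequences $x^k$ whose concatenated codeword $C(x_1)\cdots C(x_k)$ has total length exactly $m$. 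Unique decodability of the extension says that the map $x^k \mapsto C(x_1)\cdots C(x_k)$ is injective into $\binstrs$, so for each $m$ the number of source sequences producing a length-$m$ binary string is at most the number of length-$m$ binary strings, i.e. $a(m) \le 2^m$. Hence $S^k \le \sum_{m=1}^{k\ell_{\max}} 1 = k\ell_{\max}$, so $S \le (k\ell_{\max})^{1/k}$ for every $k \ge 1$. Letting $k\to\infty$ gives $S\le 1$, which is Kraft's Inequality.

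To finish, I would apply the converse half of \Cref{theorem:kraft}: from the lengths $\{\ell_x\}$, which now satisfy $\sum_x 2^{-\ell_x}\le 1$, build a prefix-free symbol code $C'$ with those same lengths. Its extension is automatically a valid code (\Cref{def:prefix-free-codes}), and because $\ell'_x = \ell_x$ for all $x$ we get $\ell'_{x^n} = \ell_{x^n}$ pointwise, hence $R(C', P_{X^n}) = R(C, P_{X^n})$ for all finite $n$ — in particular as $n\to\infty$.

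The only slightly delicate point is the injectivity bound $a(m)\le 2^m$: it uses that an \emph{extended} code is by definition a bijection on $\X^n$ for \emph{each fixed} $n$, and here I am applying it with $n = k$ and restricting attention to the sub-collection of sequences whose image has a prescribed length $m$; distinct such sequences must have distinct images, and all images lie in the $2^m$ strings of length $m$. I expect this step — making precise that unique decodability at block-length $k$ is exactly what licenses the counting bound — to be the main thing to state carefully; the rest is the routine $k$-th root limit.
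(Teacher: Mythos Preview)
Your proposal is correct and follows essentially the same approach as the paper: both expand $\left(\sum_{x\in\X}2^{-\ell_x}\right)^n$, group terms by total codeword length, use unique decodability to bound the count $a(\ell)\le 2^\ell$, and obtain $\sum_x 2^{-\ell_x}\le (n\ell_{\max})^{1/n}\to 1$. You additionally spell out the final step of invoking the converse of \Cref{theorem:kraft} to build the prefix-free code and observe that rates then agree for all finite $n$, which the paper leaves implicit.
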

\begin{proof}
    This proof was adapted from \cite[Theorem 5.5.1]{cover1999elements}.

    We will prove that the codeword lengths of any uniquely decodable symbol code satisfy
    \begin{align}\label{eq:mcmillan}
        \sum_{x \in \X} 2^{-\ell_x} \leq (n \ell_{\max})^{1/n},
    \end{align}
    where $\ell_{\max} = \max_{x \in \X} \ell_x$ is the largest codeword in the symbol code.
    \Cref{eq:mcmillan} converges to Kraft's Inequality (\Cref{theorem:kraft}) as $n \rightarrow \infty$, which concludes the proof.

    Let $a(\ell)$ be the number of codewords of length $\ell$ in the extension.
    There are only $2^\ell$ binary strings of length $\ell$ implying,
    \begin{align}
        a(\ell) \defeq \abs{\{x^n \in \X^n \colon \ell_{x^n} = \ell \}} \leq 2^\ell
    \end{align}
    The largest codewords in the extension will have length $\argmax_{x^n \in \X^n} \ell_{x^n} = n\ell_{\max}$.
    Therefore,
    \begin{align}
        \left(\sum_{x \in \X} 2^{-\ell_x} \right)^n
         & = \sum_{(x_1, \dots, x_n) \in \X^n} 2^{-(\ell_{x_1} + \dots + \ell_{x_n})} \\
         & = \sum_{x^n \in \X^n} 2^{-\ell_{x^n}}                                      \\
         & = \sum_{\ell=1}^{n\ell_{\max}} a(\ell) \cdot 2^{-\ell}                     \\
         & \leq \sum_{\ell=1}^{n\ell_{\max}} 2^{\ell} \cdot 2^{-\ell}                 \\
         & = n \cdot \ell_{\max}.
    \end{align}
\end{proof}

\section{Conclusion and Discussion}
For any data source $(X^n, \X^n, P_{X^n})$, the optimal rate is achieved by sorting the instances of $\X^n$, according to $P_{X^n}$, and assigning codewords sequentially ordered by their length (see \Cref{def:optimal-code}).
This scheme requires large dictionaries for encoding and decoding, which scale exponentially with the sequence length $n$, and are therefore infeasible for the applications considered in this manuscript.
For finite $n$ the optimal code can achieve rates \emph{lower than the entropy} of the source.
This is shown in \Cref{example:optimal-rate-dyadic-below-entropy} where the sequence is composed of independent and identically distributed (i.i.d.) elements and the distribution is dyadic.

The family of Extended Codes (\Cref{def:extended-codes}), which include Prefix-Free Codes (\Cref{opt:prefix-free-code}), are codes defined over the symbol alphabet $\X$.
These are introduced as a solution with dictionary-based implementations independent of sequence length.
The rate of extended codes can be larger than the optimal by a significant amount (\Cref{fig:fig-optimal-code-uniform-source}) when the sequence length is small.
For large sequence lengths the achievable rates of optimal, extended, and prefix-free codes are equal, in the i.i.d.\ setting, and are lower bounded by the entropy of the source when the data distribution is uniform.

The smallest rate of any prefix-free code is achieved by Huffman Codes \cite{huffman1952method}, and is guaranteed to be within $1$ of the entropy of the source.
It is possible to achieve the entropy by constructing a Huffman Code over $\X^n$, by consider the sequence itself as the symbol.
However, the alphabet $\X^n$ grows exponentially with $n$, defeating the purpose of introducing an extended code to lower the complexity of implementation.
The next chapter discusses an alternative family that make use of the conditional distributions to design a code and do not suffer from this $1$ bit penalty.

\chapter{Entropy Coding with Asymmetric Numeral Systems}\label{chapter:entropy-coding-with-ans}
In \Cref{chapter:lossless-source-coding} we showed that the rates achievable by prefix-free and extended codes approach the entropy of the source for i.i.d.\ random variables.
Asymmetric Numeral Systems (ANS) \cite{duda2009asymmetric} is a family of codes capable of achieving rates very close to the entropy for large sequence lengths.
At a high level, ANS encoding works by mapping sequences of symbols to a single, large, natural number, called the \emph{state}, in a reversible way so that the data and state before encoding can be easily recovered.

ANS requires access to a quantized probability distribution (\Cref{def:quantized-pmf}) for both encoding and decoding, which we will often refer to a the \emph{model}.

Symbols are decoded in the opposite order in which they were encoded, making ANS act as a \emph{stack}, a last-in-first-out data structure, in contrast to the coding schemes discussed in \Cref{chapter:lossless-source-coding} which are \emph{queue}-like or first-in-first-out.

The construction of ANS guarantees that any \emph{non-negative integer} is a valid state.
This allows ANS to be used as a random variate generator.
The state is randomly initialized to a large non-negative integer and successive decode operations are applied, generating a sequence from the chosen probability distribution.
The state need not have been randomly initialized, but may instead have been constructed from a sequence of encode steps from a previous encoding task, as the origin of the state is irrelevant as long as it is integer.
This allows the sampling distribution to differ from the distribution used for the previous task.
In this use case, the ANS state is thus used as a \emph{random seed}, and because decoding/sampling removes approximately bits from the state, the random seed is slowly consumed as symbols are sampled.
The operation is invertible in the sense that the random seed can be recovered by encoding the generated symbols back into the ANS state.

In real-world applications the true data distribution is rarely available.
Instead, an estimate is computed from empirical observations.
We delay this discussion until later in the section and assume, initially, that $P_{X^n}$ is known.

\section{Asymmetric Numeral Systems (ANS)}
ANS forms a central component of the compression algorithms developed in \Cref{chapter:roc}, \Cref{chapter:rcc}, and \Cref{chapter:rec} of this thesis.
In what follows, we discuss an idealized, mathematical, version of ANS.
We frame the analysis of the rate in terms of the increase in the number of bits needed to represent the ANS state as a function of the encoded symbols.
For large state values, the change is equal to the negative log-probability of the encoded symbol.
Implementation details are delayed to the experimental sections of the chapter mentioned above.

\subsection{ANS Encoding}\label{sec:ans-encoding}
Given a data source $(X, \X, P_X)$, ANS requires the distribution be \emph{quantized} to some \emph{precision} parameter $N \in \Naturals$.
The probability of any symbol under a quantized probability distribution can be computed by a ratio between some integer and the precision parameter.
\begin{definition}[Quantized Probability Distribution]\label{def:quantized-pmf}
    A probability distribution $P_X$ is \emph{quantized} if it can be described by an element of the \emph{discrete simplex},
    \begin{equation}
        \left\{ (p_1, \dots, p_{\abs{\X}}) \in \Naturals^{\abs{\X}} \colon \sum_{i=1}^{\abs{\X}} p_i = N \text{ and } p_i \geq 1 \right\},
    \end{equation}
    where $\left(p_1, \dots, p_{\abs{\X}}\right)$ maps to
    \begin{equation}\label{eq:quantized-distribution}
        P_X(x) \defeq \frac{p_x}{N}.
    \end{equation}
\end{definition}
\changed{In \Cref{def:quantized-pmf}, we require that $p_i \geq 1$ to avoid zero-division errors in the encoding function, as defined later in \Cref{def:ans-encode}.}
In practice, a function specifying the quantized probability, $p_x \in \Naturals$, as well as the \emph{quantized} cumulative probability, $c_x = \sum_{y\colon y < x}p_y$, is required for ANS encoding; \changed{where it is assumed that the set of symbols is equipped with an ordering such that $y < x$ in the summation is well defined.}
This function is known as a \emph{forward lookup},
\begin{align}
     & \flookup\colon \X \mapsto [N)^2,          \\
     & \flookup(x) \defeq \left(p_x, c_x\right).
\end{align}
The interval $[c_x, c_x + p_x)$ is referred to as the \emph{range} of $x$ in the probability model.
\begin{example}[Quantized PMF]\label{example:initial-bits-symbol-order-matters}
    Let $\X = \{\square, \triangle, \lozenge\}$, with $p_\square = 1, p_\triangle=2, p_\lozenge = 4$, and $N = 7$.
    The values of $c_x$ are listed below, depending on the assumed ordering between symbols,
    \begin{align}
        \square \triangle \lozenge &  & \square \lozenge \triangle &  & \triangle \square \lozenge &  & \triangle \lozenge \square &  & \lozenge \square \triangle &  & \lozenge \triangle \square \\
        0, 1, 3                    &  & 0, 1, 5                    &  & 0, 2, 3,                   &  & 0, 2, 6                    &  & 0, 4, 5                    &  & 0, 4, 6
    \end{align}
    where the number underneath each symbol is the corresponding quantized cumulative probability.
\end{example}
Encoding and decoding can be specified by defining $\encode$ and $\decode$ as in \eqref{eq:codes-encode-decode}.
The codebook is a subset of the set of binary strings of length $\ceil{\log s}$ used to represent the integer state $s \in \Naturals$,
\begin{align}
    C(\X^n) \subset \left\{b \in \binstrs \colon b \text{ is the binary representation of } s \in \Naturals \text{ of size } \ceil{\log s} \right\}.
\end{align}
Representing the integer state as a binary string is deterministic and can be done efficiently on modern digital computers.
To facilitate the exposition we define $\encode$ and $\decode$ as manipulating the integer state, without loss of generality.
\begin{definition}[ANS Encode]\label{def:ans-encode}
    Given a symbol $x \in \X$ with range defined by $[c_x, c_x + p_x)$, ANS encodes $x$ into an existing state $s$ through the following function,
    \begin{align}
         & \encode\colon \Naturals \times [N)^2 \rightarrow \Naturals, \\
         & \encode(s, p, c) \defeq N\cdot(s \div p) + c + s \bmod p,
    \end{align}
    with $\div$ denoting \emph{integer} division, discarding any remainder.
    The new state $s^\prime = \encode(s, p_x, c_x)$ is said to ``contain'' $x$, which can be recovered via decoding (\Cref{sec:ans-decode}).
    When clear from context, ``$\encode(s, x) \text{ with $P_X$}$'' will be used as a shorthand for encoding $x$ into the ans state $s$ via $\flookup$ with the quantized probability distribution $P_X$.
\end{definition}
\changed{
    \begin{example}[ANS Encoding]\label{example:ans-encoding}
        Let $s=5$, $\X = \{a, b, c\}$, with ordering $a < b < c$, $P_X(a) = \frac{1}{2}, P_X(b)=P_X(c) = \frac{1}{4}$, and precision $N=4$.
        Let $\bin\colon \Naturals \mapsto \{0, 1\}^\star$ be a function that returns the binary representation of an integer.
        The encoding of each symbol results in a different ANS state $s^\prime = \encode(5, p_x, c_x) = 4 \cdot (5 \div p_x) + c_x + 5 \bmod p_x$,
        \begin{center}
            \begin{tabular}{c c c c}
                $x  $                & $a$    & $b$     & $c$     \\
                $p_x$                & $2$    & $1$     & $1$     \\
                $c_x$                & $0$    & $2$     & $3$     \\
                $s^\prime$           & $9$    & $22$    & $23$    \\
                $\bin(s^\prime)$     & $1001$ & $10110$ & $10111$ \\
                $\frac{s}{s^\prime}$ & $0.56$ & $ 0.23$ & $0.22$
            \end{tabular}
        \end{center}
        where the last row is rounded to the second decimal point.
        As the state $s$ increases, the ratio $\frac{s}{s^\prime}$ approaches $P_X(x)$, as can be seen in the encoding table below where $s=20$, $bin(s) = 10100$.
        \begin{center}
            \begin{tabular}{c c c c}
                $x  $                & $a$      & $b$       & $c$       \\
                $s^\prime$           & $40$     & $82$      & $83$      \\
                $\bin(s^\prime)$     & $101000$ & $1010010$ & $1010011$ \\
                $\frac{s}{s^\prime}$ & $0.5$    & $ 0.24$   & $0.24$
            \end{tabular}
        \end{center}
        Note the code-words in this example are in the family of Huffman Codes \cite{huffman1952method}.
        In fact, this will be true if the values of the probability distribution are inverse powers of two, i.e., the source is \emph{dyadic} \cite{duda2009asymmetric}.
    \end{example}

}

ANS defines a code over sequences of length $n$ by encoding symbols sequentially into a common state.
The codeword of the sequence is the binary representation of the final state after all symbols have been encoded.
\begin{definition}[ANS Code]\label{def:ans-code}
    Let $(X^n, \X^n, P_{X^n})$ be a data source with quantized conditional probability distributions (\Cref{def:quantized-pmf}),
    \begin{align}
        P_{X_i \g X^{i-1}}(x \g x^{i-1}) \defeq \frac{p_x^{(i)}}{N}.
    \end{align}
    Then, ANS defines a code on $\X^n$ through the following recursive equations,
    \begin{align}
        s_i & \defeq \encode(s_{i-1}, p_{x_i}^{(i)}, c_{x_i}^{(i)}).
    \end{align}
    The initial state, $s_0 \in \Naturals$, can be initialized to an arbitrary large integer \changed{(a precise definition of what constitutes as large is discussed in \Cref{sec:bbans-initial-bits})}.
    The codeword for $x^n$ is the binary representation of the final state, $s_n$, and has length $\ceil{\log s_n}$.
\end{definition}

\changed{
    \begin{example}[ANS Encoding]
        For the same source in \Cref{example:ans-encoding}, with $s_0=20$, the following table shows the values of the states $s_i$ after encoding $x_i$, as a function of all possible permutations of the sequence $aabc$.
        \begin{center}
            \begin{tabular}{c c c c c}
                $ x^n$ & $s_1$ & $s_2$ & $s_3$ & $s_4$  \\
                \midrule
                $aabc$ & $40$  & $80$  & $322$ & $1291$ \\
                $aacb$ & $40$  & $80$  & $323$ & $1294$ \\
                $abac$ & $40$  & $162$ & $324$ & $1299$ \\
                $abca$ & $40$  & $162$ & $651$ & $1301$ \\
                $acab$ & $40$  & $163$ & $325$ & $1302$ \\
                $acba$ & $40$  & $163$ & $654$ & $1308$ \\
                $baac$ & $82$  & $164$ & $328$ & $1315$ \\
                $baca$ & $82$  & $164$ & $659$ & $1317$ \\
                $bcaa$ & $82$  & $331$ & $661$ & $1321$ \\
                $caab$ & $83$  & $165$ & $329$ & $1318$ \\
                $caba$ & $83$  & $165$ & $662$ & $1324$ \\
                $cbaa$ & $83$  & $334$ & $668$ & $1336$
            \end{tabular}
        \end{center}
        Note how for all permutations,
        \begin{align}
            -\log P_{X^n}(x^n) & = -\log \left( P_X(a)^2 \cdot P_X(b) \cdot P_X(c) \right) \\
                               & = 6 \text{ bits}                                          \\
                               & \approx \log \frac{s_4}{s_0}.
        \end{align}
        Each symbol increases the number of bits required to represent the ANS state by roughly its information content.
        This is the key factor to proving the optimality of ANS and is discussed further in \Cref{sec:ans-achievable-rates}.
    \end{example}
}

\subsection{Achievable Rates}\label{sec:ans-achievable-rates}
\Cref{chapter:lossless-source-coding} showed the rates achievable with extended and prefix-free codes are equal for large sequence lengths.
For large states, the increase in the ANS state, \changed{measured in bits}, is close to the negative log-probability, i.e., information content, of the encoded symbol under the quantized probability
The large state regime happens naturally due to the state increasing in value as more symbols are encoded.
We therefore center the discussion of rate and optimality around the change in size of the ANS state, as well as the asymptotic rate (\Cref{def:asymptotic-rate}).

\begin{definition}[State Change]
    Given an ANS state $s$, the increase in the ANS state from encoding a symbol $x \in \X$ (up to rounding errors) is,
    \begin{align}
        \Delta(s, x, P_X) \defeq \log\left( \frac{s^\prime}{s} \right),
    \end{align}
    where $s^\prime = \encode(s, p_x, c_x)$ is the new state and $P_X(x) = \frac{p_x}{N}$ is some quantized probability distribution with probabilities $p_x$ and precision $N$.
\end{definition}

\begin{theorem}[Optimality of ANS]\label{theorem:optimality-ans}
    For large state values $s$, encoding a symbol $x \in \X$ with range defined by $(p_x, c_x)$ increases the number of bits needed to represent the ANS state by the information content of the encoded symbol,
    \begin{align}
        \Delta(x, P_X) \defeq \lim_{s \rightarrow \infty} \Delta(s, x, P_X) = -\log P_X(x).
    \end{align}
\end{theorem}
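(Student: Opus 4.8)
The plan is to analyze the ANS encoding function $\encode(s, p_x, c_x) = N\cdot(s \div p_x) + c_x + (s \bmod p_x)$ directly and show that the ratio $s^\prime / s$ converges to $N/p_x = 1/P_X(x)$ as $s \to \infty$, which upon taking logarithms gives exactly $-\log P_X(x)$. First I would write $s = p_x \cdot (s \div p_x) + (s \bmod p_x)$, so that $s \div p_x = (s - r)/p_x$ where $r = s \bmod p_x \in \{0, 1, \dots, p_x - 1\}$ is bounded independently of $s$. Substituting into the definition of $\encode$ gives
\begin{align}
    s^\prime = \frac{N}{p_x}\cdot(s - r) + c_x + r = \frac{N}{p_x}\cdot s + \left(c_x + r - \frac{N\cdot r}{p_x}\right).
\end{align}
The bracketed term is bounded in absolute value by a constant depending only on $N$ and $p_x$ (since $0 \le c_x < N$ and $0 \le r < p_x$), call it $E_x$, so $s^\prime = \frac{N}{p_x}\cdot s + \theta$ with $\abs{\theta} \le E_x$.

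Next I would form the ratio and take the limit:
\begin{align}
    \frac{s^\prime}{s} = \frac{N}{p_x} + \frac{\theta}{s} \longrightarrow \frac{N}{p_x} \qquad \text{as } s \to \infty,
\end{align}
since $\abs{\theta}/s \le E_x/s \to 0$. Because $\log$ is continuous at $N/p_x > 0$, it follows that
\begin{align}
    \Delta(x, P_X) = \lim_{s \to \infty} \log\!\left(\frac{s^\prime}{s}\right) = \log\frac{N}{p_x} = -\log\frac{p_x}{N} = -\log P_X(x),
\end{align}
which is the claimed identity. The requirement $p_x \ge 1$ from \Cref{def:quantized-pmf} is exactly what guarantees the integer division is well-defined and $N/p_x$ is finite.

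I expect the only real subtlety — not so much an obstacle as a point requiring care — is bookkeeping the remainder term cleanly: one must be explicit that $r = s \bmod p_x$ stays in a fixed finite range as $s$ grows, so that the additive discrepancy $\theta$ between $s^\prime$ and the ideal value $\frac{N}{p_x}s$ is $O(1)$ rather than growing with $s$. Once that is nailed down, dividing by $s$ and invoking continuity of the logarithm finishes it immediately; there is no need for any probabilistic argument here, as the statement is purely about the deterministic encoding map. If desired, one could additionally record the quantitative rate $\Delta(s,x,P_X) = -\log P_X(x) + O(1/s)$, which follows from the same estimate via $\log(1 + \theta/(\frac{N}{p_x}s)) = O(1/s)$, and which is what makes the asymptotic rate computations in \Cref{sec:ans-achievable-rates} go through.
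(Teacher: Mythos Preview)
Your proof is correct and follows essentially the same approach as the paper: both arguments show $s' = \frac{s}{P_X(x)} + O(1)$ (the paper via the sandwich bounds $\frac{s}{P_X(x)} - N < s' < \frac{s}{P_X(x)} + N$ obtained from $a/b - 1 \le a \div b \le a/b$, you via the exact decomposition $s = p_x(s\div p_x) + r$), then divide by $s$ and pass to the limit. Your version is slightly tidier in that it yields an explicit expression for the additive error rather than two-sided inequalities, but the idea is the same.
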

\begin{proof}
    The next state can be upper bounded by,
    \begin{align}
        s^\prime & = N\cdot\left(s \div p_x\right) + c_x + s \bmod p_x &  &                                                         \\
                 & \leq N\cdot\frac{s}{p_x} + c_x + s \bmod p_x        &  & \left( a \div b \leq a/b \right)                        \\
                 & < N\cdot\frac{s}{p_x} + c_x + p_x                   &  & \left( a \bmod b < b \right)                            \\
                 & \leq N\cdot\frac{s}{p_x} +  N                       &  & \left( c_x + p_x \leq N \right)                         \\
                 & = \frac{s}{P_X(x)} + N                              &  & \left( \text{\Cref{eq:quantized-distribution}} \right).
    \end{align}
    Similarly, a lower bound can be given,
    \begin{align}
        s^\prime & \geq N\cdot\left(\frac{s}{p_x} - 1\right) + c_x + s \bmod p_x &  & \left( a \div b \geq a/b  - 1 \right)                   \\
                 & > N\cdot\left(\frac{s}{p_x} - 1\right)                        &  & \left( c_x + s \bmod p_x > 0 \right)                    \\
                 & = \frac{s}{P_X(x)} - N                                        &  & \left( \text{\Cref{eq:quantized-distribution}} \right).
    \end{align}
    The state change is therefore sandwiched by the logarithm of the upper and lower bounds, and approaches $-\log P_X(x)$ as $s \rightarrow \infty$,
    \begin{align}\label{eq:state-change-sandwich}
        \log\left( \frac{1}{P_X(x)} - \frac{N}{s} \right) \leq \Delta(s, x, P_X) \leq \log\left( \frac{1}{P_X(x)} + \frac{N}{s} \right).
    \end{align}
\end{proof}
Optimality is only guaranteed if the state is large enough to overwhelm the constant factors of \Cref{theorem:optimality-ans}.
This can be achieved by a technique known as \emph{renormalization}, where the state is forced to be above a minimal value at every step.
Renormalization ensures that the inaccuracy is bounded by \(2.2\times10^{-5}\) bits per operation, which is equivalent to one bit of redundancy for every 45,000 operations \cite{townsend2020tutorial}.
As well as this small per-symbol redundancy, in practical ANS implementations there are also one-time redundancies incurred when initializing and terminating encoding.
The one-time overhead is usually bounded by 16, 32 or 64 bits, depending on the implementation \cite{townsend2020tutorial,duda2009asymmetric}.

The asymptotic rate achieved by ANS will depend on the distribution used for encoding and decoding.
In the previous discussion we assumed $P_{X^n}$ was known.
For an arbitrary distribution $Q_{X^n}$ encoding in the large state regime yields an average state change of,
\begin{align}\label{eq:rate-entropy-model}
    \E*{\Delta(X^n, Q_{X^n})} = \E{- \log Q_{X^n}(X^n)},
\end{align}
where the code in \eqref{eq:rate-entropy-model} is implied from context to be constructed with ANS and the conditional distributions of $Q_{X_i \g X^{i-1}}$, as in \Cref{def:ans-code}.
This quantity is known as the \emph{cross-entropy} and is lower bounded by the entropy of the source distribution $P_{X^n}$, as shown next.
\begin{theorem}[Source Coding Theorem: ANS Codes]\label{theorem:source-coding-ans}
    Given a data source $(X^n, \X^n, P_{X^n})$ and model $Q_{X^n}$, the expected state change, in the large state regime, is lower bounded by the entropy of the data source,
    \begin{align}
        \E*{\Delta(X^n, Q_{X^n})}
         & = \E{- \log Q_{X^n}(X^n)}            \\
         & = H(P_{X^n}) + \KL{P_{X^n}}{Q_{X^n}} \\
         & \geq H(P_{X^n}),
    \end{align}
    with equality when $Q_{X^n} = P_{X^n}$.
\end{theorem}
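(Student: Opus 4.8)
The plan is to establish the stated equality-and-inequality in three movements, reusing the per-symbol analysis already in place. \emph{First (state change to cross-entropy).} By \Cref{def:ans-code}, encoding $x^n$ under the model means applying $\encode$ sequentially with the quantized conditionals $\condx{Q}$, so the total state change telescopes: $\Delta(x^n, Q_{X^n}) = \log(s_n/s_0) = \sum_{i=1}^n \log(s_i/s_{i-1})$, where the $i$-th term is exactly the state change of encoding $x_i$ with $\condx{Q}(\cdot \g x^{i-1})$ into $s_{i-1}$. In the large-state regime \Cref{theorem:optimality-ans} identifies each such term with the information content $-\log \condx{Q}(x_i \g x^{i-1})$; summing and using the chain rule $Q_{X^n}(x^n) = \prod_{i=1}^n \condx{Q}(x_i \g x^{i-1})$ gives $\Delta(x^n, Q_{X^n}) = -\log Q_{X^n}(x^n)$, and averaging over $X^n \sim P_{X^n}$ yields the first equality $\E{\Delta(X^n, Q_{X^n})} = \E{-\log Q_{X^n}(X^n)}$, which is precisely \eqref{eq:rate-entropy-model}.

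\emph{Second (splitting the cross-entropy).} This is the algebraic identity $-\log Q_{X^n}(x^n) = -\log P_{X^n}(x^n) + \log\frac{P_{X^n}(x^n)}{Q_{X^n}(x^n)}$; taking $\E{\cdot}$ over $P_{X^n}$ and reading off the definitions of $H$ and $\KLOp$ from the notation section gives $\E{-\log Q_{X^n}(X^n)} = H(P_{X^n}) + \KL{P_{X^n}}{Q_{X^n}}$. \emph{Third (non-negativity of the KL term).} Apply Jensen's inequality to the strictly convex map $t \mapsto -\log t$: $\KL{P_{X^n}}{Q_{X^n}} = \E{-\log\frac{Q_{X^n}(X^n)}{P_{X^n}(X^n)}} \geq -\log \E{\frac{Q_{X^n}(X^n)}{P_{X^n}(X^n)}} = -\log\sum_{x^n} Q_{X^n}(x^n) = 0$, with equality exactly when $Q_{X^n}(X^n)/P_{X^n}(X^n)$ is $P_{X^n}$-almost-surely constant, i.e.\ $Q_{X^n} = P_{X^n}$. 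Chaining the three movements gives $\E{\Delta(X^n, Q_{X^n})} \geq H(P_{X^n})$ with the claimed equality condition.

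The only delicate point is the first movement: \Cref{theorem:optimality-ans} is stated for a single $\encode$ step as $s \to \infty$, whereas here the intermediate states $s_1, \dots, s_{n-1}$ are pinned down by the data rather than independently sent to infinity. The clean resolution is to read $\Delta(x^n, Q_{X^n})$ as the idealized (rounding- and renormalization-free) state change, so that each conditional step contributes \emph{exactly} $-\log \condx{Q}(x_i \g x^{i-1})$ and the telescoping sum is an identity rather than a limit; in a concrete implementation one instead invokes the renormalization guarantee quoted after \Cref{theorem:optimality-ans} to bound the accumulated per-symbol error by $n \cdot 2.2\times 10^{-5}$ bits, which is negligible and vanishes in the asymptotic rate. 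Everything else — the cross-entropy decomposition and Gibbs' inequality — is routine.
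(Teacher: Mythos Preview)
Your proposal is correct and follows essentially the same route as the paper: decompose the cross-entropy as $H(P_{X^n}) + \KL{P_{X^n}}{Q_{X^n}}$ and invoke non-negativity of the KL divergence. The paper's proof is terser---it treats the first equality $\E{\Delta(X^n,Q_{X^n})}=\E{-\log Q_{X^n}(X^n)}$ as already established by \eqref{eq:rate-entropy-model} and cites \cite{cover1999elements} for Gibbs' inequality rather than deriving it via Jensen---so your version is simply a more self-contained rendering of the same argument.
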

\begin{proof}
    \begin{align}
        \E{-\log Q_{X^n}(X^n)}
         & = \E*{-\log \left( P_{X^n}(X^n) \cdot \frac{Q_{X^n}(X^n)}{P_{X^n}(X^n)}\right)}       \\
         & = \E*{-\log P_{X^n}(X^n)} + \E*{\log \left( \frac{P_{X^n}(X^n)}{Q_{X^n}(X^n)}\right)} \\
         & =  H(P_{X^n}) + \KL{P_{X^n}}{Q_{X^n}}                                                 \\
         & \geq  H(P_{X^n}),
    \end{align}
    where the last step follows from the non-negativity of the KL divergence \cite{cover1999elements}.
\end{proof}
\Cref{theorem:source-coding-ans} shows that the model achieving the lowest average state change is the data distribution itself, $Q_{X^n} = P_{X^n}$.
The increase from using any other distribution is equal to the KL divergence between the data distribution and the model,
and is known as the \emph{wrong-code penalty} \cite{cover1999elements}.

\begin{example}[Mixture of Marginals - ANS]
    \Cref{theorem:source-coding-theorem-prefix-free-codes} shows the rate of prefix-free codes is lower bounded by the entropy of the mixture of marginals.
    The lower bound is achieved by ANS within the family of i.i.d.\ models,
    \begin{align}
        \Q_{\text{i.i.d.}}
        = \left\{ Q_{X^n} \colon Q_{X^n}(x^n) = \prod_{i=1}^n Q_{X}(x_i) \text{ for some $Q_X$ over $\X$} \right\}.
    \end{align}
    The distribution $Q_{X^n}^\star \in \Q_{\text{i.i.d.}}$ minimizing the average state change and asymptotic rate is easily found to be the mixture of marginals, as can be seen from,
    \begin{align}
        \E*{\Delta(X^n, Q_{X^n})}
         & = \frac{1}{n}\cdot\E{-\log Q_{X^n}(X^n)}                                                               \\
         & = \frac{1}{n}\cdot\sum_{i=1}^n \E{-\log Q_X(X_i)}                                                      \\
         & = \frac{1}{n}\cdot\sum_{i=1}^n \sum_{x \in \X} P_{X_i}(x)\cdot (-\log Q_X(x))                          \\
         & = \sum_{x \in \X} \left(-\log Q_X(x) \cdot \left(\frac{1}{n}\cdot\sum_{i=1}^n P_{X_i}(x)\right)\right) \\
         & = H(\bar{P}_{X}) + \KL{\bar{P}_{X}}{Q_X}.
    \end{align}
\end{example}

The theorems and results presented so far discuss only the average change in the number of bits required to represent the ANS state, but not the final rate, or asymptotic rate, of the code.
The rate values observed in practice are close to the cross-entropy for the applications considered later in this thesis.
Given these results, for what follows we will focus the discussion around the cross-entropy between the model and source distribution.

\subsection{ANS Decoding and Sampling}\label{sec:ans-decode}
Decoding from ANS proceeds in the \emph{reverse} order of encoding.
The first symbol to be decoded is $x_n$, followed by $x_{n-1}$, and so on.
Doing so requires recovering the previous state, $s$, and the encoded symbol, $x$, from the current state $s_i$.
Decoding is possible as the state encodes a value in the range of $x$,
\begin{equation}
    j \defeq s_i \bmod N = c_x + s\bmod p_x \in [c_x, c_x + p_x),
\end{equation}
making it possible to recover the symbol by performing a binary search on all intervals $[c_x, c_x + p_x)$.
In the worst case, this search is $\Omega(\log\abs{\mathcal \X})$, although in some cases a search can be avoided, by mathematically computing the required interval and symbol (either analytically or via a lookup table).
Whether implemented using search or otherwise, we refer to the function which recovers $x$, $c_x$, and $p_x$, as the \emph{reverse lookup} function,
\begin{align}
     & \rlookup\colon  [n] \times [N) \mapsto \X \times [N)^2 \\
     & \rlookup(i, j) \defeq \left(x, p_x, c_x\right),
\end{align}
Knowing $j$, the decoded symbol $x$, and the current state $s_i$, we can recover the previous state via modular integer arithmetic.
We know $c_x + s \bmod p_x < c_x + p_x \leq N$ will never contribute to the increase in multiplicity, with respect to $N$, of the quantity in parenthesis below, resulting in the following equality,
\begin{align}
    s_i \div N
     & = \left( N\cdot(s \div p_x) + c_x + s \bmod p_x\right) \div N \\
     & = s \div p_x.
\end{align}
The decoder first recovers $x, p_x, c_x$ via $\rlookup$, from the current state $s_i$, and then restores the state to its value before the encoding of $x$,
\begin{align}
     & p_x \cdot (s_i \div N) + s_i \bmod N - c_x \\
     & =  p_x \cdot (s \div p_x) + s \bmod p_x    \\
     & = s.
\end{align}
This implies $\encode$ (\Cref{def:ans-encode}) has a well defined inverse,
\begin{align}\label{eq:ans-decode}
     & \decode\colon \Naturals \times [N)^{\abs{\X}} \rightarrow \Naturals \times\X \\
     & \decode(s_i, p_1, \dots, p_{\abs{\X}}) \defeq \left(s, x \right).
\end{align}
We sometimes replace the quantized probability parameters in \Cref{eq:ans-decode} with the symbol representing the distribution, when clear from context.

For every quantized distribution in the discrete simplex, ANS defines a partitioning of $\Naturals$ into disjoint sets $\Naturals_x$, for each $x \in \X$.
If the state is in $\Naturals_x$, then the most recently encoded symbol is $x$, which can be recovered by performing $\decode$.
The relative density of integers, for any sub-interval of $\Naturals$, is equal to the quantized probability $p_x$, as long as the interval length is a multiple of the precision $N$,
\begin{align}
    \frac{1}{m}\cdot\abs{\Naturals_x \cap [k, k + m \cdot N)} & = p_x, \text{ for all $k, m \in \Naturals \text{ and } m > 0$}.
\end{align}
\begin{example}[ANS Partitioning]
    For precision $N=6$, and sequence length $n=1$, the quantized probability distributions over $\X = \{a, b, c\}$,
    \begin{align}
        N \cdot P_X(a) & = 1 & N \cdot P^\prime_X(a) & = 2  \\
        N \cdot P_X(b) & = 3 & N \cdot P^\prime_X(b) & = 2  \\
        N \cdot P_X(c) & = 2 & N \cdot P^\prime_X(c) & = 2,
    \end{align}
    define partitionings,
    \begin{align*}
        \Naturals_a & = \{0 &  &   &  &   &  &   &  &   &  &   &  & 6 &  &   &  &   &  &   &  &    &  &    &  & 12 &  &    &  &    &  &    &  &    &  &    & \dots \\
        \Naturals_b & = \{  &  & 1 &  & 2 &  & 3 &  &   &  &   &  &   &  & 7 &  & 8 &  & 9 &  &    &  &    &  &    &  & 13 &  & 14 &  & 15 &  &    &  &    & \dots \\
        \Naturals_c & = \{  &  &   &  &   &  &   &  & 4 &  & 5 &  &   &  &   &  &   &  &   &  & 10 &  & 11 &  &    &  &    &  &    &  &    &  & 16 &  & 17 & \dots
    \end{align*}
    and,
    \begin{align*}
        \Naturals^\prime_a & = \{0 &  & 1 &  &   &  &   &  &   &  &   &  & 6 &  & 7 &  &   &  &   &  &    &  &    &  & 12 &  & 13 &  &    &  &    &  &    &  &    & \dots \\
        \Naturals^\prime_b & = \{  &  &   &  & 2 &  & 3 &  &   &  &   &  &   &  &   &  & 8 &  & 9 &  &    &  &    &  &    &  &    &  & 14 &  & 15 &  &    &  &    & \dots \\
        \Naturals^\prime_c & = \{  &  &   &  &   &  &   &  & 4 &  & 5 &  &   &  &   &  &   &  &   &  & 10 &  & 11 &  &    &  &    &  &    &  &    &  & 16 &  & 17 & \dots
    \end{align*}
\end{example}
This property allows ANS to be used as an \emph{invertible sampler}, where a sample is generated by decoding from a randomly initialized state.
Sampling can be inverted by encoding symbol back into the ANS state under the same quantized probability distribution.
\begin{lemma}[ANS Sampling]\label{lemma:ans-sampling}
    For any $k, m \in \Naturals$, if $U$ is a discrete, uniform, random variable in the interval $[k, k + m \cdot N)$, then decoding with ANS on the random state $U$,
    \begin{align}
        S, X \defeq \decode(U, P_X),
    \end{align}
    gives a random sample from the quantized probability distribution,
    \begin{align}
        X \sim P_X,
    \end{align}
    for any precision $N$.
    The value of $U$ can be recovered via encoding,
    \begin{align}
        U = \encode(S, X, p_x, c_x),
    \end{align}
    with probability one.
\end{lemma}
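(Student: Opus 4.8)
The plan is to split the statement into two parts that can be treated independently: (i) the decoded symbol $X$ has law $P_X$, and (ii) re-encoding restores $U$ exactly, hence with probability one.

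For part (i) I would first pin down, for each $x \in \X$, the set $\Naturals_x \subseteq \Naturals$ of states that decode to $x$. By the definitions of $\rlookup$ and $\decode$, decoding a state $s$ reads off the residue $j \defeq s \bmod N$ and returns the unique symbol whose range $[c_x, c_x + p_x)$ contains $j$; these ranges tile $[0, N)$ exactly and disjointly because $c_x = \sum_{y \colon y < x} p_y$, $\sum_{x} p_x = N$, and each $p_x \geq 1$. Hence $\Naturals_x = \{\, s \in \Naturals \colon s \bmod N \in [c_x, c_x + p_x) \,\}$, so membership in $\Naturals_x$ depends on $s$ only through $s \bmod N$. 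The second step is a counting argument: an interval $[k, k + m\cdot N)$ of length $mN$ contains exactly $m$ integers of each residue class modulo $N$, and exactly $p_x$ of the $N$ residue classes lie in $[c_x, c_x + p_x)$, so $\abs{\Naturals_x \cap [k, k + m\cdot N)} = m \cdot p_x$ — this is precisely the density property stated just before the lemma. Since $U$ is uniform on $[k, k + m\cdot N)$ and $\{X = x\}$ is the event $\{U \in \Naturals_x\}$, it follows that $\Pr[X = x] = \frac{m\, p_x}{m N} = \frac{p_x}{N} = P_X(x)$, with $N$ arbitrary throughout.

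For part (ii) I would invoke the computation already carried out when deriving $\decode$: writing $\decode(U) = (S, X)$ and $j = U \bmod N$, one has $S = p_x (U \div N) + (j - c_x)$ with $0 \leq j - c_x < p_x$, so $S \div p_x = U \div N$ and $S \bmod p_x = j - c_x$, whence $\encode(S, p_x, c_x) = N (U \div N) + c_x + (j - c_x) = N (U \div N) + (U \bmod N) = U$. This identity holds for every value of $U$ in the support, so it holds with probability one.

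The only point that requires any care — the ``hard part'', such as it is — is the counting step in part (i): one must use that the window length is an \emph{exact} multiple of $N$ (otherwise residue classes are hit unequal numbers of times and the sample is biased), together with the fact that the symbol ranges partition $[0, N)$, which is where the quantization conventions $\sum_x p_x = N$ and $p_x \geq 1$ enter. Everything else is routine manipulation of integer division and modular arithmetic, most of it already done in the text preceding the lemma.
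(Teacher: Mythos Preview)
Your proposal is correct and follows essentially the same line as the paper. The paper does not give an explicit proof of this lemma; instead, the surrounding text establishes the two ingredients you use --- the density identity $\frac{1}{m}\abs{\Naturals_x \cap [k, k+mN)} = p_x$ just before the lemma, and the algebraic verification that $\encode$ inverts $\decode$ in the decoding section --- so your write-up is exactly the argument the paper leaves implicit.
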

\Cref{lemma:ans-sampling} guarantees samples generated from ANS decoding will be distributed according to the quantized probability distribution $P_X$, but requires randomness in the form of the uniform random variable $U$.
The algorithms presented in this thesis make use of invertible sampling but with a shared ANS state.
Given an initial state $s_0$, we can generate a sequence of samples and states by performing,
\begin{align}
    s_i, x_i \defeq \decode(s_{i-1}, P_X).
\end{align}
Conditioned on the initial state $s_0$, the resulting sequences $s^n$ and $x^n$ are deterministic.
It therefore is not possible to make probabilistic statements regarding the distribution of the generated sample, as was done in \Cref{lemma:ans-sampling}.
Instead, we can show that most integer states map to sequences with empirical distributions close to $P_X$ in KL divergence.
\begin{definition}[Typical Sequence]
    An \emph{$\epsilon$-typical} sequence of some probability distribution $P_X$ is an instance $x^n \in \X^n$ with empirical distribution close to $P_X$ in KL divergence,
    \begin{align}
        \KL{P_X}{\hat{P}[x^n]} \leq \epsilon,
    \end{align}
    where $\hat{P}[x^n]$ is the empirical distribution,
    \begin{align}
        \hat{P}[x^n](x)
        = \frac{1}{n}\cdot\sum_{i=1}^n \1\{x = x_i\}
        = \frac{1}{n}\cdot\sum_{x^\prime \in \X} \1\{x = x^\prime\}.
    \end{align}
\end{definition}

\begin{theorem}[All Sequences are Typical Asymptotically]
    Let $X_i \sim P_X$ be i.i.d.\ random variables with common alphabet $\X$.
    Then, a random sequence is typical, for any $\epsilon$, as $n \rightarrow \infty$,
    \begin{equation}
        \lim_{n \rightarrow \infty} \KL{P_X}{\hat{P}[x^n]} = 0,
    \end{equation}
    with probability $1$.
\end{theorem}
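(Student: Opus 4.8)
The plan is to reduce the statement to the strong law of large numbers (SLLN) applied coordinate-wise to the empirical distribution, and then transfer that convergence through the $\KLOp$ functional using its continuity on the relevant face of the simplex. Throughout I assume $\X$ is finite, which is the setting relevant to this chapter; the countably-infinite case needs one extra step that I flag at the end. First, fix a symbol $x \in \X$ and note that $\hat{P}[X^n](x) = \frac{1}{n}\sum_{i=1}^n \1\{X_i = x\}$ is an average of i.i.d.\ $\{0,1\}$-valued random variables with mean $\E{\1\{X_1 = x\}} = P_X(x)$. By the SLLN, $\hat{P}[X^n](x) \to P_X(x)$ with probability one. Since $\X$ is finite, the intersection over $x \in \X$ of these almost-sure events is again almost sure, so on an event of probability one the empirical distribution $\hat{P}[X^n]$ converges entrywise to $P_X$.

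Second, I would expand the divergence into a constant term plus an empirical average,
\begin{align}
    \KL{P_X}{\hat{P}[X^n]} = \sum_{x \in \X} P_X(x)\log P_X(x) \;-\; \sum_{x \in \X} P_X(x)\log \hat{P}[X^n](x),
\end{align}
where both sums range only over $x$ with $P_X(x) > 0$ (the convention $0\log 0 = 0$ disposes of the remaining symbols). On the probability-one event from the first step, for every such $x$ we have $\hat{P}[X^n](x) \to P_X(x) > 0$, hence eventually $\hat{P}[X^n](x) > 0$ and $\log \hat{P}[X^n](x) \to \log P_X(x)$ by continuity of $\log$ away from the origin. Summing the finitely many terms, the second sum converges to $\sum_{x} P_X(x)\log P_X(x)$, so the right-hand side tends to $0$ almost surely, which is exactly the assertion $\lim_{n\to\infty}\KL{P_X}{\hat{P}[X^n]} = 0$ with probability one.

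The only delicate point — and the step I expect to require the most care — is that the logarithm is unbounded near the origin, so one must rule out the empirical frequency of a supported symbol collapsing to $0$; this is precisely what the SLLN prevents, since the a.s.\ limit of $\hat{P}[X^n](x)$ is the strictly positive number $P_X(x)$. Equivalently, one can phrase the second paragraph as continuity of $Q \mapsto \KL{P_X}{Q}$ at $Q = P_X$ along the relatively open face of the simplex supported on $\{x : P_X(x) > 0\}$. For a countably infinite alphabet the entrywise convergence still follows from the SLLN, but interchanging the limit with the now-infinite sum requires a uniform bound; one can supply this via the crude estimate $-\log \hat{P}[X^n](x) \le \log n$ together with a dominated-convergence argument, or — consistent with how finiteness of $\X$ is used elsewhere in the chapter — simply restrict to finite $\X$.
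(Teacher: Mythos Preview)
Your proposal is correct and follows the same approach as the paper, which simply invokes the Law of Large Numbers to get $\hat{P}[X^n](x) \to P_X(x)$ almost surely and leaves the passage to $\KLOp$ implicit. You supply considerably more detail than the paper does---the continuity-of-$\log$ argument and the caveat about infinite alphabets are your additions, not omissions relative to the original.
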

\begin{proof}
    This proof is a direct consequence of the Law of Large Numbers \cite{durrett2019probability}.
    \begin{align}
        \hat{P}[X^n](x)
        = \frac{1}{n}\cdot\sum_{i=1}^n \1\{x = X_i\}
        \rightarrow
        = \E{\1\{x = X_i\}}
        = P_X(x)
    \end{align}
\end{proof}
In practice, the same renormalization technique discussed in \Cref{def:ans-code} is sufficient to guarantee typicality of the generated sequence.

ANS encoding increases the state inversely proportional to the probability of the encoded symbol.
Therefore, decoding must decrease the state by the same amount.
Intuitively, when used as a sampler, the ANS state is ``consumed'' to generate the random variate, but can be recovered by encoding the samples back into the state.

\begin{remark}[Decoding Reduces the ANS State]
    Decoding reduces the size of the ANS state by the same amount increased by encoding (up to rounding errors),
    \begin{align}
        -\Delta(s, x, P_X) = \log\left( \frac{s}{s^\prime} \right),
    \end{align}
    where $s = \decode(s^\prime, P_X)$ is the previous state and $P_X$ is the quantized probability distribution.
    For large values of the ANS state, this equals the information content of the symbol,
    \begin{align}
        \lim_{s^\prime \rightarrow \infty} \log \left(\frac{s}{s^\prime}\right) = \log P_X(x) = -\Delta(x, P_X).
    \end{align}
\end{remark}

\section{Bits-Back Coding}\label{chapter:bits-back}
ANS is a compression algorithm that requires access to a quantized probability distribution over the observations (i.e., data).
However, there are probabilistic models where the probability values are not readily available, precluding the direct application of ANS.
The model family considered in this thesis, subject to this constraint, is the family of \emph{latent variable models}.
A probability distribution in this class defines a model over observations indirectly through a conditional distribution, conditioned on a \emph{latent} variable, together with a prior.

\changed{To encode data without access to the marignal distribution over data, we will make use of the ANS state as a sampler; by performing a decode operation with the prior to select an instance for the latent variable.
    The probability distribution used to encode the data with ANS is the conditional probability distribution, conditioned on the decoded instance.
    Decoding to sample reduces the ANS state by approximately the log-probability of the decoded instance under the prior, which allows us to achieve a rate equal to the NELBO.

    Here, the latent variable can be seen as a ``degree of freedom'' of the encoding procedure, as the value of the latent itself is irrelevant; only the instance of the data is of interest.
    This technique, of using the ANS state as a sampler to exploit a degree of freedom for encoding, is known in the literature as \emph{bits-back} or \emph{free-energy} coding \cite{frey1996free, townsend2019practical}.
}

\subsection{Latent Variable Models (LVMs)}\label{chapter:bits-back-lvms}
\begin{definition}[Latent Variable Model]
    A \emph{latent variable model} is a probabilistic model over data defined by a joint distribution $P_{X, Z}$.
    The variable $X$, modeling the data source, and $Z$ are referred to as the \emph{observation} and \emph{latent}, respectively.
    An LVM can be specified indirectly via a \emph{prior}, $Z \sim P_Z$, and \emph{conditional probability}, $P_{X \g Z}$, where the implied model over data is defined via marginalization,
    \begin{align}
        P_X(x) \defeq \sum_{z \in \Z} P_{Z \g X}(z \g x) \cdot P_Z(z) .
    \end{align}
    In general, the probability values from the marginal are assumed to be unavailable as computing the marginalization requires a significant amount of computational resources in practice.
    The model can be extended to a sequence of observations through the i.i.d.\ assumption,
    \begin{align}
        P_{X^n, Z^n}(x^n, z^n) \defeq \prod_{i=1}^n P_{X, Z}(x_i, z_i),
    \end{align}
    which we will default to in this thesis unless specified otherwise.
\end{definition}

\begin{example}[Gaussian LVM]
    A \emph{Gaussian Mixture Model} is an LVM constructed from a weighted combination of $\abs{\Z}$ Gaussian distributions, $P_{X \g Z} = \N(\mu_Z, \sigma_Z)$.
    There is no known analytical expression allowing easy probability evaluations of the posterior, $P_{Z \g X}$, or marginal over data, $P_X$, without explicit marginalization.
\end{example}

Assuming the prior and conditional probability is quantized (for all $z \in \Z$), then it is possible to use ANS to encode the observations by selecting a value for $z$ using some procedure.

\begin{definition}[ANS Code for LVMs]
    Given an LVM specified by a prior $P_Z$ and conditional probability $P_{X \g Z}$, the following procedure defines an ANS code over $x^n \in \X^n$,
    \begin{align}
        s^\prime_i & \defeq \encode(s_{i-1}, x_i, P_{X \g Z}(\cdot \g z_i)) \\
        s_i        & \defeq \encode(s^\prime_i, z_i, P_{Z}),
    \end{align}
    where $z_i \in \Z$ are chosen arbitrarily.
\end{definition}

The latent variable is not an observed quantity and has no meaning outside the model.
If possible, we would encode the observations directly, but the latent variable is necessary as the probability values from the marginal $P_X$ are not available, forcing the use of the conditional probability $P_{X \g Z}$ during coding.
The rate will be a function of the latent sequence $z^n$ used for encoding, and must be known by the decoder to guarantee decodability.
By fixing a common random seed between the encoder and decoder, the latents can be chosen by sampling i.i.d.\ from the prior.
Unfortunately, we can show that the increase in rate will be at least the entropy $H(P_{Z^n})$.
\begin{lemma}[LVM Rate]\label{lemma:rate-for-lvm-with-ans}
    Let $P_{X, Z}$ be an LVM over a data source $X \sim D_X$.
    Construct an ANS code with latents selected by i.i.d.\ sampling from the prior, $Z \sim P_Z$.
    Then, the cross-entropy defining the rate is,
    \begin{align}\label{eq:lvm-gap}
        \E*{-\log P_{X, Z}}
        \geq \E*{-\log P_{X}} + H(P_Z),
    \end{align}
    with equality when the latent is independent of the observation,
    \begin{align}
        \sum_{x \in \X} D_X(x) \cdot {P_{Z \g X}(z \g x)} = P_Z(z).
    \end{align}
\end{lemma}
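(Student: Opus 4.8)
The plan is to identify the left-hand side of \eqref{eq:lvm-gap} as the cross-entropy of an \emph{augmented} source, and then decompose that cross-entropy. Since the latents $z_i$ are drawn i.i.d.\ from the prior and independently of the data $x_i \sim D_X$, each encoded pair $(X_i, Z_i)$ has joint law $D_X \cdot P_Z$. Encoding $x_i$ under $P_{X \g Z}(\cdot \g z_i)$ and then $z_i$ under $P_Z$ grows the ANS state, in the large-state regime of \Cref{theorem:optimality-ans}, by $-\log P_{X \g Z}(x_i \g z_i) - \log P_Z(z_i) = -\log P_{X, Z}(x_i, z_i)$. Applying \Cref{theorem:source-coding-ans} to the source $(X^n, Z^n) \sim \prod_i D_X(x_i) P_Z(z_i)$ with model $\prod_i P_{X,Z}(x_i, z_i)$, the asymptotic rate equals $\mathbb{E}_{(X,Z) \sim D_X \cdot P_Z}\!\left[-\log P_{X,Z}(X,Z)\right]$, the quantity in \eqref{eq:lvm-gap}.

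Next I would factor $P_{X,Z}(x,z) = P_X(x)\, P_{Z \g X}(z \g x)$, split the logarithm, and evaluate the expectation in two stages: inner over $z \sim P_Z$ with $x$ fixed, then outer over $x \sim D_X$. The $-\log P_X(x)$ term does not involve $z$, so it contributes $\E*{-\log P_X}$ with $X \sim D_X$. For the other term the inner expectation is the cross-entropy between the prior and the model posterior, which decomposes into entropy plus a divergence,
\begin{align}
    \mathbb{E}_{Z \sim P_Z}\!\left[-\log P_{Z \g X}(Z \g x)\right] = H(P_Z) + \KL{P_Z}{P_{Z \g X}(\cdot \g x)}.
\end{align}
Averaging this over $x \sim D_X$ extracts $H(P_Z)$ as a constant and leaves the remainder $\mathbb{E}_{X \sim D_X}\!\left[\KL{P_Z}{P_{Z \g X}(\cdot \g X)}\right]$, which is non-negative by the non-negativity of the KL divergence --- the same fact already used in the proof of \Cref{theorem:source-coding-ans}. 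This establishes \eqref{eq:lvm-gap}. Equality holds exactly when this expected divergence is zero, i.e.\ $P_{Z \g X}(\cdot \g x) = P_Z$ for $D_X$-almost every $x$, so that the latent carries no information about the observation under the model; marginalizing that identity against $D_X$ recovers the stated condition $\sum_x D_X(x)\, P_{Z \g X}(z \g x) = P_Z(z)$.

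The calculation itself is short, so the care required is mostly in \textbf{getting the reference distribution right}: because the latent is sampled from the \emph{prior} rather than the model posterior, the penalty is the divergence \emph{from the prior to the posterior}, $\mathbb{E}_{X \sim D_X}\KL{P_Z}{P_{Z \g X}(\cdot \g X)}$, not its reverse, and it is this asymmetry that forces the overhead to be a full $H(P_Z)$ (plus a non-negative term) rather than something smaller --- precisely the inefficiency that bits-back coding is designed to eliminate. A minor technical caveat is that the factorization $P_{X,Z} = P_X \cdot P_{Z \g X}$ and finiteness of the cross-entropy presuppose $P_X(x) > 0$ on $\mathrm{supp}(D_X)$, which costs nothing since otherwise the rate is already infinite.
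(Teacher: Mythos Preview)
Your proposal is correct and follows essentially the same route as the paper: factor $P_{X,Z}=P_X\cdot P_{Z\g X}$, split the log, recognize the inner expectation over $Z\sim P_Z$ as $H(P_Z)+\KL{P_Z}{P_{Z\g X}(\cdot\g x)}$, and invoke non-negativity of the KL. Your write-up is a bit more explicit about the reference distribution $D_X\cdot P_Z$ and the link to \Cref{theorem:source-coding-ans}, but the mathematical content is identical to the paper's proof.
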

\begin{proof}
    The cross-entropy decomposes into the cost of encoding with the marginal (if it was available) plus the extra bits spent to encode the latent,
    \begin{align}
        \E{-\log P_{X, Z}}= \E{-\log P_{X}}  + \E{-\log P_{Z \g X}}.
    \end{align}
    The second term is lower bounded by the entropy of the prior,
    \begin{align}
        \E{-\log P_{Z \g X}}
         & = \E{\E{-\log P_{Z \g X} \g X}}                 \\
         & = \E{H(P_Z) + \KL{P_Z}{P_{Z \g X}(\cdot \g X)}} \\
         & = H(P_Z) + \E{\KL{P_Z}{P_{Z \g X}(\cdot \g X)}} \\
         & \geq H(P_Z),
    \end{align}
    with equality when $P_{Z \g X} = P_Z$ for all $x \in \X$.
\end{proof}

To improve the achievable rate we can change the mechanism for choosing the latents.
The lower bound in \Cref{lemma:rate-for-lvm-with-ans} appears due to the i.i.d.\ sampling of the latents.
A better rate can be achieved by sampling from the posterior $P_{Z \g X}$, if available.
This follows directly from applying the inequality \emph{information never hurts} \cite{cover1999elements} to \eqref{eq:lvm-gap} with $Z \sim P_{Z \g X}$,
\begin{align}\label{eq:lvm-rate-posterior}
    \E{-\log P_{Z \g X}} & = \E{\E{-\log P_{Z \g X}(\cdot \g X) \g X}} \\
                         & = \E{H(P_{Z \g X}(\cdot \g X))}             \\
                         & = H(P_{Z \g X})                             \\
                         & \leq H(P_Z)
\end{align}
Intuitively, interpreting the latent as a cluster index, the posterior gives the probability of an observation belonging to that cluster.
The average code length under $P_{X \g Z}$ with latents from the posterior should therefore be smaller on average compared to sampling latents from the prior.

\subsection{Bits-Back with ANS (BB-ANS)}
From \Cref{lemma:rate-for-lvm-with-ans} we can see the rate of an LVM depends on the how well the model $P_X$ can estimate the data distribution $D_X$, measured by the KL divergence,
\begin{align}
    \E{-\log P_{X}} = H(D_X) + \KL{D_X}{P_X}.
\end{align}
The quality of generative models has rapidly improved in recent years \cite{wavenet, salimans2017pixelcnn, razavi2019generating, vahdat2020nvae}.
Latent variable models are particularly attractive for compression applications, because they are typically easy to parallelize.
Some of the most successful learned compressors for large scale natural images are based on deep latent variable models \cite{yang2020improving,townsend2020a}.
Examples include diffusion models \cite{kingma2021variational}, and integer discrete flows \cite{hoogeboom2019integer, berg2020idf++}, leading to state-of-the-art compression performance on image, speech \cite{havtorn2022benchmarking} and smart meter time-series \cite{jeong2022lossless} data.
Most of these models are variations of the family of \emph{Variational Autoencoders} (VAEs) \cite{kingma2013auto}.
\begin{definition}[Variational Autoencoder (VAE) \cite{kingma2013auto}]
    A VAE is an LVM $(P_{X \g Z}, P_Z)$ together with a distribution $Q_{Z \g X}$, called the \emph{approximate posterior}, intended to closely approximate the true posterior $P_{Z \g X}$ in terms of KL-divergence.
\end{definition}
\begin{theorem}[Evidence Lower Bound \cite{jordan1999introduction}]\label{theorem:vae-elbo}
    Given a VAE $\left(P_{X \g Z}, P_Z, Q_{Z \g X}\right)$ over data $X \sim D_X$, the following quantity,
    \begin{align}
        \ELBO{x}
         & = \E*{\log\left(\frac{P_{X \g Z}(x \g Z)\cdot P_Z(Z)}{Q_{Z \g X}(Z \g x)}\right)}                                       \\
         & = \sum_{z \in \Z} Q_{Z \g X}(z \g x) \cdot \log\left(\frac{P_{X \g Z}(x \g z)\cdot P_Z(z)}{Q_{Z \g X}(z \g x)} \right),
    \end{align}
    is a lower bound on the evidence of the data source under the model's marginal,
    \begin{align}
        \ELBO{x} \leq \log D_X(x).
    \end{align}
\end{theorem}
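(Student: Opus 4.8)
The plan is to obtain the bound as a single application of Jensen's inequality to the logarithm of the model's marginal, after rewriting that marginal as an expectation under the approximate posterior. Writing the evidence via the marginalization of the LVM — that is, $D_X(x) = \sum_{z \in \Z} P_{X \g Z}(x \g z)\cdot P_Z(z)$ is the model's marginal — and inserting $Q_{Z \g X}(\cdot \g x)$ as an importance weight,
\begin{align}
  \log D_X(x)
  &= \log \sum_{z \in \Z} P_{X \g Z}(x \g z)\cdot P_Z(z) \\
  &= \log \sum_{z \in \Z} Q_{Z \g X}(z \g x)\cdot \frac{P_{X \g Z}(x \g z)\cdot P_Z(z)}{Q_{Z \g X}(z \g x)},
\end{align}
the inner sum is exactly $\E{R}$ for the ratio $R = P_{X \g Z}(x \g Z)\,P_Z(Z)\,/\,Q_{Z \g X}(Z \g x)$ with $Z \sim Q_{Z \g X}(\cdot \g x)$. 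Since $\log$ is concave, Jensen's inequality gives $\log\E{R} \geq \E{\log R}$, and the right-hand side is precisely $\ELBO{x}$ as defined in the statement. That is the entire argument; the remaining work is bookkeeping.

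First I would make the change of measure explicit and name the expectation, as above; second I would invoke concavity of $\log$ to move the logarithm inside. As a slightly more informative route — which I would in fact present as the proof proper — I would instead compute the gap exactly. Applying Bayes' rule in the form $P_{X \g Z}(x \g z)\cdot P_Z(z) = P_{Z \g X}(z \g x)\cdot D_X(x)$ inside the logarithm of the sum definition of $\ELBO{x}$, and then splitting the sum over $z$, yields
\begin{align}
  \ELBO{x}
  &= \sum_{z \in \Z} Q_{Z \g X}(z \g x)\cdot\log\frac{P_{Z \g X}(z \g x)\cdot D_X(x)}{Q_{Z \g X}(z \g x)} \\
  &= \log D_X(x) + \sum_{z \in \Z} Q_{Z \g X}(z \g x)\cdot\log\frac{P_{Z \g X}(z \g x)}{Q_{Z \g X}(z \g x)} \\
  &= \log D_X(x) - \KL{Q_{Z \g X}(\cdot \g x)}{P_{Z \g X}(\cdot \g x)},
\end{align}
so the claim follows from non-negativity of the KL divergence \cite{cover1999elements}, with equality if and only if $Q_{Z \g X}(\cdot \g x)$ coincides with the true posterior $P_{Z \g X}(\cdot \g x)$. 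Taking expectations over $X \sim D_X$ then recovers the averaged form used later in the chapter.

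There is essentially no hard step here: the result is a one-line consequence of Jensen's inequality, and the two derivations above are the same computation read in two directions. The only point needing (minor) care is that the change of measure be legitimate, i.e.\ that $Q_{Z \g X}(z \g x) > 0$ wherever $P_{X \g Z}(x \g z)\,P_Z(z) > 0$, so that $R$ and the KL term above are finite; this holds automatically in our setting because all distributions are quantized (\Cref{def:quantized-pmf}) and therefore strictly positive on their finite supports. I would present the KL-divergence identity as the main proof, since it is self-contained and simultaneously exhibits the tightness condition, and note the Jensen computation as the conceptual one-liner behind it.
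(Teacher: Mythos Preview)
Your proof is correct and, in the version you call ``the proof proper,'' is essentially identical to the paper's: both apply Bayes' rule inside the logarithm to rewrite $\ELBO{x}$ as $\log P_X(x) - \KL{Q_{Z\g X}(\cdot\g x)}{P_{Z\g X}(\cdot\g x)}$ and conclude via non-negativity of the KL divergence. Your additional Jensen's-inequality framing and the remark on strict positivity of quantized distributions are nice extras not present in the paper, but the core argument is the same.
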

\begin{proof}
    By the non-negativity of the KL-divergence, for $Z \sim Q_{Z \g X}(\cdot \g x)$,
    \begin{align}
        \ELBO{x} & = \E*{\log\left(\frac{P_{X \g Z}(x \g Z)\cdot P_Z(Z)}{Q_{Z \g X}(Z \g x)}\right)}    \\
                 & = \E*{\log\left(\frac{P_{Z \g X}(Z \g x)\cdot P_X(x)}{Q_{Z \g X}(Z \g x)}\right)}    \\
                 & = \E*{-\log\left(\frac{Q_{Z \g X}(Z \g x)}{P_{Z \g X}(Z \g x)}\right)} + \log P_X(x) \\
                 & = -\KL{Q_{Z \g X}(\cdot \g x)}{P_{Z \g X}(\cdot \g x)} + \log P_X(x)                 \\
                 & \leq \log P_X(x).
    \end{align}
\end{proof}
The negation of the ELBO (NELBO) is an upper bound on the state change  under the marginal $P_X$ defined by the VAE,
\begin{align}
    - \log P_X(x) = \Delta(x, P_X) \leq -\ELBO{x}.
\end{align}
The gap between the state change and the ELBO equals the mismatch, measured in KL divergence, between the approximate and true posteriors.
Constructing an ANS code by sampling latents from the posterior achieves a rate higher than \Cref{eq:lvm-rate-posterior} due to this mismatch.
Next, we discuss a construction with asymptotic rate equal to the NELBO.

\begin{definition}[Bits-back with ANS (BB-ANS) \cite{townsend2019practical}]\label{def:bb-ans}
    Given a VAE defined by the quantized probability distributions (\Cref{def:quantized-pmf}),
    \begin{equation}
        Q_{Z \g X}(z \g x) = \frac{q_{z \g x}}{N_Z}, \quad
        P_{X \g Z}(x \g z) = \frac{p_{x \g z}}{N_X}, \quad
        P_Z(z) = \frac{p_z}{N_Z},
    \end{equation}
    BB-ANS encodes a single observation $X=x$, into an existing ANS state $s$, by first decoding a latent from the posterior,
    \begin{align}
        s^\prime, z            & \defeq \decode(s, Q_{Z \g X}(\cdot \g x)),           \label{eq:bb-ans-sample} \\
        s^{\prime\prime}       & \defeq \encode(s^\prime, x, p_{x \g z}, c_{x \g z}),                          \\
        s^{\prime\prime\prime} & \defeq \encode(s^{\prime\prime}, z, p_{z}, c_{z}).
    \end{align}
    Decoding proceeds in a similar fashion but in reverse order,
    \begin{align}
        s^{\prime\prime}, z & = \decode(s^{\prime\prime\prime}, P_Z)               \\
        s^{\prime}, x       & = \decode(s^{\prime\prime}, P_{X \g Z}(\cdot \g z)), \\
        s                   & = \encode(s^\prime, Q_{Z \g X}(\cdot \g x)).
    \end{align}
\end{definition}
BB-ANS defines codes over sequences $x^n \in \X^n$ by encoding all elements $x_i$ to a common state, beginning with an initial state $s_0$.
In the large state regime, decoding as the first step reduces the size of the state by exactly the information content of the latent under the approximate posterior, while encoding increases it by the negative log-probability of the observation and latent under the model's joint distribution.
This is depicted visually in \Cref{fig:bb-ans}.
\begin{figure}[t]
    \centering
    \includegraphics[width=\textwidth]{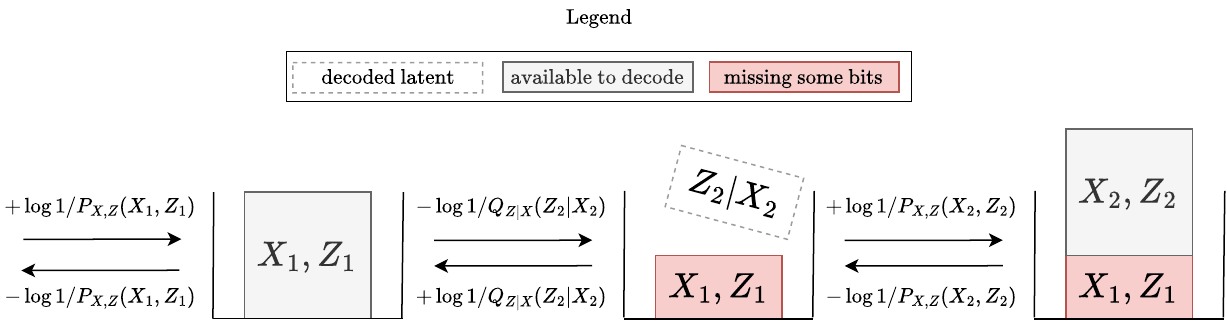}
    \caption{ANS state change under BB-ANS.}
    \label{fig:bb-ans}
\end{figure}

For an initial state $S_0 = s_0$, if $Z_i$ are the random variables produced during the encoding of a sequence $X^n$ using BB-ANS, the state change in the large state regime is equal to,
\begin{equation}
    -\sum_{i=1}^n \log\left(\frac{P_Z(Z_i) \cdot P_{X \g Z}(X_i \g Z_i)}{Q_{Z \g X}(Z_i \g X_i)}\right).
\end{equation}
The randomness of this scheme is due solely to the data sequence $X^n$ and initial state $S_0$.
The latent sequence $Z^n$ is deterministic conditioned on both these values.
If $Z_i$ is distributed according to the sampling distribution $Q_{Z \g X}$, then the change in state from encoding $X=x$ would equal $\E*{\ELBO{X}}$ in expectation.
In its current form presented, the determinism of the latent precludes us from making useful probabilistic statements regarding the correctness of sampling.
Correctness is measurable, to some extent, in terms of the KL-divergence between the empirical distribution of the latents and the approximate posterior, as a function of $x$,
\begin{align}
    \KL{Q_{Z \g X}(\cdot \g x)}{P[z^n]}.
\end{align}
This issue is often referred to as the \emph{dirty bits} issue in existing literature \cite{townsend2019practical,ruan2021improving}.
As of this writing, follow-up papers applying BB-ANS to real-world domains have found the value of the KL divergence to be small and the dirty bits are not a significant issue affecting the rate \cite{townsend2019practical,ruan2021improving, townsend2020a, townsend2020tutorial,townsend2021lossless,severo2023compressing,severo2023random}.
Nonetheless, modifying the sampling step \eqref{eq:bb-ans-sample} to add i.i.d.\ uniform noise to the state before sampling the latent is sufficient to guarantee correctness,
\begin{align}\label{eq:ans-add-noise-to-state}
    S^\prime, Z & \defeq \decode(s + U, Q_{Z \g X}(\cdot \g x)),
\end{align}
where $U$ is a uniform random variable in the interval $[-N_Z \div 2, N_Z \div 2]$ if $N_Z$ is odd, and $[-N_Z \div 2, (N_Z \div 2) - 1]$ if $N_Z$ is even.
From \Cref{lemma:ans-sampling} we know that $Z \sim Q_{Z \g X}(\cdot \g x)$ as $s + U$ is uniform in an interval of size $N_Z$.
Extending this to sequences, we sample $U_i \sim U$ i.i.d.\ from a common source of randomness between the encoder and decoder.
The encoder adds $U_i$ to state $S_i$ at each step to encode a symbol $X_i$, and the decoder, knowing $U_i$, removes it before sampling the latent.
The increase in the average state change due to adding noise will be close to zero.
The expected state change at each step, in the large state regime, will equal the ELBO.
The law of large numbers \cite{durrett2019probability} guarantees the state change per symbol converges, with probability $1$, to the ELBO as well.
\begin{lemma}[BB-ANS Rate]\label{lemma:bb-ans-rate}
    Let $\left(P_Z, P_{X \g Z}, Q_{Z \g X}\right)$ be a VAE over an i.i.d\ data source $X_i \sim D_X$.
    Then, if $Z_i \sim Q_{Z \g X}(\cdot \g X_i)$ are the random variables produced during the encoding procedure of BB-ANS in \Cref{def:bb-ans},
    \begin{equation}
        \lim_{n \rightarrow \infty} \frac{1}{n} \cdot \sum_{i=1}^n \log\left(\frac{P_Z(Z_i) \cdot P_{X \g Z}(X_i \g Z_i)}{Q_{Z \g X}(Z_i \g X_i)}\right)
        = \E{\ELBO{X}},
    \end{equation}
    with probability $1$, as well as,
    \begin{equation}
        \E*{\log\left(\frac{P_Z(Z_i) \cdot P_{X \g Z}(X_i \g Z_i)}{Q_{Z \g X}(Z_i \g X_i)}\right)}
        = \E{\ELBO{X}},
    \end{equation}
    for all $i$.
\end{lemma}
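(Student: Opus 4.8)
The plan is to reduce both claims to two facts about the pairs $(X_i, Z_i)$ produced by the (noise-augmented) encoding procedure: first, that the conditional law of $Z_i$ given $X_i = x$ is exactly $Q_{Z \g X}(\cdot \g x)$ — this is the standing hypothesis of the lemma, and is realized by the noise-augmented sampling step \eqref{eq:ans-add-noise-to-state}; and second, that the sequence $(X_i, Z_i)_{i \ge 1}$ is in fact jointly i.i.d.\ with law $D_X(x)\cdot Q_{Z \g X}(z \g x)$. The expectation identity will follow from the first fact alone by conditioning, while the almost-sure limit will follow from the second fact plus the strong law of large numbers.

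First I would prove the expectation identity. Set $W_i \defeq \log\!\left(\frac{P_Z(Z_i)\cdot P_{X \g Z}(X_i \g Z_i)}{Q_{Z \g X}(Z_i \g X_i)}\right)$. Conditioning on $X_i = x$ and averaging over $z \in \Z$ against $Q_{Z \g X}(\cdot \g x)$ gives $\E{W_i \g X_i = x} = \sum_{z \in \Z} Q_{Z \g X}(z \g x)\log\!\left(\frac{P_{X \g Z}(x \g z)\cdot P_Z(z)}{Q_{Z \g X}(z \g x)}\right)$, which is precisely the defining sum of $\ELBO{x}$ in \Cref{theorem:vae-elbo}. Taking a further expectation over $X \sim D_X$ yields $\E{W_i} = \E{\ELBO{X}}$, and since the conditional law of $Z_i$ given $X_i$ does not depend on $i$, this holds for every $i$.

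Next I would establish the joint i.i.d.\ structure and conclude the almost-sure statement. The data $X^n$ is i.i.d.\ $D_X$, and the noise variables $U_i$ are drawn i.i.d.\ from a source independent of the data. Conditioned on the state $S_{i-1}$ entering step $i$ and on $X_i = x$, the quantity $S_{i-1} + U_i$ is uniform on an interval of length $N_Z$, so \Cref{lemma:ans-sampling} gives that $Z_i$ is distributed as $Q_{Z \g X}(\cdot \g x)$ \emph{regardless of the value of} $S_{i-1}$; hence $Z_i$ is conditionally independent of $(X^{i-1}, Z^{i-1}, S_{i-1})$ given $X_i$. Combined with $X_i$ being independent of the past, induction on $i$ shows $(X_i, Z_i)_{i \ge 1}$ is i.i.d.\ with the stated law, so the $W_i$ are i.i.d.\ Because the model distributions are quantized with all masses $\ge 1$ (\Cref{def:quantized-pmf}), the ratio inside $W_i$ lies between $(N_Z N_X)^{-1}$ and $N_Z$, so $W_i$ is bounded, hence integrable; the strong law of large numbers \cite{durrett2019probability} then gives $\frac1n\sum_{i=1}^n W_i \to \E{W_i} = \E{\ELBO{X}}$ with probability $1$.

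The main obstacle is the joint-independence step: the shared ANS state $S_{i-1}$ is a complicated deterministic function of the entire history, so a priori $Z_i$ could be correlated with $Z^{i-1}$. The point that makes the argument go through is that injecting fresh uniform noise $U_i$ renders $S_{i-1} + U_i$ uniform on a length-$N_Z$ window for \emph{any} fixed $S_{i-1}$, which is exactly the hypothesis of \Cref{lemma:ans-sampling} and is what decouples $Z_i$ from the past; everything after that is routine. It is also worth remarking that the lemma concerns the idealized log-ratios rather than the realized ANS state changes, so the renormalization and large-state caveats discussed after \Cref{theorem:optimality-ans} are not needed here.
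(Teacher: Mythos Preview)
Your proposal is correct and follows the same underlying reasoning the paper sketches in the discussion immediately preceding the lemma: inject i.i.d.\ uniform noise via \eqref{eq:ans-add-noise-to-state} so that \Cref{lemma:ans-sampling} forces $Z_i \mid X_i = x$ to have law $Q_{Z \g X}(\cdot \g x)$ irrespective of the incoming state, conclude that the pairs $(X_i,Z_i)$ are jointly i.i.d., and then invoke the strong law of large numbers. The paper's actual proof environment is much terser and in fact argues a somewhat different point---it uses \Cref{theorem:optimality-ans} to explain why the log-ratio coincides with the per-step ANS state change in the large-state regime---leaving the probabilistic claims (the expectation identity and the almost-sure limit) to the preceding prose. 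You instead work directly with the log-ratios and supply the details the paper omits: the conditional-independence argument that decouples $Z_i$ from the history via the fresh noise $U_i$, and the boundedness of $W_i$ from the $p_i \ge 1$ requirement in \Cref{def:quantized-pmf}, which justifies applying the strong law. Your closing remark that the state-change interpretation is not needed here is exactly right; it is the paper that takes the detour through \Cref{theorem:optimality-ans}, not you.
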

\begin{proof}
    For large state values $s \in \Naturals$ the change in $\ceil{\log(s)}$ is equal to the negative log-probability of the quantized probability distribution in use (\Cref{theorem:optimality-ans}).
    Tallying the change due to $\decode$ and both $\encode$ operations gives the ratio of probability distributions shown in \Cref{lemma:bb-ans-rate}.
\end{proof}

\subsection{Initial Bits and State Depletion}\label{sec:bbans-initial-bits}
The first step in encoding a symbol with BB-ANS is to sample/decode the latent, contributing bit savings to the rate due to the decrease in size of the ANS state.
For any sequence $X^n$, the savings can be seen from \Cref{lemma:bb-ans-rate} to be,
\begin{equation}
    -\sum_{i=1}^n \log Q_{Z \g X}(Z_i \g X_i).
\end{equation}
In expectation, and as $n \rightarrow \infty$, this quantity converges to the conditional entropy of the approximate posterior: $H(Q_{Z \g X})$, when normalized by $n$.
This assumes the distribution of the sampled latent will be equal to the approximate posteriors, $Z_i \sim Q_{Z \g X}(\cdot \g X_i)$.
For this to be achieved with the addition of uniform noise to the state, as shown in \Cref{eq:ans-add-noise-to-state}, the ANS state must be above a minimal value at all times.
More specifically, we must guarantee the state is at least \changed{$N_Z \div 2$} for all $i$ with probability 1.
When this condition is not met it is common to refer to the state as being ``depleted of randomness" \cite{townsend2019practical}, as it can not be used to sample without first increasing it to the minimal value artificially, increasing the rate as well.

As of the writing of this manuscript there is no known method that is guaranteed to avoid the initial bits problem.
Instead, the method adopted here is to decompose the latent and observation into sub-variables.
The latent is replaced for a sequence of latents of known length,
\begin{align}
    Z_i \defeq (Z_i^{(1)}, \dots, Z_i^{(k_Z)}),
\end{align}
as well as the observation
\begin{align}
    X_i \defeq (X_i^{(1)}, \dots, X_i^{(k_X)}).
\end{align}
To perform source coding with BB-ANS requires that the approximate posterior, conditional probability, and priors be defined for the sequence elements.
To encode $X_i$ with BB-ANS, first $Z_i^{(1)}$ is decoded with the approximate posterior, conditioned on $X_i$, from the ANS state.
Then, $X_i^{(1)}$ is encoded conditioned on $Z_i^{(1)}$.
Next, $Z_i^{(2)}$ is decoded, conditioned on $(Z_i^{(1)}, X_i)$, and $X_i^{(2)}$ is encoded conditioned on $(Z_i^{(1)}, Z_i^{(2)})$.
This algorithm continues until $X_i$ is encoded.

The negative log-probability of the sub-latent sequence is the same as that of the latent $Z_i$, implying the reduction in the ANS state will be the same in the large state regime.
This technique reduces the probability of the initial bits problem as it increases the ANS state, by encoding a sub-observation, before each sampling each sub-latent.
In \Cref{chapter:roc}, \Cref{chapter:rcc}, and \Cref{chapter:rec} we give examples of this technique and show how it can be simplified to reduce the number of required distributions and computational as well as memory resources.

\changed{
    \subsection{Discussion}
    The key aspect of ANS that enables bits-back coding is its ``stack-like" nature, in the sense that the first symbol decoded from the state is the last symbol to be encoded.
    The decoder implements the exact inverse of the steps as done during encoding, which is only possible due to this first-in-last-out mechanism of the stack.
    This precludes the use of more traditional entropy coders such as Arithmetic Coding \cite{witten1987arithmetic,cover1999elements}, which are ``queue-like''; the first element encoded is the first to be decoded.

    It is possible to use bits-back with prefix-free codes, such as Huffman Codes \cite{huffman1952method}, by constructing a code with the prior $P_Z$, as well as codes with the conditional probability $P_{X \g Z}(\cdot \g z)$, for every $z \in \Z$, and, similarly, with the approximate probability $Q_{Z \g X}(\cdot \g x)$, for every $x \in \X$.
    Similar to BB-ANS, the encoder initially samples random bits independently from a Bernoulli distribution with parameter $\frac{1}{2}$, and uses this to sample a latent value $z_1$ with the approximate posterior probability distribution $Q_{Z \g X}(\cdot \g x_1)$; where $x_1$ is the instance of the first element of the sequence to be encoded.
    This results in savings of approximately the negative log-probability under the approximate posterior.
    The element $x_1$ is then encoded with the code for $P_{X \g Z}(\cdot \g z_1)$, followed by the sampled latent $z_1$ with the code constructed from the prior $P_{Z}$.%
    \footnote{\changed{A full example of Bits-back with Huffman Coding, in Python, is available at \url{https://gist.github.com/dsevero/8e7c38b44953964d3b9873b6bd96d9b2}}}

    Empirically, in the experiments considered in this thesis, the depletion of the ANS state is rarely observed.
    In all experiments run in this manuscript, the depletion occurs only at the very first sampling step (i.e., to sample the first latent value), as expected.
    Nonetheless, since the encoder knows exactly what the decoder will observe, as the decoder implements the exact inverse steps, it is possible to switch to a backup compression scheme if needed.
    In the worst case, if the state is completely depleted at all steps, then the savings relative to decoding with the approximate posterior will not occur (i.e., the term in the denominator of \Cref{lemma:bb-ans-rate}).
    The achieved rate, in the large ANS state regime, will therefore be that of the cross-entropy between the product distributions $P_X \cdot Q_{Z \g X}$ and $P_{X \g Z} \cdot P_Z$.
}%

\chapter{Combinatorial Objects}\label{chapter:combinatorial-objects}
This manuscript makes extensive use of many notions of combinatorial objects such as sets, multisets, permutations, orderings, and others.
This chapter makes the basic definitions used throughout the manuscript and helps clarify the notation used.

\section{Equivalence Relations}
\begin{definition}[Equivalence Relation]\label{def:equivalence-relation}
    An equivalence relation on an arbitrary set $\X$ is any binary relationship between elements of $\X$ that is
    \begin{itemize}
        \item Reflexive: $x \sim x$,
        \item Symmetric: if $x \sim y$, then $y \sim x$,
        \item Transitive: if $x \sim y$ and $y \sim z$, then $x \sim z$,
    \end{itemize}
    for any $x,y,z \in \X$.
\end{definition}
Equivalence relations are used in the later chapters to define equivalence amongst sequences representing the same combinatorial object.
Any equivalence relation partitions its set into disjoint subsets, where two elements are in the same set if, and only if, they are equivalent.
\begin{definition}[Quotient Set]
    Given a set $\X$ paired with an equivalence relation $\sim$, the quotient set $\quotient{\X}{\sim}$ is a partition of $\X$ into disjoint sets $\X_i$ such that $x,y \in \X_i$ if, and only if, $x \sim y$.
    The subsets $\X_i$ are known as \emph{equivalence classes}.
\end{definition}
\begin{example}[Equivalence Relation]\label{example:equivalence-relation}
    For any $x, y \in \Naturals$, let $x \sim y$ if, and only if, $x$ and $y$ have equal parity (i.e., are both even or odd).
    This relationship is clearly an equivalence relation.
    The quotient set $\quotient{\Naturals}{\sim}$ is composed of two sets, one of all even integers, and the other of all odd integers.
\end{example}
\begin{definition}[Finer Equivalence Relation]\label{def:finer-equivalence-relation}
    Let $\sim$ and $\sim^\prime$ be equivalence relations on $\X$. Then, $\sim^\prime$ is said to be \emph{finer} than $\sim$ if any equivalence class in $\quotient{\X}{\sim^\prime}$ is a strict subset of some equivalence class in $\quotient{\X}{\sim}$; and \emph{as fine as} when equal.
\end{definition}
Finer equivalence relations can be constructed from an initial equivalence relation by including the condition of the latter into the definition of the former, as the next example shows.
\begin{example}[Finer Equivalence Relation]
    For any $x, y \in \Integers_{\neq 0} = \Integers\setminus\{0\}$, let $x \sim y$ if, and only if, $x$ and $y$ have the same sign  (i.e., both are positive or negative).
    This relationship is clearly an equivalence relation.
    Define $x \sim^\prime y$ if, and only if, $x \sim y$ and $x$ and $y$ have the same parity.
    These relations partition the set $\Integers$,
    \begin{align}
        \quotient{\Integers}{\sim}        & = \{\Integers_{-}, \Integers_{+}\}                                                                                       \\
        \quotient{\Integers}{\sim^\prime} & = \{\Integers_{-}^{\text{even}}, \Integers_{-}^{\text{odd}}, \Integers_{+}^{\text{even}},  \Integers_{+}^{\text{odd}}\}.
    \end{align}
    Since $\Integers_{-}^{\text{even}}, \Integers_{-}^{\text{odd}} \in \Integers_{-}$, and $\Integers_{+}^{\text{even}}, \Integers_{+}^{\text{odd}} \in \Integers_{+}$, then $\sim^\prime$ is finer than $\sim$.
\end{example}

\section{Permutations on Sets}
\begin{definition}[Permutations on Sets]\label{def:permutations-on-sets}
    A \emph{permutation} on $n$ elements is a bijective function $\sigma\colon [n] \mapsto [n]$ used to define arrangements of elements from arbitrary sets.
    Permutations are usually expressed in one-line notation,
    \begin{equation}
        \sigma \defeq [i_1, i_2, \dots, i_n], \text{ where } \sigma(j) = i_j.
    \end{equation}
    There are $n! = n \cdot (n-1) \cdot (n-2) \cdots 1$ unique permutations on $n$ elements.
\end{definition}
\begin{example}[Permutations on Sets]
    Three permutations on $2$, $3$, and $4$ elements are shown below together with their one-line notations.
    The second is known as the \emph{identity} permutation, which maps elements to themselves, $\sigma_2(i) = i$.
    \begin{align}
         & \sigma_1 \defeq [2, 1] &  & \sigma_2 \defeq [1, 2, 3] &  & \sigma_3 \defeq [3, 1, 2, 4] \\
         & \sigma_1(1) = 2        &  & \sigma_2(1) = 1           &  & \sigma_3(1) = 3              \\
         & \sigma_1(2) = 1        &  & \sigma_2(2) = 2           &  & \sigma_3(2) = 1              \\
         &                        &  & \sigma_2(3) = 3           &  & \sigma_3(3) = 2              \\
         &                        &  &                           &  & \sigma_3(4) = 4
    \end{align}
\end{example}

\begin{definition}[Cycle]
    A \emph{cycle} $(c_1, \dots, c_k)$, of a permutation $\sigma$, is the sequence constructed from the repeated application of $\sigma$, to some element $c_1 \in [n]$, until $c_1$ is recovered, i.e., $c_{k+1} = c_1$,
    \begin{equation}
        c_i \defeq \sigma(c_{i-1}), \text{ for } i \geq 2.
    \end{equation}
    It is common to drop the commas in the notation of cycles.
    The \emph{size} of a cycle is the number of elements it contains, i.e., $k$.
\end{definition}
\begin{example}[Cycles]
    The cycles of $\sigma \defeq [3, 1, 2, 5, 4]$, for each $c_1 \in [5]$, are shown below.
    \begin{table}[!h]
        \centering
        \begin{tabular}{cccccc}
            $c_1$ & $1$         & $2$         & $3$         & $4$      & $5$      \\
            cycle & $(1\ 3\ 2)$ & $(2\ 1\ 3)$ & $(3\ 2\ 1)$ & $(4\ 5)$ & $(5\ 4)$
        \end{tabular}
    \end{table}
\end{example}
\begin{definition}[Cycle Notation]
    A permutation can be represented by its cycles with the following procedure.
    Pick any element in $[n]$ and compute its cycle by applying $\sigma$ successively.
    Next, choose another element in $[n]$, that did not show up in any of the previously computed cycles, and compute its cycle.
    Repeat this procedure until all elements appear in exactly one cycle.
    Concatenate all cycles to form the representation,
    \begin{equation}
        \sigma \defeq (c^1_1\ \dots\ c^1_{n_1})(c^2_1\ \dots\ c^2_{n_2}) \dots (c^\ell_1\ \dots\ c^\ell_{n_\ell}),
    \end{equation}
    where $c^i_j$ is the $j$-th element in the $i$-th cycle out of $\ell$.
\end{definition}
\begin{example}[Cycle Notation]\label{example:cycle-notation}
    $\sigma \defeq [3, 1, 2, 5, 4]$ can be represented by any of the following $12$ combinations,
    \begin{align}
        (1\ 3\ 2)(4\ 5) &  & (2\ 1\ 3)(4\ 5) &  & (3\ 2\ 1)(4\ 5) &   \\
        (1\ 3\ 2)(5\ 4) &  & (2\ 1\ 3)(5\ 4) &  & (3\ 2\ 1)(5\ 4) &   \\
        (4\ 5)(1\ 3\ 2) &  & (4\ 5)(2\ 1\ 3) &  & (4\ 5)(3\ 2\ 1) &   \\
        (5\ 4)(1\ 3\ 2) &  & (5\ 4)(2\ 1\ 3) &  & (5\ 4)(3\ 2\ 1) & .
    \end{align}
\end{example}

\begin{theorem}[Foata's Bijection \cite{foata1968netto}]\label{theorem:foata}
    The following sequence of operations defines a bijection between permutations on $n$ elements.
    Write the permutation in cycle notation such that the smallest element of each cycle appears first within the cycle.
    Order the cycles in decreasing order based on the first, i.e., smallest, element in each cycle.
    Remove all parenthesis to form the one-line notation of the output permutation.
    Cycles can be recovered by scanning from left to right and keeping track of the smallest value.
\end{theorem}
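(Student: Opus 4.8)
The plan is to prove the map is a bijection by exhibiting an explicit two-sided inverse; this inverse is exactly the ``scan from left to right, tracking the smallest value'' procedure named in the statement, so the last sentence of the theorem comes for free. Write $\Phi$ for the map. First I would check $\Phi$ is well-defined. The cycles of a permutation $\sigma$ on $[n]$ partition $[n]$, each cycle contains a unique smallest element, so requiring that element to appear first pins down the internal rotation of every cycle; and since distinct cycles have distinct smallest elements, listing the cycles in decreasing order of their first elements pins down their order. Deleting all parentheses from the resulting \emph{canonical cycle notation} yields a word $w = w_1 \cdots w_n$ which is a rearrangement of $[n]$, i.e., the one-line notation of a well-defined permutation $\Phi(\sigma)$ of $[n]$.

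The heart of the argument is to locate the erased cycle boundaries inside $w$. I claim position $i$ begins a new cycle exactly when $w_i$ is a \emph{left-to-right minimum}, meaning $w_i < w_j$ for all $j < i$ (position $1$ is vacuously such a minimum and begins the first cycle). For one direction, suppose $w_i$ is the leading element of a cycle $C$. Then positions $1, \dots, i-1$ hold exactly the entries of the cycles listed before $C$; each such cycle has a larger leading element, which is its smallest element, so all of its entries exceed $w_i$. Hence $w_i$ is a left-to-right minimum. For the converse, if $w_i$ is a non-leading entry of its cycle $C$ with leader $a$, then $a < w_i$ and $a$ occurs at a position before $i$, so $w_i$ is not a left-to-right minimum. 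This characterization uses both conventions of the canonical form: ``smallest first within a cycle'' supplies $a < w_i$, while ``cycles in decreasing order of leaders'' supplies that earlier positions hold only entries larger than $w_i$.

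With the characterization established, define $\Psi$ on one-line permutations $w$ by inserting a cycle boundary immediately before each left-to-right minimum, grouping $w$ into the resulting blocks, and reading the outcome as a product of disjoint cycles (a permutation, since the blocks partition $[n]$). I would then check $\Psi(w)$ is already in canonical cycle notation: the running minimum of the prefixes of $w$ is constant on each block and equals that block's leading entry, so the leading entry of a block is its smallest entry; and the successive left-to-right minima are strictly decreasing, so the blocks are listed in decreasing order of leading entry. Therefore $\Phi(\Psi(w))$ merely erases these parentheses and returns $w$, while $\Psi(\Phi(\sigma))$ re-inserts parentheses precisely at the original cycle boundaries (by the characterization) and returns $\sigma$. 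Hence $\Phi$ and $\Psi$ are mutually inverse and $\Phi$ is a bijection.

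The main obstacle is the if-and-only-if characterization of cycle boundaries as left-to-right minima; everything else is bookkeeping. It is genuinely the place where both normalizing conventions matter, and dropping either one destroys invertibility. If one preferred, the proof could instead be closed by a counting shortcut, since the domain and codomain both have $n!$ elements so that injectivity alone suffices, but spelling out $\Psi$ is cleaner and directly yields the recovery claim.
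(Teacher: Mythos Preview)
Your proof is correct and complete. The paper, however, does not supply its own proof of this theorem: it states the result with a citation to Foata's original work and follows it only with two worked examples illustrating the map and its inverse. There is therefore nothing to compare against beyond noting that your argument is the standard one---exhibiting the left-to-right-minimum characterization of cycle boundaries and using it to build an explicit two-sided inverse---and that it matches the recovery procedure the paper sketches informally in the theorem statement and in its first example.
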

Writing a permutation in the form described by Foata's Bijection, before removing parenthesis, is known as ``canonical cycle notation".
\begin{example}[Foata's Bijection - 1]
    $\sigma \defeq [3, 1, 2, 5, 4]$, from \Cref{example:cycle-notation}, in canonical cycle notation is $(4\ 5)(1\ 3\ 2)$.
    The image under Foata's bijection, expressed in one-line notation, is $\pi = [4, 5, 1, 3, 2]$.
    To recover the cycle notation of $\sigma$ from $\pi$ we traverse from left-to-right keeping track of the smallest element.
    A change in the smallest element during the traversal marks the beginning of a new cycle.
\end{example}

\begin{example}[Foata's Bijection - 2]
    Foata's permutation for the identity $\sigma \defeq [1, 2, \dots, n] = (n)(n-1)\dots (1)$ is the reversal $\pi = [n, n-1, \dots, 1]$.
\end{example}

\section{Total Orderings and Ranks}\label{sec:orderings-ranks}
A set does not impose an ordering between its elements.
Order between elements of a set can be introduced through the notion of a \emph{total ordering}.
A total ordering can be specified by writing the elements of the set in a desired order and defining $x \leq y$ if $x = y$ or $x$ comes before $y$ in the sequence.
Since every sequence defines a unique ordering, there are $n!$ possible orderings for elements of a set of size $n$.
\begin{definition}[Total Order]\label{def:total-order}
    A \emph{total order} on an arbitrary, finite, set $\X$ is a binary relation $\leq$ that is reflexive ($x \leq x$), transitive ($x \leq y$ and $y \leq z$ implies $x \leq z$), antisymmetric ($x \leq y$ and $y \leq x$ implies $x = y$), and at least one of $x \leq y$ or $y \leq x$ is true, for any pair $x, y \in X$.
\end{definition}
\begin{example}[Total Order - 1]
    For $\X = \Naturals$, the usual $\leq$ between integers is a total order.
\end{example}
\begin{example}[Total Order - 2]
    Let $\leq$ be a binary relation on $\binstrs$ such that $x \leq y$, if $x$ has a shorter length than $y$ or, if their lengths are equal, if the position of the first $1$ in $x$ is larger than or equal to the position of the first $1$ in $y$.
    Then, $\leq$ is the total order known as the \emph{lexicographic order}.
\end{example}
\begin{definition}[Ordered Set]
    An \emph{ordered set} is a pair $(\X, \leq)$, where $\X$ is a set and $\leq$ a total ordering over $\X$.
    Alternatively, an ordered set is a pair $(\X, x^n)$, where $\abs{\X} = n$, and the sequence $x^n$ is an ordering of all elements of the set; $x_i \in \X$, $x_i \neq x_j$ for $i \neq j$.
\end{definition}
\begin{definition}[Rank]
    Given a set and a total ordering, the \emph{rank} of an element is its position in the sequence defining the total order.
\end{definition}
\begin{example}[Ordered Set]
    The natural ordering on characters $\X \defeq \{a, b, c, \dots, z\}$ is defined by the sequence $(a, b, c, \dots, z)$. The rank of $c$ is $3$.
\end{example}
Permutations can be redefined as operators acting on a set equipped with an ordering.
Applying a permutation yields a new total order defined by acting on the order's sequence.
\begin{definition}[Permutations Act on Ordered Sets]
    A permutation $\sigma$ acts on a totally ordered set $(\X, x^n)$ by mapping it to $(\X, y^n)$, where, $y_i \defeq x_{\sigma(i)}$.
\end{definition}

\section{Permutations on Multisets}
A permutation defines an ordering of a set, a collection of \emph{unique} elements, via a bijection.
This definition can be extended to \emph{multisets}, which are sets that allow repeated elements.
Following \Cref{sec:orderings-ranks} we define permutations on multisets of arbitrary elements, without restricting to contiguous sub-intervals of integers, assuming a total order is given implicitly from context.
\begin{definition}[Multiset]\label{def:multisets}
    A \emph{multiset} $\M$ is a set equipped with multiplicities $\M(x) = n_x \geq 1$, indicating the amount of repeats for each element $x$ in the set.
    The \emph{size} of a multiset, $\abs{\M}$, is equal to the sum of multiplicities $\sum_{x} n_x$.
\end{definition}
\begin{example}[Multiset]
    A multiset with multiplicities $n_a \defeq 2, n_b \defeq 3, n_c \defeq 1$ can be represented in (multi)set notation as
    \begin{equation}
        \M = \{a, a, b, b, b, c \}.
    \end{equation}
\end{example}

\begin{definition}[Permutations on Multisets]\label{def:permutations-on-multisets}
    A \emph{multiset permutation} on some multiset $\M$, with multiplicities $n_x$, is a \emph{partition} of the interval $[\abs{\M}]$.
    Each subset indicates the position in the permutation of the $n_x$ repeats of element $x \in \M$.
    This is defined by,
    \begin{align}
         &  & \sigma\colon \M \mapsto 2^{[\abs{\M}]},
         &  & \sigma(x) \cap \sigma(x^\prime) = \emptyset, \text{ for } x \neq x^\prime,
         &  & \abs{\sigma(x)} = n_x,
         &  & \bigcup_{x \in \M} \sigma(x) = [\abs{\M}].
    \end{align}
    The one-line notation of a multiset permutation is the sequence,
    \begin{align}
        \sigma = [i_1, \dots, i_n],
    \end{align}
    similar to \Cref{def:permutations-on-sets}, where $\sum_{j=1}^n \1\{i_j = x \} = n_x$.
    Similar to sets, a total order equipped with a multiset, $(\M, \leq)$, can be defined by a sequence of elements in $\M$.
    When clear from context, the sequence can also be made up of the ranks.
\end{definition}
\begin{example}[Multiset Permutation]
    For the multiset
    \begin{align}
        \M \defeq \{a, a, a, b, b, c \},
    \end{align}
    $\sigma_1$ and $\sigma_2$ are multiset permutations,
    \begin{align}
        \sigma_1(a) & = \{2, 4, 6\} & \sigma_2(a) & = \{1, 2, 3\} \\
        \sigma_1(b) & = \{1, 3\}    & \sigma_2(b) & = \{4, 5\}    \\
        \sigma_1(c) & = \{5\}       & \sigma_2(c) & = \{6\},
    \end{align}
    with one-line notations,
    \begin{align}
        \sigma_1 = [b, a, b, a, c, a] \quad\quad\quad
        \sigma_2 = [a, a, a, b, b, c].
    \end{align}
    $\sigma_2$ defines the total ordering corresponding to the usual order over alphabetical symbols.
\end{example}
A set permutation is a multiset permutation where all multiplicities are equal to $1$.
Elements are indistinguishable if they are equal, which reduces the count of total possible permutations when comparing to set permutations.
\begin{definition}[Multinomial Coefficient]\label{def:multinomial-coeff}
    The number of possible multiset permutations for a multiset $\M$, with multiplicities $n_x$, is known as the \emph{multinomial coefficient},
    \begin{equation}
        \binom{n}{n_1 \dots n_k} \defeq \frac{n!}{\prod_{i=1}^k n_i!} \leq n!,
    \end{equation}
    with equality when all multiplicities are equal to one.
\end{definition}

\section{Clusters and Partitions}
\emph{Clustering} is the act of partitioning a set of arbitrary elements into multiple pair-wise disjoint groups, called \emph{clusters}.
Clusters are void of labels and are defined only by the elements they contain.
\begin{definition}[Clustering / Partitioning]\label{def:clustering}
    A \emph{clustering} of objects from a set $\X$ is defined by an indicator function,
    \begin{align}
        \pi\colon \X\times\X & \mapsto \{0, 1\},        \\
        \pi(x, x^\prime)     & \defeq \pi(x^\prime, x), \\
        \pi(x, x)            & \defeq 1,
    \end{align}
    where $\pi(x, x^\prime) = 1$ if $x, x^\prime \in \X$ are co-located in the same cluster.
    The relationship must also be transitive, meaning if $\pi(x, z) = \pi(x, y) = k$, then $\pi(z, y) = k$.
    Equivalently, a clustering is a \emph{partitioning} of the set $\X$ into disjoint subsets, corresponding to each cluster.
    Both interpretations are used interchangeably throughout this manuscript.
    The number of different ways to partition a set of size $n$ is equal to the $n$-th Bell Number \cite{aitken1933problem}, which grow exponentially fast in $n$, but is upper bounded by $n!$.
\end{definition}
\begin{example}[Clustering / Partitioning]
    All possible partitionings of $\X \defeq \{a, b, c\}$ are shown below.
    \begin{align}
        \{a\} \{b\} \{c\} \quad \{a\} \{b, c\} \quad  \{b\} \{a, c\} \quad   \{c\} \{a, b\} \quad  \{a, b, c\}
    \end{align}
    Note that $\{a\} \{b, c\}$ and $\{b, c\}\{a\}$ are the same partitioning.
\end{example}

\section{Graphs}\label{sec:combinatorial-objects-graphs}
Graphs are defined by a collection of vertices connected by edges representing a network structure.
The set of vertices can be any arbitrary set and is referred to as the \emph{vertex set}.
Edges are defined as collections of vertices and can be both ordered and unordered depending on the type of graph.
Our definition of graphs builds on the following definitions for \emph{vertex} and \emph{edge} sequences, discussed below.
First, we informally define the graph types used throughout this manuscript.

A \emph{directed graph} on a vertex set $V$ is a collection of vertex pairs $(v, w) \in V^2$ known as edges.
The order between vertices in an edge defines the orientation of the edge in the network, pointing from $v$ to $w$.

An \emph{undirected graph} on a vertex set $V$ can be constructed by removing the order information from the edges of a directed graph.
Each edge is a multiset, $\{v , w\}$.
A \emph{simple graph}, directed or undirected, is a graph with no self-edges, i.e., $v \neq w$ for every edge.

\begin{example}[Directed, Undirected, and Simple Graphs]
    \Cref{fig:combinatorial-objects-graphs} shows an example of directed and undirected graphs.

\end{example}
\begin{figure}[!h]
    \centering
    \begin{tikzpicture}
        \graph [math nodes, nodes={draw,circle}] {
        a -> {b, c} -> d;
        a ->[loop above] a;
        };
        \quad\quad
    \end{tikzpicture}
    \quad\quad
    \begin{tikzpicture}
        \graph [math nodes, nodes={draw,circle}, every loop/.style={}] {
        a -- {b, c} -- d;
        };
    \end{tikzpicture}
    \caption{A non-simple directed graph (left) and simple undirected graph (right).}
    \label{fig:combinatorial-objects-graphs}
\end{figure}
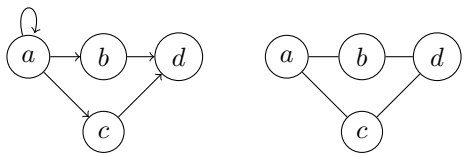


Graphs can be specified by either the vertex sequence or edge sequence,
\begin{align}
    (v_1, v_2, \dots, v_{2m}) \equiv (e_1, e_2, \dots, e_m),
\end{align}
where $e_1 = (v_1, v_2), e_2 = (v_3, v_4)$, and so on, $e_i = (v_{2 i-1}, v_{2 i})$, for directed graphs; and the equivalent set definition for the undirected variant: $e_i = \{v_{2 i-1}, v_{2 i}\}$.

There are multiple sequences that can represent the same graph.
Swapping the position of any two edges, and, for undirected graphs, swapping the position of vertices that are within the same edge, will not change graph.
This leads to the definition of graphs as \emph{equivalence classes} of vertex sequences.
\begin{example}[Graphs as Equivalence Classes]
    The graphs in \Cref{fig:combinatorial-objects-graphs} can be represented by the following sequences.
    The parenthesis within edges has been removed to highlight the edge set for the non-simple, direct graph.
    \begin{center}
        \begin{tabular}{c c}
            Non-simple, directed   & Simple, undirected                         \\
            $(aa, ab, ac, bd, cd)$ & $(\{a, b\}, \{a, c\}, \{b, d\}, \{c, d\})$ \\
            $(ab, aa, ac, bd, cd)$ & $(\{a, c\}, \{a, b\}, \{b, d\}, \{c, d\})$ \\
            $(aa, ac, ab, bd, cd)$ & $(\{a, c\}, \{a, b\}, \{c, d\}, \{b, d\})$ \\
            $(aa, ab, ac, db, cd)$ & $(\{a, b\}, \{c, d\}, \{b, d\}, \{a, c\})$ \\
            $\dots$                & $\dots$
        \end{tabular}
    \end{center}
\end{example}

\begin{definition}[Directed Graph]\label{def:rec-directed-graphs-equivalence-class}
    A \emph{directed graph} of $m$ edges is a set of equivalent sequences, with equivalence denotes by $\sim$, where $v^{2m} \sim w^{2m}$ if their edge sequences are equal up to a permutation,
    \begin{align}
        \left(e^{(v)}_1, \dots, e^{(v)}_m\right) = \left(e^{(w)}_{\sigma(1)}, \dots, e^{(w)}_{\sigma(m)}\right),
    \end{align}
    where $e^{(v)}_i = (v_{2i-1}, v_{2i}), e^{(w)}_i=(w_{2i-1}, w_{2i})$ are edges and $\sigma$ is a permutation over $m$ elements.
\end{definition}
\begin{definition}[Undirected Graph]\label{def:rec-undirected-graphs-equivalence-class}
    An \emph{undirected graph} is a set of equivalent sequences where $v^{2m} \sim w^{2m}$ if they are equal up to a permutation of edges, similar to a directed graph, and permutation of vertices within an edge.
\end{definition}
\chapter{Random Order Coding}\label{chapter:roc}
Lossless compression algorithms typically preserve the ordering of compressed symbols in the input sequence.
However, there are data types where order is not meaningful, such as collections of files, rows in a database, nodes in a graph, and, notably, datasets in machine learning applications.
Formally, these may be expressed as a mathematical object known as a \emph{multiset}: a generalization of a set that allows for repetition of elements.

Compressing a multiset with an entropy coder is possible if we somehow order its elements and communicate the corresponding ordered sequence.
However, unless the order information is somehow removed during the encoding process, this procedure will be sub-optimal, because the order contains information and therefore more bits are used to represent the source than are truly necessary.

The fundamental limits of multiset compression are well understood and were first investigated by \cite{varshney2006}.
The information content of a non-trivial multiset is strictly less than that of a sequence with the same elements, by the number of bits required to represent an ordering, or permutation, of the elements.
However, previous optimal-rate algorithms for multiset compression have computation time which scales linearly with the size of the alphabet from which elements are drawn \cite{Steinruecken2016-oy}.

In this chapter, we show how to compress multisets of statistically exchangeable symbols at an optimal rate, with computational complexity independent of the alphabet size, by converting an existing algorithm for sequences into one for multisets.
This enables us to compress fixed-size multisets of independent and identically distributed (i.i.d.) symbols\footnote{Any i.i.d. sequence is also necessarily exchangeable.} with arbitrarily large alphabets, including multisets of images, where the alphabet size scales exponentially with the number of pixels; and of strings, where it scales exponentially with string length.

The key insight is to avoid encoding the multiset directly, and instead encode a random sequence containing the same elements as the multiset, in $\Omega(n)$ steps, where $n$ is the sequence and multiset size.
This can be done by extending the bits-back coding algorithms \cite{frey1996free, frey1997,townsend2019practical} discussed in \Cref{chapter:bits-back}, where encoding and decoding operations are interleaved during compression and decompression.
During compression, symbols are sampled without replacement from the multiset and encoded sequentially in the order they are sampled.
Sampling is done using an ANS decoder, with the already compressed information used as the random seed for the sample via the ANS state.
This procedure is invertible, because the bits used as the random seed can be losslessly recovered during decompression, using an ANS encoder.
The sampling step consumes bits, reducing the message length by exactly the number of bits needed to represent a permutation.

Our method is attractive in settings where the alphabet size is large, and the symbol distribution assigns most probability mass to a sparse subset of the alphabet.
We anticipate that databases and other unordered structured data, in formats like JavaScript Object Notation (JSON), might be realistic use-cases, while for multisets of large objects, such as images or video files, the savings are marginal.
We investigate these settings experimentally in \Cref{sec:roc-experiments}.

Although our method works for any source of exchangeable symbols, the following sections assumes symbols are i.i.d. to simplify the exposition.

\section{Related Work}\label{sec:roc-related-work}
To the best of our knowledge, there are no previous works that present a method which is both computationally feasible and rate-optimal for compressing multisets of exchangeable symbols with large alphabets.

The fundamental limits on lossless compression of multisets were investigated in \cite{varshney2006}.
For finite alphabets, a sequence can be decomposed into a multiset $\M$ (also known as a \emph{type}) and a permutation conditioned on the multiset.
With this, the authors are able to prove a combinatorial bound on the entropy: $H(\M) \leq \Omega(\abs{\X} \cdot \log\abs{\M})$.
However, a practical algorithm for achieving this bound is not presented.

A rate-optimal algorithm for compressing multisets of i.i.d\ symbols was put forth in \cite{Steinruecken2016-oy}.
There, the information content of the multiset is decomposed recursively over the entire alphabet, allowing it to be interfaced to an entropy coder known as Arithmetic Coding \cite{witten1987arithmetic}.
Alternatively, this can also be achieved through Poissonification, even for arbitrary-sized multisets \cite{steinruecken2014b, yang2014compression, yang2017minimax}.
Although rate-optimal, the computational complexity of these methods scales linearly with alphabet size.

Another approach \cite{gripon2012compressing, Steinruecken2014-zs, reznik2011coding} is to convert multisets of sequences (e.g., cryptographic hashes) to an order-invariant data structure, such as a tree, that is losslessly compressible.
These methods can compress arbitrary multisets of i.i.d.\ symbols by first encoding each symbol with a prefix-free code and constructing the order-invariant data-structure from the multiset of resulting code-words.
If encoding each symbol individually is optimal (e.g., Huffman coding for dyadic sources), then the method presented in \cite{Steinruecken2014-zs} is optimal and has the same complexity as our method.
However, this is not generally the case, and the overhead of compressing each symbol with a prefix-free code may dominate.
Our method can be seen as a generalization of \cite{Steinruecken2014-zs} that allows symbols to be compressed in sequence, effectively removing the overhead.

Compressing a multiset is equivalent to compressing a \emph{sequence with known order} \cite{steinruecken2014b}, as both objects have equal information content.
The transmitter and receiver can agree that sequences will be sorted before encoding, which removes the order information.
An optimal coding scheme could compress sorted symbols sequentially with an adaptive symbol distribution that accounts for sorting.
However, like the method in \cite{Steinruecken2016-oy}, this approach does not scale to multisets with large alphabet due to the high complexity cost of adapting the symbol distribution at each encoding step \cite{steinruecken2014b}.

An alternative to removing the order information is to transform the sequence in a way that aids compression.
A common choice of transformations is to sort and apply run-length encoding (RLE) \cite{robinson1967results}.
The output is a set of symbol-frequency pairs, known as \emph{run-length symbols}, which has the same information content as the multiset and an alphabet of size $\abs{\X} \cdot n$.
Compressing the set of run-length symbols as a sequence with known order is computationally intensive (as previously discussed), but can be done efficiently with our method.
However, the entropy of any transformation is lower bounded by that of the multiset, as we require that the multiset be recoverable.
Therefore, these techniques can not improve the compression rate, although they might outperform our method for carefully chosen examples.

A related but more complex setting is \emph{lossless dataset compression}, where the order between correlated examples in a dataset is irrelevant.
A sub-optimal, two-step algorithm, that first re-orders the dataset based on similarity between examples, followed by predictive coding is proposed in \cite{Barowsky2021-wg}.
A neural network is overfitted on the dataset to predict pixel values autoregressively based on the image to which the pixel belongs, as well as previously seen images.
This setting is equivalent to sub-optimally compressing a multiset of correlated symbols, while our method optimally compresses multisets of i.i.d.\ symbols.

\section{Problem Setting}
Given a sequence of i.i.d.\ discrete random variables $X^n = (X_1, \dots, X_n)$ with symbol alphabet $\X$, the goal is to design a lossless source code for the \emph{multiset} $\M = \{X_1, \dots, X_n\} = \multiset{X^n}$ with code-length, in the large state regime (see \Cref{theorem:optimality-ans}),
\begin{equation}
    \log \frac{1}{P_\M(\M)} = \log\frac{1}{P_{X^n}(x^n)} - \log M, \label{eq:multiset-info-content}
\end{equation}
where the constant $M$ is known as the multinomial coefficient (\Cref{def:multinomial-coeff}) of $\M$.
This coefficient is equal to the number of unique permutations of $X^n$
\begin{align}
    M = \abs{\left\{x^n \in \X^n: \multiset{x^n} = \M \right\}}
    = \frac{n!}{\prod_{x\in\X }\M(x)!} \leq n!,
\end{align}
with equality when all elements in $\M$ are distinct.
The term $\log M$ can be interpreted as the number of bits required to order the elements in $\M$ to create a sequence $X^n$.
For this reason this term is sometimes referred to as the \emph{order information} \cite{varshney2006}.

This problem can equivalently be seen as that of encoding the sequence $X^n$ with complete disregard to the order between symbols $X_i$.

We are concerned with both the rate and computational complexity of encoding and decoding. Although the method we present is applicable to any alphabet $\X$ and multiset size $\abs{\M}=n$, we are mainly interested in sources with large alphabets $\abs{\X} \gg n$. For example, if $X_i$ are images then $\abs{\X}$ will grow exponentially with the number of pixels.

To characterize the information content of $\M$, we can restrict ourselves to exchangeable distributions $P_{X^n}$ without loss of generality.
\begin{theorem}[Sufficiency of Exchangeability]
    For any distribution over sequences $P_{X^n}$, there exists an exchangeable distribution $\bar{P}_{X^n}$ that results in the same distribution over multisets $P_\M$.
\end{theorem}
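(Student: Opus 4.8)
The plan is to construct $\bar P_{X^n}$ by \emph{symmetrizing} $P_{X^n}$ over the symmetric group acting on coordinates, and then to verify the three properties one needs: that it is a valid probability distribution, that it is exchangeable, and that it induces the same multiset law $P_\M$. Concretely, for a sequence $x^n \in \X^n$ let $M(x^n)$ denote the multinomial coefficient of $\multiset{x^n}$, i.e.\ the number of distinct sequences $y^n$ with $\multiset{y^n} = \multiset{x^n}$, and define
\begin{align}
    \bar P_{X^n}(x^n) \defeq \frac{1}{M(x^n)} \sum_{y^n \colon \multiset{y^n} = \multiset{x^n}} P_{X^n}(y^n) = \frac{P_\M(\multiset{x^n})}{M(x^n)}.
\end{align}
In words, $\bar P_{X^n}$ redistributes the total mass that $P_{X^n}$ assigns to each equivalence class uniformly across the members of that class.

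First I would check $\bar P_{X^n}$ is a distribution: it is non-negative by construction, and the sum $\sum_{x^n} \bar P_{X^n}(x^n)$ can be reorganized as a sum over multisets $\M$, each contributing $M(\M)$ identical terms $P_\M(\M)/M(\M)$, which equals $\sum_\M P_\M(\M) = 1$. Next, exchangeability: for any permutation $\sigma$ of $[n]$ and any $x^n$, the permuted sequence $(x_{\sigma(1)}, \dots, x_{\sigma(n)})$ has the same underlying multiset as $x^n$, so both $M(\cdot)$ and the inner sum defining $\bar P_{X^n}$ are unchanged; hence $\bar P_{X^n}$ is invariant under coordinate permutations, which is precisely the definition of exchangeability. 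Finally, the induced multiset law: for any multiset $\M$ we have $\bar P_\M(\M) = \sum_{x^n \colon \multiset{x^n} = \M} \bar P_{X^n}(x^n) = M(\M) \cdot P_\M(\M)/M(\M) = P_\M(\M)$, so the original and symmetrized sequence distributions agree once pushed forward to multisets.

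There is no serious obstacle here; the one point requiring care is that $M(x^n)$ is not a single global constant (as the notation $M$ in \eqref{eq:multiset-info-content} for a fixed $\M$ might suggest) but varies with the repetition pattern of $x^n$, so the bookkeeping in the normalization and pushforward steps must be organized per equivalence class rather than factoring $M$ out. It is also worth remarking that the construction is in fact canonical: any exchangeable $\bar P_{X^n}$ inducing $P_\M$ must assign equal mass to all sequences sharing a multiset $\M$, and that common mass is forced to equal $P_\M(\M)/M(\M)$, so $\bar P_{X^n}$ above is the unique such distribution.
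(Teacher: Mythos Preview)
Your proof is correct and takes essentially the same approach as the paper: both construct $\bar P_{X^n}$ by spreading the total mass $P_\M(\M)$ uniformly over the type class $[x^n]$. Your write-up is more thorough than the paper's (you verify normalization, exchangeability, the pushforward, and uniqueness explicitly), but the underlying construction is identical.
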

\begin{proof}
    Let $[x^n] = \left\{ z^n \in \X^n \colon \multiset{z^n} = \multiset{x^n} \right\}$ represent the \emph{type class} to which $x^n$ belongs, i.e., the set of sequences with the same frequency count of symbols as $x^n$.
    Any sequence $z^n \in [x^n]$ creates the same multiset $\M = \multiset{z^n} = \multiset{x^n}$.
    The cardinality of $[x^n]$ is equal to the multinomial coefficient of $\M$. The distribution over multisets induced by $P_{X^n}$ is
    \begin{align}
        P_{\M}(\M) = \sum_{z^n \in [x^n]} P_{X^n}(z^n).
    \end{align}
    The sum is over all elements in $[x^n]$.
    We can therefore exchange probability mass between sequences in $[x^n]$ without changing $P_\M(\M)$.
    In particular, the distribution $\bar{P}_{X^n}$ that assigns equal mass to all sequences in the same set $[x^n]$ is, by definition, exchangeable:
    \begin{align}
        \bar{P}_{X^n}(x^n) = \frac{1}{\abs{[x^n]}} P_\M(\M).
    \end{align}
\end{proof}

Although our method works for any source of exchangeable symbols, the following sections assumes symbols are i.i.d. to simplify the exposition.

\section{Method}
Random Order Coding (ROC) is a lossless compression method for multisets.
Compressing a multiset is equivalent to compressing a sequence $x^n$ with disregard to the order.
In practice, this means the decoded sequence $z^n$ will be a random permutation of $x^n$, with all orderings being equally likely.

Compressing $x^n$ with ANS proceeds by initializing the state $s_0=0$, followed by successively applying $s_i \defeq \encode(s_{n-1}, x_i)$ with $P_X$.
The final ANS state $s_n$ holds the information of the entire sequence, which will be
approximately $\log s_n \approx \sum_{n=1}^n \log 1 / P_X(x_i)$ bits. To optimally compress $\M$, we must somehow recover the bits used to implicitly encode the order. This can be done by using the ANS state as a source of randomness to sample without replacement from $\M$, encoding symbols in whichever order they appear. This idea leads to the process shown in \Cref{alg:roc-encode} which we discuss below.

To encode, we keep track of a multiset of \emph{remaining} symbols $\overline{\M}_i$,
i.e.\ those that have not yet been sampled from $\M$ up to step $i$.
First, $\overline{\M}_1$ is initialized to $\M$.
Then, for each step $i \in \{1, \ldots, n\}$, a symbol $z_i$ is sampled without replacement from $\overline{\M}_i$, and then encoded with $P_Z = P_X$.
We can express this relationship as
\begin{align}
    \M = \overline{\M}_i \bigcup \multiset{z^{i-1}},
\end{align}
where $\multiset{z^{i-1}} = \{z_1, \dots, z_{i-1}\}$ is the multiset of elements that have been sampled up to step $i-1$.
The sampling-without-replacement distribution, which is used to sample $z_i$ via decoding, is
\begin{equation}\label{eq:roc-swor-prob}
    P_{Z_i \g Z^{i-1}, \M}(z_i \g z^{i-1}, \M) = P_{Z_i \g \overline{\M}_i}(z_i \g \overline{\M}_i) = \frac{\overline{\M}_i(z_i)}{n - (i-1)}.
\end{equation}
\begin{algorithm}
    \small
    \caption{Multiset encode}\label{alg:roc-encode}
    Given a multiset $\M = \{x_1, \dots, x_n\}$ with $m$ unique symbols\\
    Assume encoding with $P_X$ has worst-case complexity per-symbol of $\Omega(P^\text{enc}_X)$\\
    Initialize an ANS state $s_0$, and $\overline\M_1 = \M$\\
    \For{$i = 1, \dots, n$}{
        \nl $(s_i', z_i) = \decode(s_{i-1})$ with $P_{Z_i \g \overline\M_i}$
        \hfill \tcp{$\Omega(\log m)$}
        \nl $\overline\M_{i+1} = \overline\M_i\setminus\{z_i\}$
        \hfill \tcp{$\Omega(\log m)$}
        \nl $s_i = \encode(s_i', z_i)$ with $P_X(z_i)$
        \hfill \tcp{$\Omega(P^\text{enc}_X)$}
    }
    \Return $s_n$
\end{algorithm}
\begin{algorithm}
    \small
    \caption{Multiset decode}\label{alg:roc-decode}
    Given an integer ANS state $s_n$\\
    Assume decoding with $P_X$ has worst-case complexity per-symbol of $\Omega(P^\text{dec}_X)$\\
    Initialize $\overline\M_{n+1} = \emptyset$\\
    \For{$i = n, \dots, 1 $}{
        $(s_i', z_i) = \decode(s_i)$ with $P_X$ \hfill
        \tcp{$\Omega(P^\text{dec}_X)$, reverses $\mathbf{3}$}
        $\overline\M_i = \overline\M_{i+1}\cup\{z_i\}$
        \hfill \tcp{$\Omega(\log m)$, reverses $\mathbf{2}$}
        $s_{i-1} = \encode(s_i', z_i)$ with $P_{Z_i \g \overline\M_i}(z_i\g \overline\M_i)$\hfill
        \tcp{$\Omega(\log m)$, reverses $\mathbf{1}$}
    }
    \Return $\overline\M_1 = \M$
\end{algorithm}

Sampling $z_i$ decreases the integer state, while encoding $z_i$ increases it.
Hence, the number of bits required to represent the integer state changes at step $i$ by approximately
\begin{equation}\label{eq:message-length}
    \Delta_i \defeq \log \frac{1}{P_X(z_i)} - \log\frac{1}{P_{Z_i \g \overline\M_i}(z_i \g \overline\M_i)}.
\end{equation}
The total change, in bits, of the state is
\begin{align}
    \sum_{i=1}^{n} \Delta_i
     & = \log \frac{1}{P_{X^n}(z^n)} - \log\frac{1}{P_{Z^n \g \M}(z^n \g \M)}. \label{eq:rate}
\end{align}
Sampling without replacement guarantees all orderings of $z^n$ are equally likely,
implying \eqref{eq:rate} is exactly the information content of the multiset \eqref{eq:multiset-info-content}.
\begin{lemma}[All Orderings are Equally Likely]
    Let $Z^n$ be the sequence of random variables produced from sampling without replacement from a multiset $\M$.
    Then, all sequences are equally likely,
    \begin{align}
        \log\frac{1}{P_{Z^n \g \M}(z^n \g \M)} = \log\frac{n!}{\prod_{z \in \M} \M(z)!}.
    \end{align}
\end{lemma}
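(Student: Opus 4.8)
The plan is to expand $P_{Z^n \g \M}$ by the chain rule, substitute the sampling-without-replacement probabilities from \eqref{eq:roc-swor-prob}, and show the resulting product telescopes into a quantity that depends only on $\M$. Concretely, writing $\overline\M_i = \M \setminus \multiset{z^{i-1}}$ for the multiset of remaining symbols at step $i$, I would start from
\begin{align}
    P_{Z^n \g \M}(z^n \g \M)
    = \prod_{i=1}^n P_{Z_i \g Z^{i-1}, \M}(z_i \g z^{i-1}, \M)
    = \prod_{i=1}^n \frac{\overline\M_i(z_i)}{n - (i-1)},
\end{align}
and then analyze the numerator and denominator separately.

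The denominator is immediate: as $i$ ranges over $1, \dots, n$ the factor $n - (i-1)$ runs through $n, n-1, \dots, 1$ exactly once, so the product of denominators is $n!$. For the numerator I would group the $n$ sampling steps by the symbol value drawn. Fixing a symbol $x$ occurring in $\M$ with multiplicity $\M(x)$, let $i_1 < \dots < i_{\M(x)}$ be the steps at which $z_{i_j} = x$. Since the count $\overline\M_i(x)$ is unchanged when a symbol other than $x$ is removed, and drops by exactly one each time $x$ itself is removed, we have $\overline\M_{i_j}(x) = \M(x) - (j-1)$, so the factors contributed by symbol $x$ multiply to $\M(x)(\M(x)-1)\cdots 1 = \M(x)!$. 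Taking the product over all distinct symbols $x$ of $\M$ (which partitions the $n$ steps) gives numerator $\prod_{x \in \M} \M(x)!$.

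Combining, $P_{Z^n \g \M}(z^n \g \M) = \frac{\prod_{x \in \M} \M(x)!}{n!}$, which is manifestly independent of the particular ordering $z^n$ (as long as $\multiset{z^n} = \M$); this common value is precisely the reciprocal of the multinomial coefficient of $\M$ (\Cref{def:multinomial-coeff}). Negating the logarithm yields the claimed identity. The only real subtlety — the ``main obstacle'' in an otherwise routine telescoping computation — is the bookkeeping in the numerator step: one must argue carefully that sampling symbols distinct from $x$ leaves $\overline\M_i(x)$ untouched, so that the $\M(x)$ factors attributable to $x$ are exactly $\M(x), \M(x)-1, \dots, 1$ in the order the copies of $x$ happen to be drawn. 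Everything else follows by a direct rearrangement of the product.
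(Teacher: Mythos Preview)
Your proposal is correct and follows essentially the same argument as the paper: both expand $P_{Z^n \g \M}$ via the chain rule and \eqref{eq:roc-swor-prob}, separate out the denominator as $n!$, regroup the numerator factors by symbol value, and observe that the multiplicities of a fixed symbol $x$ step down through $\M(x), \M(x)-1, \dots, 1$ to give $\M(x)!$. If anything, you are slightly more explicit than the paper about why removing symbols other than $x$ leaves $\overline\M_i(x)$ unchanged.
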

\begin{proof}
    The probability of a sampled sequence is given by \Cref{eq:roc-swor-prob},
    \begin{align}
        P_{Z^n \g \M}(z^n \g \M)
         & = \prod_{i=1}^n P_{Z_i \g Z^{i-1}, \M}(z_i \g z^{i-1}, \M)                                  \\
         & = \prod_{i=1}^n \frac{\overline{\M}_i(z_i)}{n - (i-1)}                                      \\
         & = \frac{1}{n!}\cdot\overline{\M}_1(z_1)\cdot\overline{\M}_2(z_2)\cdots \overline{\M}_n(z_n)
    \end{align}
    We can regroup the product terms based on the value of element $z_i \in \M$, by defining
    \begin{align}
        I_z = \{i \in [n] \colon z_i = z \},
    \end{align}
    and writing,
    \begin{align}
        P_{Z^n \g \M}(z^n \g \M)
         & = \frac{1}{n!}\cdot\prod_{z \in \M}\prod_{i \in I_z}\overline{\M}_i(z_i). \\
    \end{align}
    At step $i \in I_z$, the element sampled without replacement is equal to $z$.
    Sampling decrements the multiplicty of $z$ by exactly one.
    Therefore,
    \begin{align}
        \overline{\M}_{i_1}(z), \overline{\M}_{i_2}(z), \dots, \overline{\M}_{i_{\abs{I_z}}}(z),
    \end{align}
    where $i_j \in I_z$ and $i_1 < i_2 < \dots < i_{\abs{I_z}}$, decreases at each step by $1$, allowing us to write
    \begin{align}
        P_{Z^n \g \M}(z^n \g \M)
         & = \frac{1}{n!}\cdot\prod_{z \in \M} \M(z)!
    \end{align}
\end{proof}

\begin{example}[ROC $\{\mathtt{a}, \mathtt{b}, \mathtt{b}\}$]\label{example:roc}
    \Cref{fig:roc-example} shows how ROC is used to compress a multiset $\M \defeq \{\mathtt{a}, \mathtt{b}, \mathtt{b}\}$.
    Each column shows the evolution of the ANS state, ordered from left-to-right, as symbols are encoded.
    The size of the stack $\ell_\M$ is shown in gray.
    Initially, the ANS stack is populated with some initial bits ($\epsilon$) to initiate the sampling procedure.
    The stack is used as a random seed to select an element in the multiset to compress.
    In this example, $\mathtt{b}$ was chosen first with probability $\frac{2}{3}$, resulting in a reduction of the size of the stack.
    The selected element ($\mathtt{b}$) is removed from the multiset and encoded onto the stack with $P_X(\mathtt{b})$, increasing the stack size.
    As before, the stack is used as a random seed to select an element.
    This time, $\mathtt{a}$, with probability $\frac{1}{2}$.
    The bits used to select $\mathtt{a}$ are the initial bits and the encoding of $\mathtt{b}$.
    Finally, the last element is encoded onto the stack.
    The total length change in the ANS state, in the large state regime, is  $-\log P_\M(\{a,b,b\})$.
\end{example}

\begin{figure}[ht]
    \begin{center}
        \includegraphics[width=0.8\textwidth,trim={0 0 1cm 0},clip]{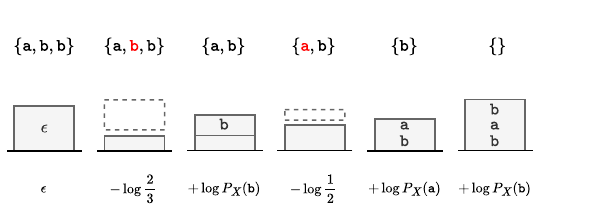}
        \caption{\Cref{example:roc}. Encoding $\M = \{\mathtt{a}, \mathtt{b}, \mathtt{b} \}$ with Random Order Coding (ROC).}
        \label{fig:roc-example}
    \end{center}
\end{figure}

There are two potential sources of overhead, implying that in practice the compressed message length will never be exactly equal to the information content of the multiset.
The first is due to the per-operation and one-time ANS overheads, discussed in \Cref{sec:ans-encoding}.
Second, in order to sample at the very beginning (i.e.\ from $s_0$) there needs to be some pre-existing randomness in the ANS state (i.e., $s_0 \gg 0$).
We can synthesize that randomness (i.e.\ initialize $s_0$ to a larger integer), but this adds a slight overhead because the random bits that we use are (unavoidably) communicated.
This is the initial bits overhead discussed in \Cref{sec:bbans-initial-bits}.
We found in our experiments that the effect of these overheads on rate was small.
There are at least two theoretical results that corroborate these findings, which we discuss in \Cref{sec:roc-initialbits}

The decoder (\Cref{alg:roc-decode}) receives the ANS state $s_n$ containing a compressed representation of all elements of $\M$, and must rebuild the multiset as symbols are retrieved.
Note that, due to the stack-like nature of ANS, this is done in reverse order.

Efficient ANS-based invertible sampling without replacement requires a binary search tree (BST), which we describe in detail in \Cref{sec:roc-swor}, to store the multisets $\overline\M_i$.
The BST is similar to a Fenwick tree \cite{fenwick1994,moffat1999}, and allows lookup, insertion and removal of elements in time proportional to the depth of the tree.

\section{Initial Bits}\label{sec:roc-initialbits}
The following theorem suggests the initial bits are not a significant problem when performing sampling-without-replacement.
The ANS state is likely to be empty only at the very beginning of encoding, i.e.,\ the initial bits are a one-time overhead.

\begin{theorem}[Initial Bits are a One-Time Overhead]
    Let $\Delta_i$ be the net change in message length, in the large state regime, when encoding the $i$-th symbol with ROC.
    For a fixed multiset $\M$, the expected change is always non-negative,
    \begin{align}
        \E{\Delta_i \g \M} \geq 0.
    \end{align}
    Furthermore, the expected change in message length can not decrease,
    \begin{align}
        \E{\Delta_i} \geq  \E{\Delta_{i-1}}
    \end{align}
\end{theorem}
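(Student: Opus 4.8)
The plan is to express both inequalities in terms of the KL divergence between the empirical distribution of the \emph{remaining} multiset $\overline\M_i$ and the source $P_X$, and then to use the non-negativity and the convexity of KL divergence. First I would observe that, conditioned on $\overline\M_i$, the symbol $Z_i$ is sampled from $P_{Z_i \g \overline\M_i}(\cdot \g \overline\M_i)$ (see \Cref{eq:roc-swor-prob}), which is exactly the empirical distribution $\hat P_{\overline\M_i}(z) \defeq \overline\M_i(z)/(n-i+1)$. Substituting into \Cref{eq:message-length},
\begin{align*}
    \E{\Delta_i \g \overline\M_i}
    &= \sum_{z \in \X} \hat P_{\overline\M_i}(z)\,\log\frac{\hat P_{\overline\M_i}(z)}{P_X(z)} \\
    &= \KL{\hat P_{\overline\M_i}}{P_X} \geq 0.
\end{align*}
Taking iterated expectations gives $\E{\Delta_i \g \M} = \E{\KL{\hat P_{\overline\M_i}}{P_X} \g \M} \geq 0$, which is the first claim.

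For the second claim I would first pin down the law of $\overline\M_i$. Sampling without replacement from $\M = \multiset{X^n}$ is the same as applying a uniformly random permutation, independent of $X^n$, to the sequence $X^n$; since $X^n$ is i.i.d.\ $P_X$, the permuted sequence $Z^n$ is again i.i.d.\ $P_X$, so $\overline\M_i = \multiset{Z_i, \dots, Z_n}$ is distributed as the multiset of $n-i+1$ i.i.d.\ draws from $P_X$. Writing $f(k) \defeq \E{\KL{\hat P}{P_X}}$ for $\hat P$ the empirical distribution of $k$ i.i.d.\ draws from $P_X$, the identity $\E{\Delta_i \g \overline\M_i} = \KL{\hat P_{\overline\M_i}}{P_X}$ above gives $\E{\Delta_i} = f(n-i+1)$, so the inequality $\E{\Delta_i} \geq \E{\Delta_{i-1}}$ is equivalent to $f$ being non-increasing, i.e.\ $f(k+1) \leq f(k)$.

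To prove $f(k+1) \leq f(k)$ I would take $Y_1,\dots,Y_{k+1}$ i.i.d.\ $P_X$ and let $\hat P_k$, $\hat P_{k+1}$ be the empirical distributions of $(Y_1,\dots,Y_k)$ and $(Y_1,\dots,Y_{k+1})$, so $f(k) = \E{\KL{\hat P_k}{P_X}}$ and $f(k+1) = \E{\KL{\hat P_{k+1}}{P_X}}$. Conditioned on $\hat P_{k+1} = q$ (equivalently, on the count vector), the leftover draw $Y_{k+1}$ is uniform over the corresponding multiset, so $Y_{k+1} = z$ with probability $q(z)$, and then $\hat P_k = q^{(z)}$, where $q^{(z)}(w) \defeq \bigl((k+1)q(w) - \1\{w=z\}\bigr)/k$. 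A one-line computation gives $q = \sum_{z \in \X} q(z)\,q^{(z)}$, so convexity of $D \mapsto \KL{D}{P_X}$ in its first argument yields
\begin{align*}
    \KL{q}{P_X}
    &\leq \sum_{z \in \X} q(z)\,\KL{q^{(z)}}{P_X} \\
    &= \E{\KL{\hat P_k}{P_X} \g \hat P_{k+1} = q}.
\end{align*}
Averaging over $q$ gives $f(k+1) \leq f(k)$, which completes the argument.

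The step I expect to be the main obstacle is making the distributional identifications watertight: that the ROC sampling order $Z^n$ is genuinely i.i.d.\ $P_X$ (via the ``all orderings equally likely'' lemma together with exchangeability of $X^n$), that consequently $\overline\M_i$ has the law of the multiset of $n-i+1$ i.i.d.\ draws, and that conditioning on $\hat P_{k+1}$ makes the leftover draw uniform over the multiset. Each of these is a short exchangeability argument, but they carry the weight of the proof; once granted, the KL rewriting, the identity $q = \sum_z q(z)\,q^{(z)}$, and the appeal to convexity are routine.
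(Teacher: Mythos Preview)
Your treatment of the first inequality is essentially the paper's: both of you rewrite $\E{\Delta_i \g \overline\M_i}$ as the KL divergence $\KL{P_{Z_i\g\overline\M_i}(\cdot\g\overline\M_i)}{P_X}$ and invoke non-negativity.

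For the second inequality your argument is correct but genuinely different from the paper's. The paper observes that $\E{\Delta_i} - \E{\Delta_{i-1}} = H(Z_{i-1}\g\overline\M_{i-1}) - H(Z_i\g\overline\M_i)$ (the cross-entropy term $\E{-\log P_X(Z_i)}$ being the same for all $i$), and then bounds the conditional entropies directly: since $\overline\M_i$ is equivalent to $(\M, Z^{i-1})$, one has $H(Z_i\g\overline\M_i) = H(Z_i\g\M, Z_1,\dots,Z_{i-1}) \leq H(Z_i\g\M, Z_2,\dots,Z_{i-1}) = H(Z_{i-1}\g\M, Z_1,\dots,Z_{i-2}) = H(Z_{i-1}\g\overline\M_{i-1})$, where the inequality is ``conditioning reduces entropy'' and the middle equality is exchangeability of $Z^n$ given $\M$. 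This is a two-line argument that uses only exchangeability, not the stronger i.i.d.\ structure of the source. Your route instead identifies $\E{\Delta_i}$ with $f(n-i+1)$, the expected KL of the empirical distribution of $k$ i.i.d.\ draws, and proves $f(k+1)\leq f(k)$ via the barycentric identity $q=\sum_z q(z)\,q^{(z)}$ and convexity of KL in its first argument. What your approach buys is an interpretable statement about empirical distributions (and it makes explicit why the i.i.d.\ assumption is doing work); what the paper's approach buys is brevity and a proof that goes through verbatim for any exchangeable source, without ever needing to pin down the law of $\overline\M_i$.
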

\begin{proof}
    At step $i$, a symbol is sampled via decoding and then encoded using $P_X$, resulting in the following conditional given a fixed multiset $\M$
    \begin{align}
        \E{\Delta_i \g \M}
         & = \E{\log P_{Z_i\g\overline\M_i}(Z_i\g\overline\M_i) - \log P_{Z}(Z_i) \g \M } \\
         & = \KL{P_{Z_i\g\overline\M_i}(\cdot \g \M)}{P_Z}                                \\
         & \geq 0
    \end{align}
    Second, $X^n$ is exchangeable and therefore from
    \begin{align}
        H(Z_i\g\overline\M_i) & = H(Z_i\g\M, Z_{i-1}, \dots, Z_2, Z_1) \\
                              & \leq H(Z_i\g\M, Z_{i-1}, \dots, Z_2)   \\
                              & = H(Z_{i-1}\g\M, Z_{i-2}, \dots, Z_1)  \\
                              & = H(Z_{i-1}\g \M_{i-1}),
    \end{align}
    we have that
    \begin{align}
        \E{\Delta_i - \Delta_{i-1} }= -H(Z_i\g\overline\M_i) + H(Z_{i-1}\g \overline\M_{i-1}) \geq 0.
    \end{align}
\end{proof}

\section{Efficient Sampling Without Replacement}\label{sec:roc-swor}
For our method to be computationally attractive, we must carefully construct an algorithm that can decode $Z_i$ from $\overline\M_i$, as well as adapt to the next symbol $Z_{i+1}$.
Decoding the proxy sequence requires sampling without replacement from the multiset.
In this section we show that both sampling and adapting to the next symbol can be done in sub-linear time for each symbol, or quasi-linear for the entire multiset.
This requires a binary search tree (BST), which we describe in detail next, to store the multisets $\overline\M_i$.
The BST is similar to a Fenwick tree \cite{fenwick1994,moffat1999}, and allows lookup, insertion and removal of elements in time proportional to the depth of the tree. We analyze the overall time complexity of our method in \Cref{sec:roc-complexity}.

Each encoding/decoding step requires a forward/reverse lookup (see \ref{sec:ans-encoding}) with distribution
$P_{Z_i\g\overline\M_i}$. Since we are sampling without replacement, we also
need to perform removals and insertions:
\begin{align}
    \bstremove & \colon (\overline\M_i, z_i) \mapsto \overline\M_i \setminus \{z_i\}     \\
    \bstinsert & \colon (\overline\M_{i+1}, z_i) \mapsto \overline\M_{i+1} \cup \{z_i\}.
\end{align}
All four operations (forward and reverse lookup, insertion and removal) can be
done in $\Omega(\log m)$ time, where \(m\) is the number of unique elements in
\(\M\) as well as the number of nodes in the BST.

We set the ANS precision parameter (see \Cref{sec:ans-encoding}) to \(N = |\overline\M_i| = n - (i -1) \), with
frequency count $p_x = \overline\M_i(x)$ for each $x \in \overline\M_i$.
At each step $i$, a binary search tree (BST) is used to represent the multiset
of remaining symbols $\overline\M_i$.  The BST stores the frequency and cumulative
counts, $p_x$ and $c_x$, and allows efficient lookups.  At each node is a
unique symbol from \(\overline\M_i\) (arranged in the usual BST manner to enable fast
lookup), as well as a count of the total number of symbols in the entire sub-tree
of which the node is root. \Cref{fig:bst-1} shows an example, for the multiset
$\overline\M_i = \{\mathtt{a, b, b, c, c, c, d, e}\}$.

\begin{figure}[ht]
    \centering
    \begin{tikzpicture}[
            edge from parent/.style = {draw,-latex}
        ]
        \node {$\mathtt{b}\colon 8$}
        child {node {$\mathtt{a}\colon 1$}}
        child {node {$\mathtt{d}\colon 5$}
                child {node {$\mathtt{c}\colon 3$}}
                child {node {$\mathtt{e}\colon 1$}}};
    \end{tikzpicture}
    \begin{tikzpicture}
        \draw (0, 0) -- (8, 0);
        \foreach \x in {0,...,8} {
                \draw (\x,.1) -- (\x,-.1) ;
            }
        \foreach \x in {0,...,7} {
                \draw (\x + 0.5, .3) node[fill=white] {\footnotesize\(\x\)};
            }

        \draw (3,   -3)      -- (6, -3);
        \draw (3,   -3 - .1) -- (3, -3 + .1);
        \draw (6,   -3 - .1) -- (6, -3 + .1);
        \draw (4.5, -3) node[fill=white] (c) {\small\(\mathtt{c}\)};

        \draw (7,   -3)      -- (8, -3);
        \draw (7,   -3 - .1) -- (7, -3 + .1);
        \draw (8,   -3 - .1) -- (8, -3 + .1);
        \draw (7.5, -3) node[fill=white] (e) {\small\(\mathtt{e}\)};

        \draw (0,  -2)      -- (1, -2);
        \draw (0,  -2 - .1) -- (0, -2 + .1);
        \draw (1,  -2 - .1) -- (1, -2 + .1);
        \draw (.5, -2) node[fill=white] (a) {\small\(\mathtt{a}\)};

        \draw (3,   -2)      -- (8, -2);
        \draw (3,   -2 - .1) -- (3, -2 + .1);
        \draw (6,   -2 - .1) -- (6, -2 + .1);
        \draw (7,   -2 - .1) -- (7, -2 + .1);
        \draw (8,   -2 - .1) -- (8, -2 + .1);
        \draw (6.5, -2) node[fill=white] (d) {\small\(\mathtt{d}\)};

        \draw (0, -1)      -- (8, -1);
        \draw (0, -1 - .1) -- (0, -1 + .1);
        \draw (1, -1 - .1) -- (1, -1 + .1);
        \draw (3, -1 - .1) -- (3, -1 + .1);
        \draw (8, -1 - .1) -- (8, -1 + .1);
        \draw (2, -1) node[fill=white] (b) {\small\(\mathtt{b}\)};
    \end{tikzpicture}
    \caption{For the multiset $\{\mathtt{a, b, b, c, c, c, d, e}\}$, on the top a schematic representation of the dynamic BST data structure which we use to represent the multiset, and on the bottom the intervals corresponding to each branch of the BST.}\label{fig:bst-1}
\end{figure}
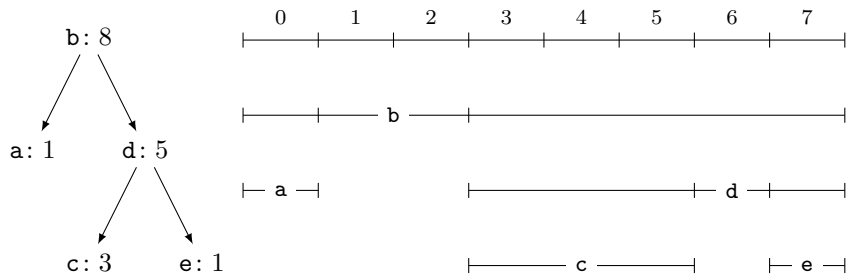
On the left is the actual BST data-structure, i.e.,\ that which is held in memory
and operated on. The interval diagram on the right visually represents the
local information available from inspection of the counts at a node and its
children. For example, the sub-interval to the right of $\mathtt{b}$ has length
$5$, which is the count at it's child node $\mathtt{d}$.

\begin{example}[BST $\rlookup$]
    For illustrative purposes, we step through a $\mathtt{reverse\_lookup}$ of the multiset in \Cref{fig:bst-1}.
    This starts with  an integer $j = s \bmod (|\M| - i + 1)$ taken from the ANS state $s$.
    Searching starts at the root node, and compares $j$ to the frequency and cumulative counts, $p_x$ and $c_x$, to decide between branching left ($j < c_x$), right ($j \geq c_x + p_x$), or returning.
    For example, at the root node, if $s=68$ and $i=1$, then $x=\mathtt{b}$, $c_{\mathtt{b}} = 1, p_{\mathtt{b}} = 8 - 1 - 5 = 2$; \(j = 4 \geq 1 + 2 = c_\mathtt{b} + p_\mathtt{b}\), hence we traverse to the right branch.
    Branching right requires us to re-center to focus on the sub-interval to the right of $\mathtt{b}$.
    This can be done by continuing the traversal with $j - (c_\mathtt{b} + p_\mathtt{b})$.
    Note that, if we had branched left, this re-centering would not be necessary.
    After reaching node $\mathtt{c}$, we return back up the tree, propagating the value of \(p_\mathtt{c}\) and summing the size of all intervals to the left of \(\mathtt{c}\) to compute \(c_\mathtt{c}\).
    Note that, if we decrement (increment) the count of each node we visit by 1, then $\rlookup$ and $\bstremove$ ($\bstinsert$) can both be performed in one pass over the tree. The same also holds for $\flookup$.
\end{example}

Note that all sub-trees correspond to some sub-multiset $S \subseteq \overline\M_i$.
Under this interpretation, the count at each node represents the size $\abs{S}$ of the sub-multiset corresponding to the sub-tree with that particular node as root. The count at the root of the BST ($8$ in \Cref{fig:bst-1}) is equal to the total size $\abs{\overline\M_i}$ of the multiset.

Below we provide pseudo-code for $\bstinsert, \bstremove, \flookup$ and $\rlookup$.
Here the symbol $\B$ is used for the BST that represents $\overline\M_i$.
The BST itself is a 4-tuple $\B = (n, y, \B_L, \B_R)$, where $n=\abs{\overline\M_i}$ is the size of $\overline\M_i$, $y$ is the root symbol of $\B$, and $\B_L,\B_R$ are the BSTs with root symbols equal to the left and right children of the root node $y$ of $\B$.
We also define $\abs{\B} = \abs{\overline\M_i} = n$.

\

\begin{algorithm}[H]
    \caption{$\bstinsert$ and $\bstremove$ of symbol $x$ with BST $\B$.}
    \small
    \begin{minipage}[t]{0.45\textwidth}
        \DontPrintSemicolon
        \SetKwFunction{InsertFunc}{$\bstinsert$}
        \SetKwProg{Fn}{}{:}{}
        \Fn{\InsertFunc{$\B$, $x$}}{
            $(n, y, \B_L, \B_R) = \B$\;
            \uIf{$x < y$}{
                $\B_L = \bstinsert(\B_L, x)$\;
            }
            \uElseIf{$x > y$}{
                $\B_R = \bstinsert(\B_R, x)$\;
            }
            \KwRet $(n+1, y, \B_L, \B_R)$\;
        }
    \end{minipage}
    \label{alg:insert-remove}
    \hfill
    \begin{minipage}[t]{0.45\textwidth}
        \DontPrintSemicolon
        \SetKwFunction{RemoveFunc}{$\bstremove$}
        \SetKwProg{Fn}{}{:}{}
        \Fn{\RemoveFunc{$\B$, $x$}}{
            $(n, y, \B_L, \B_R) = \B$\;
            \uIf{$x < y$}{
                $\B_L = \bstremove(\B_L, x)$\;
            }
            \uElseIf{$x > y$}{
                $\B_R = \bstremove(\B_R, x)$\;
            }
            \KwRet $(n-1, y, \B_L, \B_R)$\;
        }
    \end{minipage}
\end{algorithm}

\begin{algorithm}[H]
    \caption{$\flookup$ of symbol $x$, and $\rlookup$ of index $i$, in $\B$.}
    \small
    \begin{minipage}[t]{0.45\textwidth}
        \DontPrintSemicolon
        \SetKwFunction{FLookupFunc}{$\flookup$}
        \SetKwProg{Fn}{}{:}{}
        \Fn{\FLookupFunc{$\B$, $x$}}{
            $(n, y, \B_L, \B_R) = \B$\;
            $(n_R, n_L) = (\abs{\B_R}, \abs{\B_L})$\;
            $(c_y, p_y) = (n_L, n - (n_L + n_R))$\;
            \uIf{$x = y$}{
                $(c_x, p_x) = (c_y, p_y)$\;
            }
            \uElseIf{$x < y$}{
                $(c_x, p_x) = \flookup(\B_L, x)$\;
            }
            \uElseIf{$x > y$}{
                $(c, p_x) = \flookup(\B_R, x)$\;
                $c_x = c - (n - n_R)$\;
            }
            \KwRet $(c_x, p_x)$
        }
    \end{minipage}
    \hfill
    \begin{minipage}[t]{0.45\textwidth}
        \DontPrintSemicolon
        \SetKwFunction{RLookupFunc}{$\rlookup$}
        \SetKwProg{Fn}{}{:}{}
        \Fn{\RLookupFunc{$\B$, $i$}}{
            $(n, y, \B_L, \B_R) = \B$\;
            $(n_R, n_L) = (\abs{\B_R}, \abs{\B_L})$\;
            $(c_y, p_y) = (n_L, n - (n_L + n_R))$\;
            \uIf{$c_y \leq i < c_y + p_y$}{
                $(x, c_x, p_x) = (y, c_y, p_y) $\;
            }
            \uElseIf{$i < c_y $}{
                $(c_x, p_x) = \rlookup(\B_L, i)$
            }
            \uElseIf{$i \geq c_y + p_y$}{
                $i' = i - (n - n_R)$\;
                $(c, p_x) = \rlookup(\B_R, i')$\;
                $c_x = c + (n - n_R)$\;
            }
            \KwRet $(x, c_x, p_x)$
        }
    \end{minipage}
\end{algorithm}

As shown in \Cref{alg:insert-remove}, all necessary operations are straightforward to implement using depth-first traversal of the BST with time complexity which scales linearly with the depth of the tree.
As long as the tree is balanced enough (which is guaranteed during encoding and highly likely during decoding), we have \(\Omega(\log m)\) complexity for each operation.

\section{Time Complexity}\label{sec:roc-complexity}
In practice, the dominant factors affecting the runtime of encoding and decoding with our method are the total ($n$) and unique ($m$) number of symbols in the multiset, as well as the complexity of encoding and decoding with $P_X$.

The BST is used to store the elements of the multisets $\overline\M_i$.
The BST allows lookup, insertion and removal of elements in time proportional to the depth of the tree.
Traversing the BST also requires comparing symbols under a fixed (usually lexicographic) ordering.
In the extreme case where symbols in the alphabet $\X$ are represented by binary strings of length $\log\abs{\X}$, a single comparison between symbols would require $\log\abs{\X}$ bit-wise operations.
However, ``short-circuit" evaluations, where the next bits are compared only if all previous comparisons result in equality, make this exponentially unlikely.  In practice, average time may have no dependence on $\abs{\X}$.

The expected and worst-case time-complexity of the multiset encoder and decoder which we propose are
\begin{align}
    \Omega(nP_X + n\log m),
\end{align}
where $P_X$ represents the complexity of encoding (at the encoder) and decoding (at the decoder) with $P_X$.
The complexities are independent of the alphabet size, while current methods require at least $\Omega(\abs{\X})$ iterations.

We now detail the time dependence of sampling without replacement, and its inverse, on $n$ and $m$, assuming a fixed cost of comparing two symbols.
At the encoder, a balanced BST representing $\overline\M_1 = \M$ must first be constructed from the sequence $x^n$.
This requires sorting followed by performing the $\bstinsert$ operation $n$ times, starting with an empty BST.
Sampling $z^n$ from the multiset is done via $\rlookup$ and $\bstremove$.
All operations have worst-case and average complexity equivalent to that of a search on a balanced BST, implying the overall complexity is $\Omega(n\log m)$, including the initial sorting step.
At the decoder, the BST is constructed by inserting $z_n, \dots, z_1$, which are ordered randomly, implying the BST will not always be balanced.
However, the expected depth of any node in a randomly initialized BST, which is proportional to the expected time of the \(\flookup\) and \(\bstinsert\) operations, is \(\Omega(\log m)\) \cite{knuth1998}.
The worst-case is a tree with one long, thin branch, where those operations would take \(\Omega(m)\) time.
A self-balancing tree, such as an AVL tree \cite{adelson-velsky1962} or red-black tree \cite{bayer1972}, can be used to achieve the same worst-case complexity as encoding.

\changed{
    \section{ROC and the Method of Types}
    A \emph{type} $\hat{P}[x^n]$ of a sequence of length $n$, in the method of types \cite{cover1999elements}, is the empirical distribution with probabilities $\hat{P}[x^n](x) \defeq \frac{1}{n} \sum_{i=1}^n \1\{x_i = x\}$, constructed from counting the occurrence of symbols in the sequence.
    The \emph{type class} $T(\hat{P}[x^n])$ of a type $\hat{P}[x^n]$ is the set of all sequences of that type.
    The number of sequences in a type class is equal to the multinomial coefficient associated to that type,
    \begin{align}
        \abs{T(\hat{P}[x^n])} = \frac{n!}{\prod_{x \in \X}(n\cdot\hat{P}[x^n](x))!}.
    \end{align}

    Multisets have a one-to-one mapping with types.
    If $\M = \multiset{x^n}$, then $\M(x) = n \cdot \hat{P}[x^n](x)$.
    The type class is exactly the set of all permutations of $x^n$, the equivalence class of sequences.
    As far as assigning probabilities to these objects, we can reason about them as equivalent to one another.

    It is known that, as $n \rightarrow \infty$, the probability of the multiset converges to a delta function \cite{cover1999elements} over a single multiset.
    This is intuitive from a statistical standpoint: if the sequence length is large, and the symbols are i.i.d., then the empirical distribution, must approach the true data distribution.
    To see how, note that the negative log-probability of a sequence under an arbitrary i.i.d.\ distribution, $Q_{X^n} = \prod_i Q_X$, can be written as a function of its type,
    \begin{align}\label{eq:method-of-types-prob}
        -\frac{1}{n}\cdot\log Q_{X^n}(x^n)
         & = -\frac{1}{n}\cdot\log \prod_{i=1}^n Q_X(x_i)                            \\
         & = -\frac{1}{n}\cdot\log \prod_{x \in \X} Q_X(x)^{n \cdot \hat{P}[x^n](x)} \\
         & = - \sum_{x \in \X} \hat{P}[x^n](x) \cdot \log Q_X(x)                     \\
         & =  H(\hat{P}[x^n]) + \KL{\hat{P}[x^n]}{Q_X}.
    \end{align}
    Second, the size of the equivalence class can be upper bounded by considering the probability of the equivalence class under the empirical distribution,
    \begin{align}
        \sum_{x^n \in [x^n]} \hat{P}[x^n](x^n)
         & = \abs{T(\hat{P}[x^n])} \cdot \hat{P}[x^n](z^n)             \\
         & = \abs{T(\hat{P}[x^n])} \cdot 2^{-n \cdot H(\hat{P}[x^n])},
    \end{align}
    for any $z^n \in [x^n]$,
    where the last step follows from \Cref{eq:method-of-types-prob} with $Q_X =\hat{P}[x^n]$.
    Since $\hat{P}[x^n](x^n)$ is a probability mass function, then the sum above must be upper-bounded by $1$, implying
    \begin{align}
        \abs{T(\hat{P}[x^n])} \leq 2^{n \cdot H(\hat{P}[x^n])}.
    \end{align}
    This allows us to write, for any $z^n \in [x^n]$,
    \begin{align}
        P_\M(\M)
         & = \sum_{x^n \in [x^n]} P_{X^n}(x^n)        \\
         & = \abs{T(\hat{P}[x^n])} \cdot P_{X^n}(z^n) \\
         & \leq 2^{-n \cdot \KL{\hat{P}[x^n]}{P_X}},
    \end{align}
    which is non-zero for $n \rightarrow \infty$ only when $\hat{P}[x^n] = P_X$.
    This suggests that, as $n$ grows, the information present in the type is less relevant to the overall cost of sending the sequence.
    The information present in the permutation, i.e., the ordering between elements, dominates.

    ROC can be seen as an algorithm to communicate the type, without the ordering information.
    This is somewhat at odds with the result discussed above, as the amount of bits present in the type goes to zero when $n$ grows to infinity.
    However, for practical applications, where $n$ is always finite, the amount of information in the type is still relevant.
    Of practical relevance is the example of when $P_X$ is uniform over $\X$ (e.g., sets of identifiers in database applications, as in \Cref{sec:rcc-experiments}).
    In this case, the probability of the multiset is a constant times the multinomial coefficient, implying the information content of two multisets will be equal if the set of their counts is equal.
}

\section{Experiments}\label{sec:roc-experiments}
In this section we present experiments on synthetically generated multisets with known source distribution, multisets of grayscale images with lossy codecs, and collections of JSON maps represented as a multiset of multisets.
We used the ANS implementation in the Craystack library \cite{townsend2020a} for all experiments.

\subsection{Synthetic multisets}
Here, synthetically generated multisets are compressed to provide evidence of the computational complexity, and optimal compression rate of the method.
We grow the alphabet size $|\X|$ while sampling from the source in a way that guarantees a fixed number of unique symbols $m = 512$. The alphabet $\X$ is always a subset of $\mathbb{N}$.

For each run, we generate a multiset with $m=512$ unique symbols, and use a skewed distribution, sampled from a Dirichlet prior with coefficients $\alpha_k = k$ for $k=1,\ldots,|\X|$, as the distribution $P_X$.

The final compressed size of the multiset and the information content (i.e.  Shannon lower bound) assuming the distribution \(P_X\), are shown in \Cref{fig:roc-toy-multisets} for different settings of \(\abs{\X}\), alongside the total encode plus decode time.
Results are averaged over 20 runs, with shaded regions representing the 99\% to 1\% confidence intervals.
In general, the new codec compresses a multiset close to it's information content for varying alphabet sizes, as can be seen in the left plot.

The total encode plus decode time is unaffected by the alphabet size $|\X|$.
As discussed previously, the overall complexity depends on that of coding under \(P_X\).
Here, the \(P_X\) codec does include a logarithmic time binary search over $|\M|$, but this is implemented efficiently and the alphabet size can be seen to have little effect on overall time.
The total time scales linearly with the multiset size $|\M|$ (right plot), as expected.

\begin{figure}[ht]
    \begin{center}
        \begin{tabular}{cc}
            \includegraphics[width=0.45\textwidth]{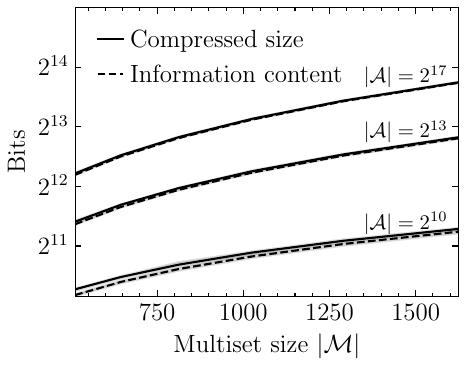} &
            \includegraphics[width=0.45\textwidth]{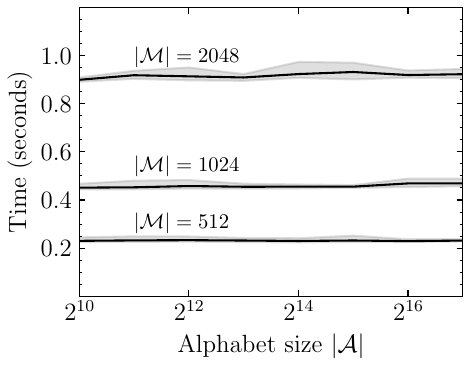}
        \end{tabular}
        \caption{Left: Final compressed length is close to the information content for varying alphabet and multiset sizes.
            Right: Computational complexity does not scale with alphabet size $|\A|=|\X|$, and is linear in $|\M|.$}
        \label{fig:roc-toy-multisets}
    \end{center}
\end{figure}

\subsubsection{MNIST with lossy WebP}
We implemented compression of multisets of grayscale images using the lossy WebP codec.
We tested on the MNIST test set, which is composed of $10,000$ distinct grayscale images of handwritten digits, each $28\times 28$ in size.
To encode the multiset, we perform the sampling procedure to select an image to compress, as usual.
The output of WebP is a prefix-free, variable-length sequence of bytes, which we encoded into the ANS state via a sequence of $\encode$ steps with a uniform distribution.

We compared the final compressed length with and without the sampling step. In other words, treating the dataset as a multiset and treating it as an ordered sequence. The savings achieved by using our method are shown in \Cref{fig:roc-mnist-jsonmaps}.
The theoretical limit shown in the left plot is $\log |\M|!$, while in the right plot this quantity is divided by the number of bits needed to compress the data sequentially.

\begin{figure}[ht]
    \begin{center}
        \begin{tabular}{cc}
            \includegraphics[width=0.485\textwidth]{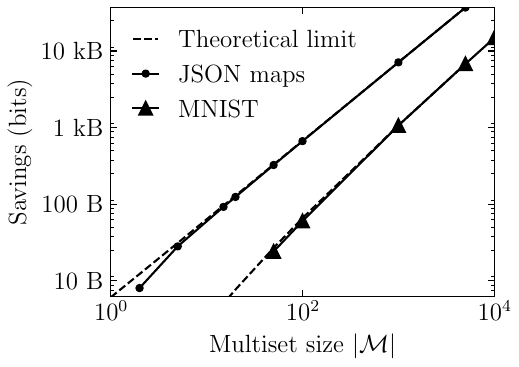} &
            \includegraphics[width=0.46\textwidth]{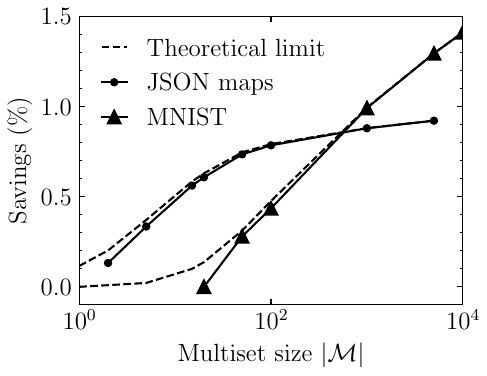}
        \end{tabular}
    \end{center}
    \caption{
        Rate savings due to using our method to compress a multiset instead of treating it as an ordered sequence.
        Savings are close to the theoretical limit in both cases.
        The symbols are bytes outputted by lossy WebP for MNIST, and UTF-8 encoding for JSON maps.
        A uniform distribution over bytes is used to encode with ANS.
        Left: Savings in raw bits.
        Right: Percentage savings.}
    \label{fig:roc-mnist-jsonmaps}
\end{figure}

Note that the maximum savings per symbol $\frac{1}{|\M|}\log_2 |\M|! \approx \log_2 |\M|$ depends only on the size of multiset.
Therefore, when the representation of the symbol requires a large number of bits, the percentage
savings are marginal (roughly 1.5\% for 10,000 images, in our case).
To improve percentage savings (right plot), one could use a better symbol codec or an adaptive codec which doesn't treat the symbols as independent.
However, as mentioned, the savings in raw bits (left plot) would remain the same, as it depends only on the multiset size $|\M|$.

\subsection{Collection of JSON maps as nested multisets}
The method can be nested to compress a multiset of multisets, by performing an additional sampling step that first chooses which inner multiset to compress.
In this section we show results for a collection of JSON maps $\M \defeq \{\J_1, \dots, \J_{|\M|}\}$, where each map $\J_i \defeq \{(k_1, v_1), \dots, (k_{|\J_i|}, v_{|\J_i|})\}$ is itself a multiset of key-value pairs.

To compress, a depth-first approach is taken.
First, some $\J \in \M$ is sampled without replacement using an ANS decode.
Key-value pairs are then sampled from $\J$, also without replacement, using ANS decode, and compressed to the ANS state until $\J$ is depleted.
This procedure repeats, until the outer multiset $\M$ is empty. Assuming all maps are unique, the maximum number of savable bits is
\begin{equation}\label{eq:nested-savings}
    \log |\M|! + \sum_{i=1}^{|\M|} \log |\J_i|!.
\end{equation}
The collection of JSON maps is composed of public GitHub user data taken from a release of the Zstandard project\footnote{\url{https://github.com/facebook/zstd/releases/tag/v1.1.3}}.
All key-value elements were cast to strings, for simplicity, and are encoded as UTF-8 bytes using a uniform distribtuion.
\Cref{fig:roc-mnist-jsonmaps} shows the number and percentage of saved bits.
The theoretical limit curve shows the maximum savable bits with nesting, i.e.\ \Cref{eq:nested-savings}.
Note that, without nesting, the theoretical limit would be that of MNIST (i.e. $\log |\M|!$).
The method gets very close to the maximum possible savings, for various numbers of JSON maps.
The rate savings were small, but these could be improved by using a better technique to encode the UTF-8 strings.

Assuming $\J$ represents the JSON map in $\M$ with the largest number of key-value pairs, and that the time complexity of comparing two JSON maps is \(\Omega(|\J|)\), the complexity of the four BST operations for the \emph{outer} multiset is \(\Omega(|\M| \cdot |\J| \cdot \log|\M|)\).
The overall expected time complexity for both encoding and decoding is therefore
\begin{align}
    \Omega(|\M| \cdot |\J| \cdot (\log|\M| + \log|\J|))
\end{align}
It is possible to reduce this by performing the inner sampling steps in parallel, or by speeding up the JSON map comparisons.

\subsection{Lossless Neural Compression on Binarized MNIST}
Originally, BB-ANS was introduced as an entropy coder for latent variable models (LVM), such as Gaussian mixtures, that are learned from source samples (see \cite{townsend2020a} and \Cref{chapter:bits-back-lvms} for a detailed discussion).
In this section, we use the original implementation of BB-ANS in-place of the codec for $P_X$.
The average bit-length achieved by BB-ANS is an upper-bound on the cross-entropy between the LVM and the true source distribution, and is equal to the Negative Evidence Lower Bound \cite{townsend2020a}.
We used the pre-trained model and code made publicly available by the author\footnote{\url{https://github.com/bits-back/bits-back}}.

First, an ANS decode is performed to select a binarized MNIST image for compression.
Then, BB-ANS is applied as described in the experimental section of \cite{townsend2020a}.
This process is repeated, until all $10,000$ images are compressed.
We compared the average bit-length with and without the invertible sampling step.
Since the images are all unique, the maximum theoretical savings is $\log(10,000!) \text{ bits} \approx 14\ \text{kB}$.
This represents a potential savings of $7.6\%$, which is achieved by our method, at the cost of only $10\%$ extra computation time on average.

\section{Discussion}
As discussed in \cite{varshney2006}, a sequence can be viewed as a multiset (frequency count of symbols) paired with a permutation (the order of symbols in the multiset).
In this sense, our method implicitly pairs a multiset with a permutation that is incrementally built through sampling without replacement.
Through this pairing, we convert a computationally intractable problem (compressing a multiset) into a tractable one (compressing a sequence), while still achieving the optimal rate.
This general technique of making a compression problem easier, by augmenting the data in some way, is known as bits-back coding (see Related Work section). Other fruitful applications of bits-back coding may exist which have yet to be discovered.

The stack-like nature of ANS precludes random access to symbols in the compressed multiset, as well as streaming multiset communication, because the full multiset must be known before encoding can begin.
Methods may exist which allow streaming and/or random access, and this may be another interesting research direction.

We see a number of potential generalizations of the method we have presented.
Firstly, as suggested in the introduction, the independence assumption can be relaxed: the method can also be used with an adaptive sequence compressor, as long as the implied model over symbols is \emph{exchangeable} (that is, the probability mass function is invariant to permutations of its inputs).
This condition is equivalent to the adaptive decoder not depending on the order of previously observed symbols, i.e.,\ at each decoding step the symbol codec only has access to the \emph{multiset} of symbols observed so far.

Two examples of distributions over multisets in which symbols are exchangeable but not i.i.d.\ are given in \cite{Steinruecken2016-oy}.
The first is multisets of symbols drawn i.i.d.\ from an \emph{unknown} distribution, where the distribution itself is drawn from a Dirichlet prior.
The unknown distribution is effectively learned during decoding.
The second is uniformly distributed K-combinations (or submultisets) of some fixed ambient multiset.
One way to generate a uniformly distributed K-combination is to sample elements without replacement from the ambient multiset.
It is possible to efficiently compress K-combinations using an extension of the method in this paper, taking advantage of the fact that the ambient multiset is static and avoiding materializing it explicitly.
We leave more detailed discussion of this to future work.

As well as using the method with adaptive codecs, it would also be interesting to explore applications to more elaborate multiset-like structures.
For example, multigraphs, hypergraphs, and more sophisticated tree structured files, such as the JSON example in the experiments section, which may be represented as nested multisets all the way down.
\chapter{Random Cycle Coding}\label{chapter:rcc}
A \emph{clustering} is a collection of pairwise disjoint sets, called \emph{clusters}, used throughout science and engineering to group data under context-specific criteria.
A clustering can be decomposed conceptually into two parts of differing nature.
The \emph{data set}, created by the set union of all clusters, and the \emph{assignments}, indicating which elements belong to which cluster.
This chapter is concerned with the \emph{lossless} communication and storage of the assignment information, for arbitrary data sets, from an information theoretic and algorithmic viewpoint.

Communicating clusters appears as a fundamental problem in modern vector similarity databases such as FAISS \cite{johnson2019billion}.
FAISS is a database designed to store vectors of large dimensionality, usually representing pre-trained embeddings, for similarity search.
Given a query vector, FAISS returns a set of the $k$-nearest neighbors \cite{lloyd1982least} available in the database under some pre-defined distance metric (usually the L2 distance).
Returning the exact set requires an exhaustive search over the entire database for each query vector which quickly becomes intractable in practice.
FAISS can instead return an approximate solution by performing a two-stage search on a coarse and fine grained set of database vectors.
The database undergoes a training phase where vectors are clustered into sets and assigned a representative (i.e., a centroid).
FAISS first selects the $k^\prime$-nearest clusters, $k^\prime < k$, based on the distance of the query to the centroids, and then performs an exhaustive search within them to return the approximate $k$-nearest neighbors.

The cluster assignments must be stored to enable cluster-based approximate searching.
In contrast to a class, a cluster is distinguishable only by the elements it contains, and is void of any labelling.
However, cluster assignments are often stored alongside the data set in the form of artificially generated labels.
Current numbers \cite{chen2010approximate, martinez2016revisiting, babenko2014additive, jegou2010product, huijben2024residual} suggest that labelling, for the sake of clustering, can represent the majority of bits spent for communication and storage in a typical use case, and will become the dominating factor as the performance of lossy compression algorithms, used to store the high-dimensional vectors, improves.


In this chapter we show how to communicate and store cluster assignments without creating artificial labels, providing substantial storage savings for vector similarity search applications.
Assignments are implicitly represented by a cycle of a permutation defined by the order between encoded elements.
Our method, \emph{Random Cycle Coding} (RCC), uses bits-back coding \cite{townsend2019practical} to pick the order in which data points are encoded.
The choice of orderings is restricted to the set of permutations having disjoint cycles with elements equal to some cluster.
RCC is optimal for distributions over assignments that assign probability proportional to the product of cluster sizes.
In the sense given by \cite{severo2023compressing}, it achieves the Shannon bound \cite{cover1999elements} in bit savings for the class of bits-back algorithms.
The worst-case computational complexity of RCC is quasi-linear in the largest cluster size and requires no training or machine learning techniques.

\section{Problem Setting}
Let $X^n = (X_1, \dots, X_n)$ be a sequence of independent random variables $X_i$ with common, but arbitrary, alphabet $\X$.
Throughout we assume that a \emph{total ordering} can be defined for $\X$, i.e., elements of the set can be compared and ranked/sorted according to some predefined criteria (e.g., lexicographical ordering).
We assume no repeats happen in the sequence.
This is motivated by applications where elements are high-dimensional vectors such as embeddings or images where repeats are unlikely to happen.

We are interested in the setting where the elements of $X^n$ are grouped into a random clustering $\Pi$  (\Cref{def:clustering}).
The order between elements in a cluster is irrelevant and clusters are void of labels.
Conditioned on the sequence $X^n=x^n$ the clustering $\Pi$ defines a partition of the data set $\SetDataSet = \{x_1, \dots, x_n\}$.
The objective is to design a lossless source code, with implied probability model $Q_{\Pi \g \SetDataSet}$, for the assignments $\Pi$ that can be used alongside any codec for the data set $\SetDataSet$.
In what follows, we first describe the coding procedure defining the model $Q_{\Pi \g \SetDataSet}$.
We then show the model resulting from our procedure assigns probability proportional to the product of cluster sizes.

\section{Method}
\begin{figure}[t]
    \centering
    \includegraphics[width=0.8\textwidth]{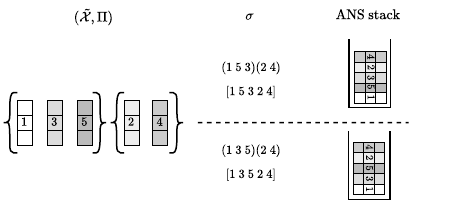}
    \caption
    {
        High-level description of our method, Random Cycle Coding (RCC).
        RCC encodes the clustering $(\tilde{\X}, \Pi)$ as cycles in the permutation $\sigma$ induced by the relative ordering of objects.
        \textbf{Left:} Indices represent the rankings of objects in $\tilde{\X}$ according to the total ordering of $\X$.
        \textbf{Middle:} One of two permutations, in Foata's space, will be randomly selected with bits-back coding to represent the clustering.
        \textbf{Right:} The objects of $\tilde{\X}$ are encoded according to the selected permutation, implicitly encoding the clusters $\Pi$ in the permutation's disjoint cycles.}
    \label{fig:method-rcc}
\end{figure}
Our strategy will be to send the elements of $\SetDataSet$ in a particular ordering such that it implicitly encodes the clustering information.
To achieve this, we view cluster assignments as partitions of the integer interval $[n]$.
Each ordering of the data set will induce a cluster assignment via the cycles of an associated permutation.

We associate a permutation $\sigma_{x^n}$ to each of the possible $n!$ orderings of $\SetDataSet$ based on sorting.
Define $s^n$ to be a reference sequence created by sorting the elements in $\SetDataSet$ according to the total ordering $\leq$ of $\X$,
\begin{align}
    s_i \in \SetDataSet \quad\text{such that}\quad s_1 < s_2 < \dots < s_n.
\end{align}
Let $\SetOrders$ be the set of all possible orderings of $\SetDataSet$ created from the action of permutations in $\Sym_n$ on the reference sequence,
\begin{align}\label{eq:set-orders}
    \SetOrders = \left\{(s_{\sigma(1)}, \dots, s_{\sigma(n)}) \colon \sigma \in \Sym_n\right\}.
\end{align}
For any $x^n \in \SetOrders$, the induced permutation $\sigma_{x^n}$ is defined as that which permutes the elements of the reference $s^n$ such that $x^n$ is obtained,
\begin{align}
    \left(s_{\sigma_{x^n}(1)}, \dots, s_{\sigma_{x^n}(n)}\right) = (x_1, \dots, x_n).
\end{align}
Under this definition the induced permutation can also be constructed by directly substituting $x_i$ for its \emph{ranking} in $\SetDataSet$.
The following equivalence relation helps us establish the relationship between permutations and cluster assignments.
\begin{definition}[Cycle Equivalence]\label{def:cycle-equivalence}
    Let $\SetDataSet$ be an arbitrary set with orderings $\SetOrders$ as in \Cref{eq:set-orders}.
    Two sequences $x^n, y^n \in \SetOrders$ are equivalent, written $x^n \sim y^n$, if the disjoint cycles of their induced permutations contain the same elements.
\end{definition}
\begin{example}[Cycle-Equivalence]\label{example:cycle-equivalence}
    Let $\SetDataSet = \{2, 4, 6, 8\}$ under the usual ordering for natural numbers.
    Sequences,
    \begin{align}
        x^n=(4,6,2,8) \quad\text{and}\quad z^n=(6,2,4,8),
    \end{align}
    induce permutations
    \begin{align}
        \sigma_{x^n} = [2,3,1,4] = (4)(1\ 2\ 3) \quad\text{and}\quad \sigma_{z^n} = [3,1,2,4] = (4)(1\ 3\ 2).
    \end{align}
    The sequences are equivalent as the disjoint cycles of the induced permutations contain the same elements.
    For both sequences, elements $2, 4$, and $6$ are in the same cycle, while $8$ is in a cycle of its own.
\end{example}

\begin{definition}[Partitions from Permutations]
    To each $x^n \in \SetOrders$, we associate an induced partitioning of $[n]$ created from the cycles of $\sigma_{x^n}$,
    \begin{align}
        \{c^1_1\ \dots\ c^1_{n_1}\} \{c^2_1\ \dots\ c^2_{n_2}\} \dots \{c^\ell_1\ \dots\ c^\ell_{n_\ell}\}.
    \end{align}
\end{definition}
\begin{example}[Partitions as Permutations]\label{example:permutations-and-partitions}
    The following grid shows 4 partitions (top) and the set of all induced permutations that create them.
    \begin{center}
        \begin{tabular}{c c c c}
            $\{4\}\{2, 6\}\{1, 5, 3\}$ & $\{4, 1\}\{2, 6\}\{3, 5\}$ & $\{1, 2, 3, 4\}\{5\}$ & $\{1\}\{2\}\{3\}\{4\}\{5\}$ \\
            \hline
            $(4)(2\ 6)(1\ 5\ 3)$       & $(4\ 1)(2\ 6)(3\ 5)$       & $(1\ 2\ 3\ 4)(5)$     & $(1)(2)(3)(4)(5)$           \\
            $(4)(2\ 6)(1\ 3\ 5)$       &                            & $(1\ 2\ 4\ 3)(5)$     &                             \\
                                       &                            & $(1\ 3\ 2\ 4)(5)$     &                             \\
                                       &                            & $(1\ 3\ 4\ 2)(5)$     &                             \\
                                       &                            & $(1\ 4\ 2\ 3)(5)$     &                             \\
                                       &                            & $(1\ 4\ 3\ 2)(5)$     &
        \end{tabular}
    \end{center}
\end{example}
With this definition we can group orderings into equivalent sets, where all orderings in the same set induce the same partitioning of $[n]$.
This is known as the \emph{quotient set}, with elements called \emph{equivalence classes},
\begin{align}
    \SetOrders/{\sim} \defeq \left\{\mathcal{E} \colon  x^n, y^n \in \mathcal{E} \text{ if, and only if, } \sigma_{x^n} \sim \sigma_{y^n}, \text{ for all } x^n, y^n \in \SetOrders \right\}.
\end{align}
From \Cref{def:clustering} we know each cluster assignment is equivalent to a partition on $[n]$.
We can therefore interpret each equivalence class as a cluster assignment and vice-versa.
\begin{definition}[Cluster Assignments As Equivalence Classes]
    Let $(\SetDataSet, \Pi)$ be an arbitrary finite set with cluster assignments, and $\sim$ the cycle-equivalence relation (\Cref{def:cycle-equivalence}).
    Then $\Pi$ can be redefined, without loss of generality, as a random variable taking on values in the quotient set $\quotient{\SetOrders}{\sim}$, where all orderings of $\SetDataSet$ that induce the same partitioning/assignments belong to the same equivalence class.
\end{definition}
\begin{example}[Cluster Assignments as Equivalence Classes]
    For \Cref{example:cycle-equivalence}, the assignments, viewed as an equivalence class, is equal to $\Pi$ = $\{x^n, z^n\}$, as there are no other permutations over $\SetDataSet$ that are equivalent to the two shown.
\end{example}
\begin{example}[Cluster Assignments as Equivalence Classes - 2]
    For \Cref{example:permutations-and-partitions}, if $\SetDataSet = [5]$, then the set of permutations underneath each partition is the equivalence class interpretation of the assignments.
\end{example}

The size of an equivalence class $\Pi \in \quotient{\SetOrders}{\sim}$ can be computed by counting permutations with the same elements in each cycle.
Given a permutation, shifting one cycle left or right, with wrap-around, does not change the cycle.
To generate a new permutation, we can fix one element of a cycle and permute the rest.
For a cluster assignment with $\ell$ clusters and $n_i$ elements in each cluster, the logarithm of the size of the equivalence class is therefore,
\begin{align}\label{eq:logpi}
    \log\abs{\Pi} = \sum_{i=1}^\ell \log\left((n_i-1)!\right).
\end{align}
Similar to Random Order Coding, Random Cycle Coding uses bits-back to select a representative of the equivalence class to be encoded in place of $\Pi$, which is composed of all elements of the data set $\SetDataSet$.
To achieve the wanted optimality, RCC must remove exactly $\log\abs{\Pi}$ bits from the ANS state while encoding the representative sequence.
At each step, an element from $\SetDataSet$ is selected, using an ANS decode operation, and then encoded using an arbitrary codec (which we will call the \emph{symbol codec}).
Interleaving decoding/sampling and encoding avoids the \emph{initial bits} issue \cite{townsend2019practical} resulting from the initially empty ANS state as discussed in \Cref{sec:bbans-initial-bits}.

Random Order Coding \cite{severo2023compressing} (ROC) performs a similar procedure for multiset compression where elements are also sampled without replacement from $\SetDataSet$.
However, there the equivalence classes consist of all permutations over $\SetDataSet$, and therefore sampling can be done by picking any element from $\SetDataSet$ uniformly at random.
RCC requires sampling without replacement from $\SetDataSet$ non-uniformly such that the resulting permutation has a desired cycle structure (i.e., such that the resulting sequence is in the correct equivalence class).
This can be done with the following extension of Foata's Bijection \cite{foata1968netto}.

\begin{definition}[Foata's Canonicalization]\label{def:foatas-canon}
    The following steps map all sequences in the same equivalence class, $x^n \in \Pi$, to the same \emph{canonical} sequence $c^n \in \Pi$.
    First, write the permutation in disjoint cycle notation and sort the elements within each cycle, in ascending order, yielding a new permutation.
    Next, sort the cycles, based on the first (i.e., smallest) element, in descending order.
\end{definition}

\begin{example}[Foata's Canonicalization]
    The set composed of permutations $\sigma = (3\ 1)(5\ 2\ 4)$ and $\pi = (3\ 1)(2\ 5\ 4)$ is an equivalence class. Applying Foata's Canonicalization to either $\sigma$ or $\pi$ yields $(2\ 4\ 5)(1\ 3)$, which is equal to $\sigma$.
\end{example}

\begin{algorithm}
    \textbf{Inputs:}
    \begin{itemize}
        \item Assignment as a list of lists $\Pi = [[x^1_1, \dots, x^1_{n_1}], [x^2_1, \dots, x^2_{n_2}], \dots, [x^\ell_1, \dots, x^\ell_{n_\ell}]]$ sorted according to Foata's Theorem
        \item Initial ANS state
        \item Symbol codec
    \end{itemize}
    \For{$c = \ell, \dots, 1$}{
    \nl Encode $\left\{x^c_2, \dots, x^c_{n_c}\right\}$ with ROC using the given symbol codec\\
    \nl Encode $x^c_1$ with symbol codec
    }
    \Return Final ANS state
    \caption{Pseudo-code for encoding with RCC.}
    \label{alg:rcc-encode}
\end{algorithm}

\begin{algorithm}
    \textbf{Inputs:}
    \begin{itemize}
        \item Total number of elements $n$
        \item Final ANS state, constructed from \Cref{alg:rcc-encode}
        \item Symbol Codec
    \end{itemize}
    Initialize $\Pi = [\ ], c=0$\\
    \While{total number of elements in $\Pi$ is less than $n$}{
    Decode $x^c_1$ with the symbol codec\\
    Decode elements $x^c_i$ with ROC until an element smaller than $x^c_1$ is seen\\
    Add all decoded elements to $\Pi$ as a list $[x^c_1, \dots, x^c_{n_c}]$\\
    Increment $c$
    }
    \Return $\Pi$, Initial ANS state
    \caption{Pseudo-code for decoding with RCC.}
    \label{alg:rcc-decode}
\end{algorithm}
\paragraph{Algorithm}
RCC encodes a permutation in Foata's space using the procedure outlined in \Cref{alg:rcc-encode}.
The elements of $\mathcal{\tilde{X}}$ are inserted into lists according to their clusterings.
The clustering is canonicalized according to \Cref{def:foatas-canon}.
The encoder starts from the last, i.e., right-most, list.
The list is encoded as a set using ROC, with the exception of the smallest element, which is held-out and encoded last.
This procedure repeats until all lists are encoded.
During decoding the first element is known to be the smallest in its cycle.
The decoder then decodes the remaining cycle elements using ROC, and stops when it sees an element smaller than the current smallest element.
This marks the start of a new cycle and repeats until all elements are recovered.

\paragraph{Savings}
Encoding the smallest value last guarantees that the cycle structure is maintained.
Permuting the remaining elements in the cycle spans all permutations in $\Pi$.
For the $i$-th cluster with $n_i$ elements the savings from encoding $n_i - 1$ elements with ROC is equal to $\log((n_i - 1)!)$.
The total savings is equal to \eqref{eq:logpi}, implying RCC saves $\log\abs{\Pi}$.

\paragraph{Implied Probability Model}
The probability model $Q_{\Pi \g \SetDataSet}$ is indirectly defined by the savings achieved by RCC.
The set of elements and clustering assignments $(\SetDataSet, \Pi)$ are encoded via a sequence $x^n \in \Pi$.
We can assume some lossless source code is used for the data points, requiring $-\log Q_{X^n}(x^n)$ bits to encode the sequence in the large ANS state regime.
The cost of encoding the dataset and cluster assignments equals the cost of encoding a sequence minus the discount given by bits-back,
\begin{align}
    - \log Q_{\SetDataSet, \Pi}(\SetDataSet, \Pi) = - \log Q_{X^n}(x^n) - \log \abs{\Pi}.
\end{align}
From \Cref{eq:multiset-info-content} we know the cost of encoding the set $\SetDataSet$ is that of the sequence minus the cost of communicating an ordering,
\begin{align}
    - \log Q_{\SetDataSet}(\SetDataSet) = -\log Q_{X^n}(x^n) - \log(n!).
\end{align}
From this, we can write,
\begin{align}
    - \log Q_{\Pi \g \SetDataSet}(\Pi \g \SetDataSet)
     & = \log Q_{\SetDataSet}(\SetDataSet) - \log Q_{X^n}(x^n) - \log \abs{\Pi} \\
     & = \log(n!) - \log \abs{\Pi}.
\end{align}
The implied probability model only depends on the cluster sizes, and assigns higher probability when there are few clusters with many elements,
\begin{align}\label{eq:rcc-probability-model}
    Q_{\Pi \g \SetDataSet}(\Pi \g \SetDataSet) = \frac{\prod_{i=1}^k (n_i -1)!}{n!},
\end{align}
where $n_i$ is the size of the $i$-th cluster, and $k$ the total number of clusters.

\paragraph{Complexity}
The complexity of RCC will vary significantly according to the number of clusters and elements.
Initializing RCC requires sorting elements within each cluster, which can be done in parallel, followed by a sort across clusters.
ROC is used as a sub-routine and has both worst- and average-case complexities equal to $\Omega(n_i \log n_i)$ for encoding and decoding the $i$-th cluster.
The total worst- and average-case computational complexities of RCC adapts to the size of the equivalence class:
\begin{align}
    \Omega\left(\sum_i n_i \log n_i\right) = \Omega(\log\abs{\Pi}).
\end{align}
When only one permutation can represent the cluster assignments, i.e., $n = k$, implying $\log\abs{\Pi} = 0$, then RCC has the same complexity as compressing a sequence: $\Omega(n)$.

\section{Related Work}\label{sec:related-work}
To the best of our knowledge there is no other method which can perform lossless compression of clustered high-dimensional data.

ROC is a method to compress multisets of elements drawn from arbitrary sets.
ROC can compress clusterings by viewing them as sets of clusters, but requires encoding the cluster sizes, which can become a significant share of the compression rate.
We compare RCC against the following two variants of ROC next and provide experiments in \Cref{sec:rcc-experiments}.

\paragraph{ROC-1}
The cluster sizes are communicated with a uniform distribution of varying precision and clusters are then encoded into a common ANS state.
Each cluster contributes $\log(n_i!)$ to the bits-back savings, resulting in a reduction in bit-rate of
\begin{align}
    \Gamma_{\text{ROC-1}}
     & = \sum_{i=1}^{k}\left(\log(n_i!) - \log(n-N_i)\right)                 \\
     & = \sum_{i=1}^{k} \log \left(\frac{n_i}{n-N_i} \right) + \log\abs{\Pi} \\
     & \leq \log\abs{\Pi},
\end{align}
where $k$ is the number of clusters, $N_i = \sum_{j=1}^{i-1} n_j$ counts the number of encoded elements before step $i$, and $\log(n - N_i)$ is the cost of encoding the size of the $i$-th cluster.
The gap to optimality increases with the number of clusters, while RCC is always optimal as it achieves $\log\abs{\Pi}$ for any configuration of elements and clusters.

\paragraph{ROC-2}
This variant views the clusterings as a set of sets.
The cluster sizes are communicated as in ROC-1.
However, an extra bits-back step is done to randomly select the ordering in which the $k$ clusters are compressed, resulting in further savings.
The complexity of this step scales quasi-linearly with the number of clusters, $\Omega(k \log k)$, and requires sending the number of clusters ($\log(n)$ bits), which is also the size of the outer set.
The total reduction in bit-rate is
\begin{align}
    \Gamma_{\text{ROC-2}}
    = \Gamma_{\text{ROC-1}} + \log(k!) - \log(n) \leq \log\abs{\Pi}.
\end{align}
This method achieves a better rate than ROC-1, but can require significantly more compute and memory resources due to the extra bits-back step to select clusters compared to both ROC-1 and RCC.
Conditioned on knowing the cluster sizes, ROC-2 compresses each cluster independently.
Intuitively, the method does not take into account that clusters are pairwise disjoint and their union equals the interval of integers from $1$ to $n$, which explains why it achieves a sub-optimal rate savings.

\section{Experiments}\label{sec:rcc-experiments}
\subsection{Minimum and maximum achievable savings}\label{sec:experiments-minmax-savings}
\begin{figure}[t]
    \centering
    \includegraphics[width=\textwidth]{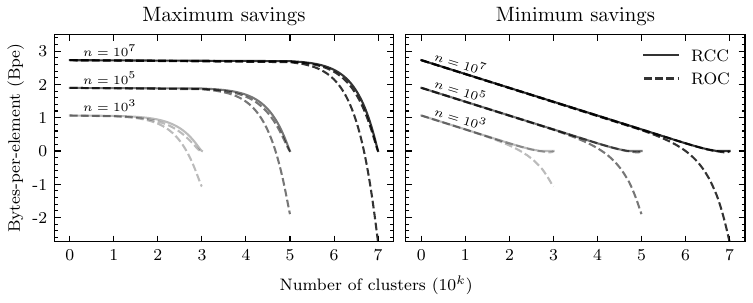}
    \caption{
        Maximum (left) and minimum (right) byte savings per element as a function of the number of clusters and elements.
        Savings are maximized when one cluster contains most elements and all others are singletons.
        The minimum is achieved if clusters have roughly the same number of elements.
        Two variants of Random Order Coding (ROC) \cite{severo2023compressing} are shown (see \Cref{sec:related-work}) with dashed lines.
        Random Cycle Coding (RCC) achieves higher savings than both variants of ROC while requiring less memory and computational resources.
    }
    \label{fig:max-and-min-savings}
\end{figure}
In applications targeted by RCC method (see \Cref{sec:experiments-faiss}) the cluster size is set according to some budget and elements are allocated into clusters via a training procedure.
For a fixed number of elements ($n$) and clusters ($k$) the savings for ROC-1, ROC-2, and RCC will depend only on the cluster sizes $(n_1, \dots, n_k)$.
We empirically analyzed the minimum and maximum possible savings as a function of these quantities.
Results are shown in \Cref{fig:max-and-min-savings}.

The term dominating the bit savings of all algorithms is the product of the factorials of cluster sizes, $\prod_i n_i!$, constrained to $\sum_i n_i = n$ and $n_i \geq 1$.
The maximum is achieved when $n-k$ elements fall into one cluster, $n_j = n - k + 1$, and all others are singletons: $n_i = 1$ for $i \neq j$.
Savings are minimized when all clusters have roughly the same size: $n_i = (n \div k) + \1\{i \leq n \bmod k\}$%
\footnote{$\div$ represents integer division, $n \bmod k$ is the remainder, and $\1\{\}$ is the indicator function that evaluates to 1 if the expression is true.}.

All methods provide similar savings when $k \ll n$.
RCC has better maximal and minimal savings than both ROC-1 and ROC-2 in all settings considered.
The need to encode cluster sizes, without exploiting the randomness of cluster orders as in ROC-2, results in ROC-1 achieving \emph{negative} savings when the number of clusters $k$ is large.
RCC savings converge to $0$ bits as the number of clusters approaches the number of elements, as expected.
As $k$ approaches $n$, ROC-2 also suffers from negative savings, but the values are negligible compared to those of ROC-1.

\subsection{Encoding and decoding times}\label{sec:encoding-decoding-times}
\begin{figure}[t]
    \centering
    \includegraphics[width=\textwidth]{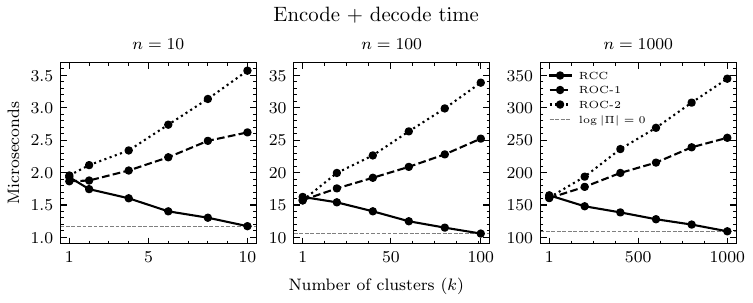}
    \caption{Median encoding plus decoding times, across $100$ runs, for Random Order Coding (ROC) \cite{severo2023compressing} and our method Random Cycle Coding (RCC).
        Clusters are fixed to have roughly the same size, $n/k$, mirroring vector database applications discussed in \Cref{sec:experiments-faiss}.
        Reported times are representative of the amount of compute needed to sample the permutation in the bits-back step as data vectors are encoded with ANS using a uniform distribution.
    }
    \label{fig:times}
\end{figure}
\Cref{fig:times} shows the total encoding plus decoding time as a function of number of elements and clusters.
RCC outperforms both variants of ROC in terms of wall-time by a wide margin, while achieving the optimal savings.
RCC is compute adaptive and requires the same amount of time to encode a sequence when $\log\abs{\Pi}=0$.
The compute required for ROC variants increases with the number of clusters, correlating negatively with $\log\abs{\Pi}$.
ROC-2 is slower than ROC-1 due to the extra bits-back step needed to select clusters with ANS.

\subsection{Inverted-lists of Vector Databases (FAISS)}\label{sec:experiments-faiss}
\begin{table}[t]
    \centering
    \begin{tabular}{rrccccccc}
        \toprule
                                &                                               &                  & \multicolumn{6}{c}{Savings}                                                                                      \\ \cmidrule{4-9}
        Dataset                 & \# Elements                                   & \# Clusters      & Max.                        & Min.   & $\frac{1}{8n}\log\abs{\Pi}$ & RCC             & ROC-2           & ROC-1   \\
        \midrule
        \multirow{6}{*}{SIFT1M} & \multirow{6}{*}{$1{\small,}000{\small,}000$}  & $500$            & $2.31$                      & $1.19$ & $1.20$                      & $\mathbf{0.00}$ & $\mathbf{0.00}$ & $0.04$  \\
                                &                                               & $1000$           & $2.31$                      & $1.06$ & $1.08$                      & $\mathbf{0.00}$ & $\mathbf{0.00}$ & $0.10$  \\
                                &                                               & $4{\small,}973$  & $2.30$                      & $0.77$ & $0.81$                      & $\mathbf{0.00}$ & $0.04$          & $0.86$  \\
                                &                                               & $9{\small,}821$  & $2.29$                      & $0.65$ & $0.71$                      & $\mathbf{0.00}$ & $0.12$          & $2.17$  \\
                                &                                               & $46{\small,}293$ & $2.20$                      & $0.37$ & $0.48$                      & $\mathbf{0.00}$ & $1.28$          & $18.28$ \\
                                &                                               & $95{\small,}284$ & $2.07$                      & $0.24$ & $0.30$                      & $\mathbf{0.00}$ & $2.07$          & $61.43$ \\
        \midrule
        \multirow{6}{*}{BigANN} & \multirow{3}{*}{$1{\small,}000{\small,}000$}  & $1{\small,}000$  & $2.31$                      & $1.06$ & $1.07$                      & $\mathbf{0.00}$ & $\mathbf{0.00}$ & $0.10$  \\
                                &                                               & $10{\small,}000$ & $2.29$                      & $0.65$ & $0.66$                      & $\mathbf{0.00}$ & $0.01$          & $2.27$  \\
                                &                                               & $99{\small,}946$ & $2.06$                      & $0.23$ & $0.25$                      & $\mathbf{0.00}$ & $0.79$          & $76.28$ \\
        \cmidrule{3-9}
                                & \multirow{3}{*}{$10{\small,}000{\small,}000$} & $1{\small,}000$  & $2.73$                      & $1.48$ & $1.49$                      & $\mathbf{0.00}$ & $\mathbf{0.00}$ & $0.01$  \\
                                &                                               & $10{\small,}000$ & $2.72$                      & $1.06$ & $1.07$                      & $\mathbf{0.00}$ & $\mathbf{0.00}$ & $0.14$  \\
                                &                                               & $99{\small,}998$ & $2.70$                      & $0.65$ & $0.66$                      & $\mathbf{0.00}$ & $0.01$          & $2.89$  \\
        \bottomrule
    \end{tabular}
    \vspace{1em}
    \caption{Byte savings, per element, from compressing SIFT1M \cite{jegou2010product} and BigANN \cite{jegou2011searching} as a function of number of elements and clusters.
        Values in columns RCC, ROC-2, and ROC-1, indicate the gap, in percentage (lower is better), to the optimal savings in bytes-per-element, in column $\frac{1}{8n}\log\abs{\Pi}$.
        A value of $0.00$ indicates the method achieves the maximum bit savings shown in column $\frac{1}{8n}\log\abs{\Pi}$.
        Columns Max. and Min. show the theoretical maximum and minimum savings as discussed in \Cref{sec:experiments-minmax-savings}.
    }
    \label{table:faiss-savings-bytes}
\end{table}
\begin{table}[t]
\centering
\begin{tabular}{cccccccc}
\toprule
\multicolumn{1}{l}{} & \multicolumn{1}{l}{}        & \multicolumn{6}{c}{\% Savings}                                 \\
\multicolumn{1}{l}{} & \multicolumn{1}{l}{}        & \multicolumn{3}{c}{Sequential ids} & \multicolumn{3}{c}{External ids} \\
\cmidrule(lr){3-5}
\cmidrule(lr){6-8}
$n$                  & $\frac{1}{8n}\log\abs{\Pi}$ & $4$        & $8$       & $16$      & $4$       & $8$       & $16$     \\
\midrule
$1$M                 & $1.06$                      & $54.8$     & $33.9$    & $19.2$    & $8.9$     & $6.7$     & $4.4$    \\
$10$M                & $1.27$                      & $60.5$     & $38.3$    & $22.1$    & $10.6$    & $8.0$     & $5.3$    \\
$100$M               & $1.48$                      & $65.6$     & $42.4$    & $24.9$    & $12.3$    & $9.3$     & $6.2$    \\
$1$B                 & $1.69$                      & $70.1$     & $46.2$    & $27.5$    & $14.1$    & $10.6$    & $7.0$    \\
\bottomrule
\end{tabular}
\vspace{0.5em}
\caption{
Columns under ``\% Savings" show the savings, in percentage, for the setting of \cite{johnson2019billion} where the number of clusters is held fixed to approximately $\sqrt{n}$.
Savings in bytes-per-element are shown in the second column, where $\log\abs{\Pi} = \sqrt{n}\log((\sqrt{n}-1)!)$, and agree with \Cref{table:faiss-savings-bytes}.
For external ids, $8$ bytes are added to $\frac{1}{8n}\log\abs{\Pi}$ to compute the total size per element, as well as to the cost under RCC.
Meanwhile, $\log(n)$ bits are added to $\frac{1}{8n}\log\abs{\Pi}$ for sequential ids, but not to the cost under RCC, as RCC does not require ids to represent clustering information.
See \Cref{sec:experiments-faiss} for a full discussion.
}
\label{table:faiss-savings-percentage}
\end{table}

We experimented applying ROC and RCC to FAISS \cite{johnson2019billion} databases of varying size.
Results are shown in \Cref{table:faiss-savings-bytes}.
Scalar quantization \cite{cover1999elements, lloyd1982least} was used to partition the set of vectors into disjoint clusters.
This results in clusters of approximately the same number of elements, which is the worst-case savings for both ROC and RCC.
RCC achieves optimal savings for all combinations of datasets, number of elements, and clusters.
ROC-2 has similar performance to RCC but requires significantly more compute as shown in \Cref{fig:times}.

The total savings will depend on the cluster sizes, the number of bytes used to encode each element (i.e., FAISS vector/embedding), as well as the size of id numbers in the database.
Cluster sizes are often set to $\sqrt{n}$ resulting in $\log\abs{\Pi} = \sqrt{n}\log((\sqrt{n} - 1)!)$ \cite{johnson2019billion}.
A vast literature exists on encoding methods for vectors \cite{chen2010approximate, martinez2016revisiting, babenko2014additive, jegou2010product, huijben2024residual}
with typical values ranging from $8$ to $16$ bytes for BigANN and $4$ to $8$ for SIFT1M.
Typically $8$ bytes are used to store database ids when they come from an external source and have semantics beyond the vector database itself.
Alternatively, ids are assigned sequentially taking up $\log(n)$ bits each when their only purpose is to be stored as sets to represent clustering information.
These ids can be removed if vectors are stored with RCC as the clustering information is represented by the relative orderings between objects without the need for ids.
\Cref{table:faiss-savings-percentage} shows savings for RCC for the setting of \cite{johnson2019billion} with $k \approx \sqrt{n}$ clusters.

\section{Discussion}
This chapter provides an efficient lossless coding algorithm for storing random clusters of objects from arbitrary sets.
Our method, Random Cycle Coding (RCC), stores the clustering information in the ordering between encoded objects and does away with the need to assign meaningless labels for storage purposes.
RCC implicitly encodes clustering information in the disjoint cycles of the permutation defined by the relative ordering between data points.
RCC randomly picks the order in which data points are encoded into the ANS state, while simultaneously guaranteeing that each disjoint cycle, of the permutation defined by the ordering, has the same elements as some cluster.

We show that a random clustering can be decomposed into 2 distinct mathematical quantities, the data set of objects present in the clusters (i.e., the union of all clusters), and an equivalence class representing the cluster assignments.
For a given clustering, the equivalence class contains all possible orderings of the data that have cycles with the same elements as some cluster.
The logarithm of the equivalence class size is exactly the number of bits needed to communicate an ordering of the data points, with the wanted permutation cycles, which we refer to as the \emph{order information}.
This quantity was previously defined by \cite{varshney2006toward} as the amount of bits required to communicate an ordering of a sequence if the multiset of symbols was given, and equaled $\log n!$ when there are no repeated symbols.
In the cluster case, the order information $\log\abs{\Pi}$ is strictly less than $\log n!$ as the clustering carries partial information regarding the ordering between symbols in the following way: given the clustering, only orderings with the corresponding cycle structure will be communicated.

The savings achieved by RCC equals exactly the \emph{assignment information} of the data, implying RCC is optimal in terms of compression rate for the probability model shown in \Cref{eq:rcc-probability-model}.
The computational complexity of RCC scales with the number of bits recovered by bits-back, and reverts back to that of compressing a sequence when all clusters are atomic.

The savings for RCC scales quasi-linearly with the cluster sizes, and is independent of the representation size of the data.
The experiments on real-world datasets from vector similarity search databases showcases where we think our method is most attractive: clusters of data requiring few bytes per element to communicate, where the bits-back savings can represent a significant share of the total representation size.
\chapter{Random Edge Coding}\label{chapter:rec}
\emph{Network data}, such as social networks and web graphs \cite{newman2018networks}, can be represented as large graphs, or multigraphs, with millions of nodes and edges corresponding to members of a population and their interactions.
To preserve the underlying community structure, as well as other relational properties, algorithms for compressing and storing network graphs must be \emph{lossless}.

The entropy coding of a network can, in general, be done in quadratic time with respect to the number of vertices by storing binary variables indicating the absence or presence of each possible edge in the graph.
Most real-world networks are \emph{sparse} in the sense that the number of edges $m$ is significantly smaller than the maximum number of possible edges $\binom{n}{2}$ in a simple graph \cite{newman2018networks}.
A graph that is not sparse is known as a \emph{dense} graph.
Algorithms with sub-quadratic complexity in the number of observed edges are more attractive for compressing network data than those that scale quadratically with the number of nodes.
Some networks exhibit \emph{small-world} characteristics where most nodes are not connected by an edge but the degree of separation of any 2 nodes is small \cite{newman2018networks}.
The degree distributions are heavy-tailed due to the presence of hubs, i.e., vertices with a high degree of connectivity.
Random graph models that assign high probability to graphs with small-world characteristics are thus preferred to model and compress these network types.

This chapter presents a lossless source code for large labeled graphs called \emph{Random Edge Coding} (REC).
REC is optimal under a broad class of distributions referred to as \emph{edge-permutation invariant} (\Cref{def:rec-edge-permutation-invariance}) and can achieve competitive performance on real-world networks as we show in \Cref{sec:rec-experiments}.
When paired with Pólya's Urn \cite{mahmoud2008polya}, a parameter-free model described in \Cref{sec:rec-method}, REC requires only integer arithmetic and the worst-case computational and memory complexities scale quasi-linearly and linearly with the number of observed edges in the graph.
REC is applicable to both simple and non-simple labeled graphs, with directed or undirected edges, as well as hyper-graphs.

REC uses \emph{bits-back coding} \cite{frey1996free,townsend2019practical} to sample edges and vertices without replacement from the graph's edge-list, similar to Random Order Coding (ROC) and Random Cycle Coding (RCC).
Sampling is done by decoding from a shared random state, which also stores the final message (i.e. the bits of the graph).

Recent methods for lossless compression of graphs are reviewed in \Cref{sec:rec-related-work}.
In \Cref{sec:rec-method} we discuss parameter-efficient models that yield good probability values on large sparse networks for which REC is optimal.
Together with Pólya's Urn, REC is shown to achieve compression results competitive with the state-of-the-art on real-world network datasets in \Cref{sec:rec-experiments}.

\section{Related Work}\label{sec:rec-related-work}
To the best of our knowledge there is no previous coding method that can scale to large graphs and is optimal for the broad class of edge-permutation invariant (EPI) graphs of \Cref{def:rec-edge-permutation-invariance}.

A number of previous works have presented adhoc methods for lossless compression of large graphs including Pool Compression \cite{yousuf2022pool}, SlashBurn \cite{lim2014slashburn}, List Merging \cite{grabowski2014tight-LM}, BackLinks \cite{chierichetti2009compressing-BL}, and Zuckerli \cite{versari2020zuckerli}.
In sum, these methods attempt to exploit local statistics of the graph edge-list by defining an ordering of the vertex sequence that is amenable to compression.
See \cite{yousuf2022pool} for an overview of the methods.
Re-ordering techniques would yield no effect for EPI models as all permutations of the edge sequence have the same probability, a consequence of \Cref{def:rec-edge-permutation-invariance}.
In \Cref{sec:rec-experiments}, \Cref{table:rec-results} we compare the performance of these methods with that of entropy coding under Pólya's Urn model using our method and show that it performs competitively and can even outperform previous methods on sparser datasets.

Another machine learning method for lossless graph compression is Partition and Code \cite{bouritsas2021partition}.
The method decomposes the graph into a collection of subgraphs and performs gradient descent to learn a dictionary code over subgraphs.
While achieving good compression performance on small graph datasets, it is unclear if these methods can scale to networks with millions of nodes and edges.

Modeling random graphs is a well studied field dating back to the early work of Erdős, Rényi, and Gilbert \cite{erdHos1960evolution} where either graphs with the same number of edges are equally likely or an edge is present in the graph with a fixed probability $p$.
The field has since evolved to include the stochastic block model \cite{holland1983stochastic}, where the edge probability is allowed to depend on its endpoints, as well as its mixed-membership variant \cite{airoldi2008mixed}.
More recently, \cite{caron2017sparse, cai2016edge, crane2018edge} have found some success in modeling real-world network graphs.
These models have been used in a number of applications including clustering \cite{sewell2020model}, anomaly detection \cite{luo2021anomalous}, link-prediction \cite{JMLR:v17:16-032}, community detection \cite{zhang2022node}, and have been extended to model hierarchical networks \cite{dempsey2021hierarchical}.

Our work draws upon a large body of work on the statistical modeling of networks \cite{newman2018networks, bloem2017random, crane2018edge}.
The model used in this work assigns probability to a vertex- or edge-sequence autoregressively.
Neural network models have also been used for autoregressive graph modeling such as \cite{you2018graphrnn, bacciu2020edge, goyal2020graphgen}. See \cite{zhu2022survey} for a survey.
The probability assigned by these models usually depend on the order in which vertices or edges were added to the graph, in contrast to the Pólya's Urn based-model used in this work which is order-invariant.

\section{Problem Setting}
For any $m \in \Naturals$, let $(V_1, \dots, V_{2m}) \sim P_{V^{2m}}$ be a sequence of random variables, with alphabet $\V^{2m}$, representing elements of the vertex set.
We're interested in compressing the graph defined by the random equivalence class with alphabet equal to the quotient set under some equivalence relation between sequences in $V^{2m}$,
\begin{align}
    G = [V^{2m}], \text{ with alphabet } {\V^{2m}}/{\sim}.
\end{align}
The definition of $\sim$ will depend on the type of graph (e.g., simple, directed, undirected) being compressed, as discussed in \Cref{sec:combinatorial-objects-graphs}.
For a fixed instance $G = g$, we want to find an algorithm that approaches the optimal code-length, in the large ANS state regime, for graphs with a large number of edges,
\begin{align}
    \log 1/P_{V^{2m}}(v^{2m}) - \log\abs{\eclass{v^{2m}}}.
\end{align}

\section{Method}\label{sec:rec-method}
\emph{Random Edge Coding} is an optimal lossless code for vertex sequences drawn from a PMF that is invariant to permutations of the edges and of vertices within an edge.
This is characterized formally by the following definition.

\begin{definition}[Edge-Permutation Invariance (EPI)]\label{def:rec-edge-permutation-invariance}
    Let $v^{2m}$ be a vertex sequence with edges defined as $e_i = (v_{2i-1}, v_{2i})$ and $\sigma$ an arbitrary permutation over $m$ elements.
    Given a collection $(\pi_k)_{k=1}^m$ of permutation functions over 2 elements, each over integers $(2 \cdot j-1, 2 \cdot j)$, we say that a PMF $P_{V^{2m}}$ is \emph{edge-permutation invariant} if the following holds
    \begin{align}
        P_{V^{2m}}(e_1, \dots, e_m) = P_{V^{2m}}(\tilde{e}_{\sigma(1)}, \dots, \tilde{e}_{\sigma(m)}),
    \end{align}
    where
    \begin{align}
        \tilde{e}_j = (v_{\pi_j(2\cdot j-1)}, v_{\pi_k(2 \cdot j)}).
    \end{align}
\end{definition}

A stronger property that implies EPI is that of \emph{vertex-permutation invariance}, which coincides with the common definition of finite exchangeability of sequences.

\begin{definition}[Vertex-Permutation Invariance (VPI)]\label{def:rec-vertex-permutation-invariance}
    Let $v^{2m}$ be a vertex sequence and $\pi$ an arbitrary permutation function over $2m$ elements.
    We say that a PMF $P_{V^{2m}}$ is \emph{vertex-permutation invariant} if the following holds
    \begin{align}
        P_{V^{2m}}(v_1, \dots, v_{2m}) = P_{V^{2m}}(v_{\pi(1)}, \dots, v_{\pi(2m)}),
    \end{align}
    for all permutations $\pi$.
\end{definition}

For an arbitrary $k \in \Naturals$, Pólya's Urn (PU) model \cite{mahmoud2008polya} defines a joint probability distribution over a vertex sequence $v^k$ that is VPI.
The generative process of PU is as follows.
An urn is initialized with $\beta_v$ copies of each vertex labeled from $v=1$ to $n$.
At step $i$, a vertex is sampled from the urn, assigned to $v_i$, and then returned to the urn together with an extra copy of the same vertex.
The joint PMF is defined via a sequence of conditional distributions
\begin{align}\label{eq:fhm-conditional}
    P_{V_{i+1} \g V^i}(v_{i+1} \g v^i) \propto d_{v^i}(v_{i+1}) + \beta,
\end{align}
where $d_{v^i}(v) = \sum_{j=1}^i \1\{v = v_j\}$ is the degree of vertex $v$ in $v^i$.

\begin{lemma}[Polya's Urn is VPI]
    The joint PMF of Pólya's Urn is VPI and, therefore, EPI,
    \begin{align}
        P_{V^k}(v^k)
         & = \frac{1}{(n\beta)^{\uparrow k}}\prod_{v \in \V} \beta^{\uparrow d_{v^k}(v)}.
    \end{align}
\end{lemma}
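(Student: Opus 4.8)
The plan is to compute $P_{V^k}$ in closed form from the chain rule applied to the conditionals \eqref{eq:fhm-conditional}, and then read off VPI (and hence EPI) from the fact that the resulting expression depends on $v^k$ only through the degree counts $d_{v^k}(v)$. Recall that permuting the edges, and swapping the two vertices inside an edge, are both special cases of applying a permutation to the full length-$2m$ vertex sequence, so VPI with $k = 2m$ immediately gives EPI; it therefore suffices to prove VPI together with the displayed formula.

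First I would make the normalizing constant in \eqref{eq:fhm-conditional} explicit. Since $\sum_{v \in \V} d_{v^i}(v) = i$ and $\abs{\V} = n$, summing $d_{v^i}(\cdot) + \beta$ over all vertices gives $i + n\beta$, so
\begin{align}
    P_{V_{i+1} \g V^i}(v_{i+1} \g v^i) = \frac{d_{v^i}(v_{i+1}) + \beta}{i + n\beta}.
\end{align}
Applying the chain rule $P_{V^k}(v^k) = \prod_{i=0}^{k-1} P_{V_{i+1} \g V^i}(v_{i+1} \g v^i)$ produces a ratio: the denominator is $\prod_{i=0}^{k-1}(i + n\beta) = (n\beta)^{\uparrow k}$ by definition of the ascending factorial, and the numerator is $\prod_{i=0}^{k-1}\bigl(d_{v^i}(v_{i+1}) + \beta\bigr)$.

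The key step is to regroup the numerator by vertex identity. Fix $v \in \V$ and look at the steps $i$ for which $v_{i+1} = v$; there are exactly $d_{v^k}(v)$ such steps, and because $d_{v^i}(v)$ counts the occurrences of $v$ in $v_1, \dots, v_i$ and increases by exactly one each time $v$ is drawn, the values $d_{v^i}(v)$ seen at those steps are precisely $0, 1, \dots, d_{v^k}(v) - 1$. Hence vertex $v$ contributes $\prod_{j=0}^{d_{v^k}(v)-1}(\beta + j) = \beta^{\uparrow d_{v^k}(v)}$ to the numerator, and multiplying over all $v \in \V$ yields $P_{V^k}(v^k) = \frac{1}{(n\beta)^{\uparrow k}}\prod_{v \in \V}\beta^{\uparrow d_{v^k}(v)}$. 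To present this without appealing to reordering a product, one may instead run a short induction on $k$ with this closed form as the hypothesis, checking that the single-step update multiplies by exactly $\frac{\beta + d_{v^k}(v_{k+1})}{k + n\beta}$ and that $(n\beta)^{\uparrow k}(k + n\beta) = (n\beta)^{\uparrow(k+1)}$ while $\beta^{\uparrow d_{v^k}(v_{k+1})}\bigl(\beta + d_{v^k}(v_{k+1})\bigr) = \beta^{\uparrow d_{v^{k+1}}(v_{k+1})}$. Finally, the resulting expression is a symmetric function of the degree sequence $(d_{v^k}(v))_{v \in \V}$, which is unchanged by any reordering of $v^k$, so $P_{V^k}(v_1, \dots, v_k) = P_{V^k}(v_{\pi(1)}, \dots, v_{\pi(k)})$ for every permutation $\pi$ of $k$ elements, i.e.\ VPI; specializing to $k = 2m$ gives EPI. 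I do not expect a real obstacle here: the only point needing care is the regrouping claim --- that the multiset of degree values encountered for each vertex is exactly $\{0, 1, \dots, d_{v^k}(v) - 1\}$ --- and the induction makes this transparent.
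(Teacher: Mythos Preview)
Your proposal is correct and follows essentially the same approach as the paper: compute the normalizing constant $i + n\beta$, expand the joint via the chain rule, regroup the numerator factors by vertex identity to obtain $\prod_{v} \beta^{\uparrow d_{v^k}(v)}$, and conclude VPI from the fact that the result depends only on the degree counts. The paper illustrates the regrouping step with a worked example rather than stating the $\{0,1,\dots,d_{v^k}(v)-1\}$ claim abstractly or offering the induction alternative, but the argument is otherwise identical.
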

\begin{proof}
    The proof follows from directly computing the joint $P_{V^k}(v^k)$ from the conditionals and showing that it depends on factors that are invariant to permutations of the vertices in $v^k$.
    The normalizing constant for each conditional is
    \begin{align}
        \sum_{v_{i+1} \in \V} \left(d_{v^i}(v_{i+1}) + \beta\right) = i + n\beta.
    \end{align}
    The joint is defined as the product of the conditionals,
    \begin{align}
        \Pr(v^k) = \prod_{i=0}^{k-1} \frac{d_{v^i}(v_{i+1}) + \beta}{i + n\beta}.
    \end{align}
    At step $i$, the generative process appends a vertex to the existing sequence resulting in the product of non-decreasing degrees (plus the bias $\beta$) in the numerator.
    We can regroup the degree terms and rewrite it as a function of the final degree $d_{v^k}$.
    For example, consider the following sequence and its joint distribution
    \begin{align}
        v^k & = 12\ 23\ 21                                    \\
        \Pr(v^k)
            & = \frac{\overbrace{\beta}^{v_1=1}}{n\beta}\cdot
        \frac{\overbrace{\beta}^{v_2=2}}{1 + n\beta}\cdot
        \frac{\overbrace{1 + \beta}^{v_3=2}}{2 + n\beta}\cdot
        \frac{\overbrace{\beta}^{v_4=3}}{3 + n\beta}\cdot
        \frac{\overbrace{2 + \beta}^{v_5=2}}{4 + n\beta}\cdot
        \frac{\overbrace{1 + \beta}^{v_6=1}}{5 + n\beta}      \\
            & = \frac{1}{\prod_{i=0}^{k-1}(i + n\beta)}\cdot
        \underbrace{\beta\cdot(1 + \beta)}_{v_1=v_6=1}\cdot
        \underbrace{\beta\cdot(1 + \beta)\cdot(2 + \beta)}_{v_2=v_3=v_5=2}\cdot
        \underbrace{\beta}_{v_4=3}.
    \end{align}

    In general, the joint takes on the form below,
    where equation \eqref{eq:polya-joint} depends only on the degrees of the vertices in the final sequence $v^k$, guaranteeing VPI.
    \begin{align}\label{eq:polya-joint}
        P_{V^k}(v^k)
         & = \frac{1}{\prod_{i=0}^{k-1}(i + n\beta)}\prod_{v \in \V} (\beta)(1 + \beta)\dots(d_{v^k}(v) - 1 + \beta) \\
         & = \frac{1}{(n\beta)^{\uparrow k}}\prod_{v \in \V} \beta^{\uparrow d_{v^k}(v)}.
    \end{align}
\end{proof}

Although the generative process is fairly simple, the resulting joint distribution can achieve competitive probability values for real-world networks with small-world characteristics as indicated in \Cref{sec:rec-experiments}.
To understand why, note that equation \eqref{eq:polya-joint} assigns higher probability to graphs with non-uniform degree distributions.
The product of factorial terms is the denominator of the multinomial coefficient, and hence is largest when some vertex dominates all others in degree.

PU is parameter-free and therefore requires $0$ bits to be stored and transmitted.
The PMF and CDF of PU, needed for coding with ANS, can be computed with integer arithmetic.
It therefore does not suffer from floating-point rounding errors, which is common in compression algorithms relying on probability estimates \cite{yangneural2023, balle2019integer}.

Compressing the graph with entropy coding requires computing the PMF and CDF of the graph from the PMF of the vertex sequence.
Although the PMF of the graph for PU has a closed form expression, the alphabet size grows exponentially with the number of nodes which makes direct entropy coding infeasible.

To compress, we employ the same strategies as in ROC and RCC, where we map the graph to a sequence of equivalence classes containing vertex sequences.
The vertex equivalence class $\eclass{v^{2m}}$ of a graph $G$ with $m$ edges is the set of all vertex sequences that map to $G$ (see \Cref{sec:combinatorial-objects-graphs}).
PMFs satisfying \Cref{def:rec-edge-permutation-invariance} assign equal probability to sequences in the same equivalence class.
Therefore, the negative log-probabilities of $G, \eclass{v^{2m}}$, and $v^{2m}$ are related by
\begin{align}\label{eq:info-content}
    \log1/P_G(G)
     & = \log 1/P_{V^{2m}}(v^{2m}) - \log\abs{\eclass{v^{2m}}},
\end{align}
for any $v^{2m}$ that maps to graph $G$.
The size of the equivalence class can be computed by counting the number of edge-permutations and vertex-permutations within an edge, which for undirected graphs add up to
\begin{align}\label{eq:savings}
    \log\abs{\eclass{w^{2m}}} = m + \log m!.
\end{align}
The relationship between the probabilities implies we can create a bits-back code for the graph by compressing one of its vertex sequences if we can somehow get a number of bits back equal to $\log\abs{\eclass{v^{2m}}}$.
This leads to the naive \Cref{alg:naive}, which we describe below, that suffers from the initial bits issue (see \Cref{sec:bbans-initial-bits}).

\begin{algorithm}[h]
    \caption{Naive Random Edge Encoder}
    \label{alg:naive}
    \begin{algorithmic}
        \STATE {\bfseries Input:} Vertex sequence $v^{2m}$ and ANS state $s$.
        \STATE 1) Edge-sort the vertex sequence $v^{2m}$
        \STATE 2) Decode a permutation $\sigma$ uniformly w/ prob. $1/\abs{\eclass{v^{2m}}}$
        \STATE 3) Apply the permutation to the vertex sequence
        \STATE 4) Encode the permuted vertex sequence
        \STATE 5) Encode $m$ using $\log m$ bits
    \end{algorithmic}
\end{algorithm}
At step 1) we sort the vertex sequence without destroying the edge information by first sorting the vertices within an edge and then sorting the edges lexicographically.
For example, edge-sorting sequence $(34\ 12\ 32)$ yields $(12\ 23\ 34)$.
In step 2) an index is decoded that corresponds to a permutation function agreed upon by the encoder and decoder, which is applied to the sequence in step 3).
Note these permutations do not destroy the edge information by design.
Finally, in step 4), the permuted sequence is encoded using $P_{V_{i+1} \g V^i}(v_{i+1} \g v^i)$ followed by the number of edges.
Decoding a permutation reduces the number of bits in the ANS state by exactly $\log\abs{\eclass{v^{2m}}}$, while encoding the vertices increases it by $\log1/P_{V^{2m}}(v^{2m})$.
From \eqref{eq:info-content}, the net change, in the large state regime, is exactly the information content of the graph: $\log1/P_G(G)$.

The decoder acts in reverse order and perfectly inverts the encoding procedure, restoring the ANS state to its initial value.
First, $m$ is decoded.
Then the sequence is decoded and the permutation is inferred by comparing it to its sorted version.
Finally, the permutation is encoded to restore the ANS state.

Unfortunately, this method suffers from the initial bits problem, as the decode step happens before encoding, implying there needs to be existing information in the ANS state for the bit savings to occur.
It is possible to circumvent this issue by incrementally sampling a permutation, similarly to Random Order Coding and Random Cycle Coding.
This yields \Cref{alg:bb-exg}, which we describe below.

\begin{algorithm}[h]
    \caption{Random Edge Encoder}
    \label{alg:bb-exg}
    \begin{algorithmic}
        \STATE {\bfseries Input:} Vertex sequence $v^{2m}$ and ANS state $s$.
        \STATE 1) Edge-sort the vertex sequence $v^{2m}$\\
        \REPEAT
        \STATE 2) Decode an edge $e_k$ uniformly from the sequence
        \STATE 3) Remove $e_k$ from the vertex sequence
        \STATE 4) Decode a binary vertex-index $b$ uniformly in $\{0, 1\}$
        \STATE 5) Encode $e_k[b]$
        \STATE 6) Encode $e_k[1-b]$
        \UNTIL{The vertex sequence is empty}
        \STATE 7) Encode $m$ using $\log m$ bits.
    \end{algorithmic}
\end{algorithm}

REC progressively encodes the sequence by removing edges in a random order until the sequence is depleted.
As before, we edge-sort the vertex sequence in step 1) without destroying the edge information.
Then, in steps 2) and 3), an edge is sampled without replacement from the sequence by decoding an integer $k$ between $1$ and the size of the remaining sequence.
Since the graph is undirected, we must destroy the information containing the order of the vertices in the edge.
To do so, in step 4), we decode a binary index $b$ and then encode vertices $e_k[b], e_k[1-b]$ in steps 5 and 6.
Finally, $m$ is encoded.

The initial bits overhead is amortized as the number of edges grows.
This makes REC an optimal entropy coder for large EPI graphs.

\begin{theorem}[REC Optimality]
    Let $V^{2m}$ be a random vertex sequence representing a random graph $G$ on $m$ edges.
    If $P_{V^{2m}}$ is edge-permutation invariant, for any $m$, then the code-length of $G = g$ under Random Edge Coding approaches optimality as $m \rightarrow \infty$,
    \begin{align}
        \lim_{m \rightarrow \infty} \abs{\ell_g - \log 1/P_G(g)} = 0
    \end{align}
\end{theorem}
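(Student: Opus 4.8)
The plan is to trace one full run of \Cref{alg:bb-exg} and account, operation by operation, for the growth of the ANS state; the net growth will turn out to be the graph's information content \eqref{eq:info-content} plus a one-time overhead that is amortized over the $m$ edges. First I would fix a vertex sequence $v^{2m}$ representing $g$ and write $\ell_g$ as the bit-length of the final ANS state, equivalently the bit-length of the initial state plus the telescoping sum of the per-operation state changes. Each pass of the loop performs two decodes — sampling an edge without replacement from the edge multiset, then a uniform binary index $b\in\{0,1\}$ — followed by two encodes of that edge's endpoints under the conditionals $P_{V_{i+1}\g V^i}$; after the loop, the edge count $m$ is encoded in $\log m$ bits. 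By \Cref{theorem:optimality-ans} and the sandwich \eqref{eq:state-change-sandwich}, in the large-state regime every encode raises $\lceil\log s\rceil$ by exactly the information content of the emitted symbol under the coding distribution, and dually every decode lowers it by the information content of the sampled symbol under the sampling distribution.

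Next I would sum these contributions. The $m$ binary-index decodes return $m$ bits, and the edge decodes — sampling without replacement from the multiset of $m$ edges, exactly as in Random Order Coding — return $\log m!$ bits (all edge orderings equally likely, cf.\ the ``All Orderings are Equally Likely'' lemma of \Cref{chapter:roc}); together the decodes recover $m+\log m! = \log\abs{\eclass{v^{2m}}}$ bits by \eqref{eq:savings}. The $2m$ vertex encodes cost $-\sum_i\log P_{V_{i+1}\g V^i}(w_{i+1}\g w^i) = -\log P_{V^{2m}}(w^{2m})$ by the chain rule, where $w^{2m}$ is the edge-sorted, randomly permuted, within-edge-reordered sequence actually emitted. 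Edge-permutation invariance (\Cref{def:rec-edge-permutation-invariance}) is precisely what is needed here: $w^{2m}$ differs from $v^{2m}$ only by a permutation of edges composed with permutations of vertices inside edges, so $P_{V^{2m}}(w^{2m}) = P_{V^{2m}}(v^{2m})$. Folding the $\log m$ edge-count header, the bit-length of the initial state, and the subleading rounding corrections of \eqref{eq:state-change-sandwich} into one $O(\log m)$ term, I obtain
\begin{align}
    \ell_g &= \log 1/P_{V^{2m}}(v^{2m}) - \log\abs{\eclass{v^{2m}}} + O(\log m) \\
           &= \log 1/P_G(g) + O(\log m),
\end{align}
the last line being \eqref{eq:info-content}.

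The crux is then to dispose of the residual $O(\log m)$. Two things go into it: the edge-count header, and the \emph{initial bits} needed so that the first edge decode has enough randomness (the state must exceed the remaining sequence length at that one step). As in \Cref{chapter:roc} (the ``Initial Bits are a One-Time Overhead'' theorem), interleaving decoding with encoding confines the randomness deficit to the first sampling step, so this is a one-time cost; amortized over the $m$ edges as in the rate of \Cref{def:rate}, it vanishes, which is the sense in which $\abs{\ell_g-\log 1/P_G(g)}\to 0$. The genuinely delicate, if routine, part is turning the ``$s\to\infty$'' hypothesis of \Cref{theorem:optimality-ans} into a statement uniform in $m$: via renormalization one must check that the accumulated per-operation rounding errors of \eqref{eq:state-change-sandwich} stay negligible and that the state is already large after the first few operations as $m$ grows. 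I expect that bookkeeping, rather than the structural identity above, to be where the real work lies.
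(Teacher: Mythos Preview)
Your proposal is correct and follows essentially the same approach as the paper: tally the per-operation state changes so that the $2m$ encodes contribute $\log 1/P_{V^{2m}}(v^{2m})$, the $m$ edge-decodes plus $m$ bit-decodes recover $\sum_{i=1}^m(1+\log i)=\log\abs{\eclass{v^{2m}}}$, invoke \eqref{eq:info-content}, and amortize the initial bits and the $\log m$ header as $m\to\infty$. If anything you are more careful than the paper's own proof, which does not explicitly invoke EPI to identify $P_{V^{2m}}(w^{2m})$ with $P_{V^{2m}}(v^{2m})$ and does not discuss the renormalization bookkeeping you flag at the end.
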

\begin{proof}
    Encoding steps add $\log 1/P_{V_{i+1} \g V^i}(v_{i+1} \g v^i)$ bits to the ANS state resulting in a total increase of $\log1/P_{V^{2m}}(v^{2m})$ bits.
    Each decoding operation removes bits from the ANS state and together save $\sum_{i=1}^m (1 + \log i) = \log\abs{\eclass{v^{2m}}}$.
    From \eqref{eq:info-content}, the net change is exactly the information content of the graph: $\log1/P_G(g)$.
    The initial and $\log m$ bits (needed to encode $m$) are amortized as $m \rightarrow \infty$.
    Therefore, the number of bits in the ANS state approaches \eqref{eq:info-content}, which concludes the proof.
\end{proof}
In \Cref{sec:rec-experiments} and \Cref{table:rec-optimality} we show empirical evidence for the optimality of REC by compressing networks with millions of nodes and edges down to their information content under Pólya's Urn model.

\section{Extension to Non-Simple Graphs}\label{sec:method-non-simple}
If the graph is non-simple then the size of the equivalence class $\eclass{w^{2m}}$ will be smaller.
The savings in \eqref{eq:savings} must be recalculated by counting the number of valid permutations of edges, and vertices within an edge, that can be performed on the sequence.
Furthermore, \Cref{alg:bb-exg} must be modified to yield the correct savings.
Handling repeated loops and repeated edges requires different modifications which we discuss below.

Each non-loop edge doubles the size of the equivalence class, while loops do not as the vertices are indistinguishable and thus permuting them will not yield a different sequence.
This can be handled by skipping step 4) and setting $b=0$ if $e_k[0] = e_k[1]$.

In general, the number of possible edge-permutations in a non-simple graph $G$ with $m$ undirected edges is equal to the multinomial coefficient
\begin{align}
    \binom{m}{c_1, c_2, \dots} = \frac{m!}{\prod_{e \in G} c_e!}\leq m!,
\end{align}
where $c_e$ is the number of copies of edge $e$ and $e \in G$ iterates over the \emph{unique} edges in $G_m$.
Equality is reached when there are no repeated edges ($c_e=1$ for all edges).

To achieve this saving, \Cref{alg:bb-exg} must be modified to sample edges uniformly from the graph \emph{without replacement}.
In other words, step 2) is generalized to sample $e_k=e$ with probability $c^k_e/k$, where $c^k_e$ are the number of remaining copies of edge $e$ at step $k$.
The count $c^k_{e}$ is non-increasing for all $e$ due to step 3) which, together with step 2), implements sampling without replacement.
Furthermore, since all edges will eventually be decoded, the product of counts $\prod_{k=1}^m c^k_{e_k}$ contains all terms appearing in the factorial $c_e!$ for all edges $e$.
The saving at each step is $-\log c^k_{e_k}/k$ and together will equal the log of the multinomial coefficient
\begin{align}
    -\log\prod_{k=1}^{m} \frac{c^k_{e_k}}{k} = \log\frac{m!}{\prod_{e \in G} c_e!}.
\end{align}
These modifications together guarantee that REC is optimal for non-simple graphs.

\section{Non-EPI Models}
While REC is only optimal for EPI models, it can still be paired with any probability model over vertex sequences that have well defined conditional distributions.
For models that are not EPI the order of the vertices will affect the probability assigned to the graph.
REC in its current form will discount at most $m + \log m!$ bits (with equality when the graph is simple) and all vertex sequences will have equal probability of appearing.
The selected sequence $v^{2m}$ will be determined by the initial bits present in the ANS stack (see \Cref{sec:bbans-initial-bits}).
The number of bits needed to store the graph (i.e. information content) will therefore be,
\begin{align}
    \log 1/P_{V^{2m}}(v^{2m}) - \left(\tilde{m} + \log \binom{m}{c_1, c_2, \dots}\right),
\end{align}
where $\tilde{m}$ is the number of non-loop edges and $v^{2m}$ the random sequence selected via the sampling-without-replacement mechanism of REC.

\section{Time Complexity}
For a graph with $m$ edges, the worst-case computational complexity of encoding and decoding with REC under Pólya's Urn model is quasi-linear in the number of edges, $\Omega(m\log m)$, while the memory is linear: $\Omega(m)$.

We discuss only encoding with REC as decoding is analogous.
The first step during encoding is to sort the edge-sequence which has worst-case complexity $\Omega(m\log m)$.
Then, the edge-list is traversed and the frequency count of all vertices are stored in a binary search tree (BST) with at most $2m$ elements ($\Omega(m)$ memory).
The BST allows for worst-case look-ups, insertions, and deletions in $\Omega(\log m)$, which are all necessary operations to compute the probability, as well as cumulative probability, used during ANS coding.
Traversing the edge-list, together with the updates to the BST, require $\Omega(m \log m)$ computational complexity in the worst-case.

\section{Experiments}\label{sec:rec-experiments}
\begin{table}
    \centering
    \caption{Optimality of Random Edge Coding (REC) with Polya's Urn.}
    \label{table:rec-optimality}
    \begin{tabular}{rcccccc}
        \toprule
        Network    & Seq. NLL & REC (Ours) & Graph NLL (Optimal) & Gap (\%) \\
        \midrule
        YouTube    & 37.91    & 15.19      & 15.19               & 0.0      \\
        FourSquare & 31.14    & 9.96       & 9.96                & 0.0      \\
        Digg       & 32.67    & 10.62      & 10.62               & 0.0      \\
        Gowalla    & 32.11    & 11.69      & 11.69               & 0.0      \\
        Skitter    & 37.22    & 14.26      & 14.26               & 0.0      \\
        DBLP       & 35.48    & 15.92      & 15.92               & 0.0      \\
        \bottomrule
    \end{tabular}

\end{table}

\begin{table}
    \centering
    \label{table:rec-results}
    \caption{Lossless compression with Random Edge Coding of real-world networks}
    \begin{tabular}{rcccccc}
        \toprule
                            & \multicolumn{4}{c}{Social Networks} & \multicolumn{2}{c}{Others}                                                                  \\
                            & YouTube                             & FourSq.                    & Digg          & Gowalla        & Skitter       & DBLP          \\
        \midrule
        \# Nodes            & 3,223,585                           & 639,014                    & 770,799       & 196,591        & 1,696,415     & 317,080       \\
        \# Edges            & 9,375,374                           & 3,214,986                  & 5,907,132     & 950,327        & 11,095,298    & 1,049,866     \\
        $10^6\times$Density & 1.8                                 & 15.8                       & 19.8          & 50.2           & 7.7           & 20.9          \\
        \midrule
        PU w/ REC           & \textbf{15.19}                      & 9.96                       & 10.62         & 12.19          & 14.26         & 15.92         \\
        Pool Comp.          & 15.38                               & \textbf{9.23}              & 11.59         & \textbf{11.73} & 7.45          & \textbf{8.78} \\
        Slashburn           & 17.03                               & 10.67                      & \textbf{9.82} & 11.83          & 12.75         & 12.62         \\
        Backlinks           & 17.98                               & 11.69                      & 12.56         & 15.56          & 11.49         & 10.79         \\
        List Merging        & 15.80                               & 9.95                       & 11.92         & 14.88          & \textbf{8.87} & 14.13         \\
        \bottomrule
    \end{tabular}

\end{table}
In this section, we showcase the optimality of REC on large graphs representing real-world networks.
We entropy code with REC using Pólya's Urn (PU) model and compare the performance to state-of-the-art compression algorithms tailored to network compression.
We report the average number of bits required to represent an edge in the graph (i.e., bits-per-edge) as is common in the literature.
We use the ANS implementation available in Craystack \cite{craystack}.
Code implementing Random Edge Coding, Pólya's Urn model, and experiments are available at \url{https://github.com/dsevero/Random-Edge-Coding}.

We used datasets containing simple network graphs with small-world statistics such as YouTube, FourSquare, Gowalla, and Digg \cite{rossi2015network} which are expected to have high probability under PU.
As negative examples, we compress Skitter and DBLP networks \cite{snapnets}, where we expect the results to be significantly worse than the state of the art, as these networks lack small-world statistics.
The smallest network (Gowalla) has roughly $200$ thousand nodes and $1$ million edges, while the largest (YouTube) has more than $3$ million nodes and almost $10$ million edges.
The cost of sending the number of edges $m$ is negligible but is accounted for in the calculation of the bits-per-edge by adding $32$ bits.

To compress a graph, the edges are loaded into memory as a list where each element is an edge represented by a tuple containing two vertex elements (integers).
At each step, an edge is sampled without replacement using an ANS decode operation as described in \Cref{alg:bb-exg}.
Encoding is performed in a depth-first fashion, where an edge is encoded to completion before moving on to another.
Then, a vertex is sampled without replacement from the edge and entropy-encoded using \eqref{eq:fhm-conditional}.
The process repeats until the edge is depleted, and then starts again by sampling another edge without replacement.
The process terminates once the edge-list is empty, concluding the encoding of the graph.
Decoding is performed in reverse order and yields a vertex sequence that is equivalent (i.e., maps to the same graph as) the original graph.

We showcase the optimality of REC by compressing real-world graphs to the negative log-probability under the Pólya's Urn (PU) model.
\Cref{table:rec-optimality} shows the negative log-probability (NLOGP) of the vertex sequence and graph under PU.
As discussed in \Cref{subsec:rates-below-entropy-are-achievable} the graph's NLOGP is the value an optimal entropy coder should achieve to minimize the average number of bits with respect to the model for large sequence lengths.
REC can compress the graph to its NLOGP (as indicated by the last column of \Cref{table:rec-optimality}) for all datasets.
Compressing the graph as a sequence of vertices (i.e., without REC) would require a number of bits-per-edge equal to the sequence's NLOGP, which is significantly higher than the NLOGP of the graph as can be seen by the first column of \Cref{table:rec-optimality}.
As these methods evolve to achieve better probability values the compression performance is expected to improve automatically due to the optimality of REC.

In \Cref{table:rec-results} we compare the bits-per-edge achieved by PU using REC with current state-of-the-art algorithms for network data.
PU performs competitively on all social networks and can even outperform previous works on networks such as YouTube \cite{rossi2015network}.
The probability assigned by PU for non-social networks is expected to be low, resulting in poor compression performance, as indicated by the last 2 columns of \Cref{table:rec-results}.
The performance of PU deteriorates as the edge density increases and is visible from \Cref{table:rec-results}.

While the compression with REC is optimal for PU, the final results depend on the probability assigned to the graph under PU, which is why ad hoc methods can achieve better performance.
Nonetheless, the bits-per-edge of PU with REC is close to that of current methods.

\section{Discussion}
In this chapter we developed an algorithm capable of performing lossless coding with ANS of large edge-permutation invariant graphs: \emph{Random Edge Coding} (REC).
We provide an example use case with the self-reinforcing Pólya's Urn model \cite{mahmoud2008polya} which performs competitively with state of the art methods despite having $0$ parameters and being void of floating-point arithmetic.
This optimality implies that the efforts from the graph modeling community in improving the probability of network data under edge-permutation invariant models can be directly translated into an increased performance in lossless compression tasks, as Random Edge Coding can compress graphs to the theoretical optimum (negative log-probability) for any EPI data distribution.

Any model satisfying EPI can be used for optimal compression with REC, including neural network based models.
Learning exchangeable models has been explored in the literature \cite{niepert2014exchangeable, bloem2020probabilistic} but, to the best of our knowledge, using them for compression of graphs and other structured data is an under-explored field.

Pólya's Urn satisfies edge-permutation invariance through the invariance of the PMF to permutations of the vertices, which is a sufficient, but not necessary, condition.
An interesting direction to investigate is if there are similar models that are strictly edge-permutation invariant, that is, the PMF is invariant to permutations of edges and vertices within an edge, but not to permutation of vertices from different edges.
One way to achieve this is through the general framework of \cite{zaheer2017deep, hartford2018deep, meng2019hats}.
Let $\psi\colon [n]^2 \mapsto \mathbb{R}$ and $\Psi\colon \bigcup_{k \in \mathbb{N}} \mathbb{R}^k \mapsto \mathbb{R}^+$ be functions invariant to permutation of their arguments, possibly parameterized by some neural network.
The following joint distribution is clearly EPI,
\begin{align}
    Q_{V^{2m}}(v^{2m}) \propto \Psi(\psi(v_1, v_2), \psi(v_3, v_4), \dots, \psi(v_{2m-1}, v_{2m})).
\end{align}
However, the joint distribution may not be invariant to permutations of vertices between different edges (as intended, making it strictly EPI).
As a concrete example, take
\begin{align}
    \psi(v, w)                     & = \langle\theta_v, \theta_w\rangle \\
    \Psi(\phi_1, \dots, \phi_{2m}) & = \sum_{i \in [2m]} \exp(\phi_i),
\end{align}
where $\theta_v, \theta_w \in \mathbb{R}^\ell$ are embeddings that can be learned and $\langle\cdot,\cdot\rangle$ is the inner-product.

To apply Random Edge Coding, we need to define the conditional distributions
\begin{align}
    Q_{V_{2i}, V_{2i-1} \g V^{2(i-1)}}(v_{2i}, v_{2i-1} \g v^{2(i-1)}) = \frac{Q_{V^{2i}}(v^{2i})}{\sum_{v_{2i}, v_{2i-1}}Q_{V^{2i}}(v^{2i})}.
\end{align}
This model can also be learned via stochastic gradient descent but quickly becomes intractable in the form presented for graphs with millions of edges.

In general, a trade-off exists between the model performance and the complexity required to compute the conditional distributions.
Pólya's Urn model lies on an attractive point of this trade-off curve, but there might exist other methods that perform better without increasing complexity significantly.
We think this is a promising line of work that can yield better probability models for network data and can provide a principled approach to lossless compression of these data types.
For example, the extended PU model assigns a unique bias $\beta_v$ to each vertex $v$, which will not break the edge-permutation invariance,
\begin{align}
    Q_{V_{i+1} \g V^i}(v_{i+1} \g v^i) =\frac{d_{v^i}(v_{i+1}) + \beta_{v_{i+1}}}{i + \sum_{v \in [n]}\beta_v},
\end{align}
with corresponding joint distribution,
\begin{align}
    Q_{V^k}(v^k)
     & = \frac{1}{(\sum_{v \in [n]}\beta_v)^{\uparrow k}}\prod_{v \in [n]} (\beta_v)^{\uparrow d_{v^k}(v)}.
\end{align}
The parameters $\{\beta_v\}_{v \in [n]}$ can be learned via gradient descent methods as the gradient of the joint is easily computable.
However, this model has $n$ parameters, one for each vertex, which would need to be transmitted together with the model depending on how the model generalizes as the network grows.
\chapter{Coding Combinatorial Random Variables with Random Permutation Codes}\label{chapter:crv-rpc}
\chaptermark{Coding CRVs with Random Permutation Codes}
\begin{figure}[t]
    \centering
    \includegraphics[width=0.8\textwidth]{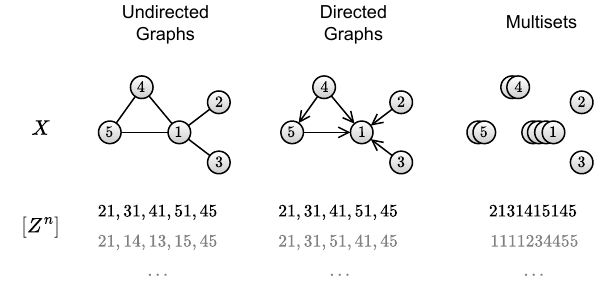}
    \caption{Varying the definition of the equivalence relation $\sim$ yields a different combinatorial random variable $X = [Z^n]$ in each column.}
    \label{fig:crv-examples}
\end{figure}
This chapter concludes the thesis by unifying previous algorithms under a common framework.
The previous chapters described algorithms to compress non-sequential data by encoding data points, in one-of-many possible orderings, with codes from the bits-back family \cite{townsend2019practical,frey1996free}.
The common element between all algorithms is the view of the non-sequential object as an equivalence class, with the equivalence relation changing depending on the type of object being compressed.

In the case of ROC, the equivalence relation groups together sequences that are equal up to a permutation of their elements; the equivalence classes are therefore interpreted as multisets (\Cref{def:multisets}).
Similarly, clusters/partitions for RCC, and all graph types discussed for REC, are defined via equivalence relations which are \emph{finer} (\Cref{def:finer-equivalence-relation}) than the equivalence relation defining multisets, i.e., permutation-equivalence.

This chapter presents a unifying view through \emph{Combinatorial Random Variables} (CRV, \Cref{def:combinatorial-random-variables}); a random variable with alphabet equal to the quotient set under some equivalence relation finer than permutation-equivalence.
Restricting the class of equivalence classes allows for the construction of efficient lossless source codes for CRVs, and allows us to characterize the achievable rates in the large state ANS regime.

\emph{Random Permutation Codes} (RPCs) are presented as a family of algorithms, which include ROC, RCC, and REC, capable of achieving the optimal rates, in the large ANS state regime, for general CRVs.
Finally, the rate achieved by RPCs is shown to be equal to the Negative Evidence Lower Bound \cite{jordan1999introduction}, a well studied quantity in the literature of variational inference.
This establishes the view that RPCs implicitly make use of a latent variable model, achieving the rates given in \Cref{lemma:rate-for-lvm-with-ans}.

\section{Combinatorial Random Variables (CRVs)}
\begin{definition}[Permutation-Equivalence]\label{def:permutation-equivalence}
    Let $\Z^n$ be an arbitrary set of sequences.
    Then, the following binary relation $\sim$ is an equivalence relation (\Cref{def:equivalence-relation}).
    Let $z^n \sim w^n$ if, and only if, there exists a multiset-permutation $\sigma$ on $n$ elements such that,
    \begin{align}
        \left( z_{\sigma(1)}, \dots, z_{\sigma(n)} \right) = (w_1, \dots, w_n).
    \end{align}
    Sequences obeying this relationship are said to be permutation-equivalent.
\end{definition}

\begin{definition}[Combinatorial Random Variables]\label{def:combinatorial-random-variables}
    A combinatorial random variable is defined with respect to a 3-tuple $(\Z, n, \sim)$,
    where $\Z$ is an arbitrary set called the \emph{symbol} alphabet,
    $n \in \Naturals$,
    and $\sim$ an equivalence relation on $\Z^n$ \emph{finer} than permutation-equivalence (\Cref{def:permutation-equivalence}).
    A CRV is any random variable with alphabet equal to the quotient set $\quotient{\Z^n}{\sim}$.
\end{definition}

This definition states the instances of CRVs are \emph{equivalence classes} under the given equivalence relation $\sim$.
The alphabet $\Z$ can be any countable set such as the set of images, all positive integers, or even a quotient set of some other CRV.
Every equivalence class under $\sim$ is a subset of an equivalence class under permutation-equivalence.

A wide range of data types can be recovered by varying the definition of $\sim$, $n$, and $\Z$, including multisets, partitions, graphs, hypergraphs, trees and many others.
We give examples next assuming $\Z$ is an arbitrary set (unless stated otherwise) and $z^n, w^n \in \Z^n$.
More example are shown in \Cref{fig:crv-examples}.

\begin{crv}[Multisets]\label{crv:multisets}
    $\sim$ is the permutation-equivalence relation of \Cref{def:permutation-equivalence}
\end{crv}
\begin{example}[Multiset CRV]
    Let $\Z = \Naturals, n=3$ and $\sim$ the permutation-equivalence relation.
    Then, the set $\{1, 1, 4\}$, expressed in usual set notation, takes on the form of the following equivalence class: $\{114, 141, 411\} \in \quotient{\Z^3}{\sim}$.
\end{example}
\begin{example}[Multiset CRV - 2]
    Let $\Z = \{\blacktriangle, \bigstar, \square\}, n=3$ and $\sim$ the permutation-equivalence relation.
    Then, the alphabet $\quotient{\Z^3}{\sim}$ is shown below alongside the corresponding multiset.
    \begin{center}
        \begin{tabular}{c c c c c c c c c c}
            Multiset                                                &  & Equivalence class in $\quotient{\Z^3}{\sim}$                                                                                                                                                                                       \\
            \cmidrule{1-1} \cmidrule{3-3}
            $\{ \blacktriangle,  \blacktriangle,  \blacktriangle\}$ &  & $\{ \blacktriangle \blacktriangle \blacktriangle\}$                                                                                                                                                                                \\
            $\{ \bigstar ,  \bigstar ,  \bigstar \}$                &  & $\{ \bigstar  \bigstar  \bigstar \}$                                                                                                                                                                                               \\
            $\{ \square ,  \square ,  \square \}$                   &  & $\{ \square  \square  \square \}$                                                                                                                                                                                                  \\
            $\{ \blacktriangle,  \blacktriangle,  \bigstar \}$      &  & $\{ \blacktriangle \blacktriangle \bigstar ,  \blacktriangle \bigstar  \blacktriangle,   \bigstar  \blacktriangle \blacktriangle\}$                                                                                                \\
            $\{ \blacktriangle,  \blacktriangle,  \square \}$       &  & $\{ \blacktriangle \blacktriangle \square ,  \blacktriangle \square  \blacktriangle,   \square  \blacktriangle \blacktriangle\} $                                                                                                  \\
            $\{ \blacktriangle,  \bigstar ,  \bigstar \} $          &  & $\{ \blacktriangle  \bigstar  \bigstar ,  \bigstar  \blacktriangle  \bigstar ,   \bigstar  \bigstar  \blacktriangle\}$                                                                                                             \\
            $\{ \bigstar ,  \bigstar ,  \square \}$                 &  & $\{ \bigstar  \bigstar  \square ,  \bigstar  \square  \bigstar ,   \square  \bigstar  \bigstar \}$                                                                                                                                 \\
            $\{ \bigstar ,  \square ,  \square \}$                  &  & $\{ \bigstar  \square  \square ,  \square  \bigstar  \square ,   \square  \square  \bigstar \}$                                                                                                                                    \\
            $\{ \blacktriangle,  \square ,  \square \}$             &  & $\{ \blacktriangle \square  \square ,  \square  \blacktriangle \square ,   \square  \square  \blacktriangle\}$                                                                                                                     \\
            $\{ \blacktriangle,  \bigstar ,  \square \}$            &  & $\{ \blacktriangle  \bigstar  \square ,  \blacktriangle \square  \bigstar ,   \bigstar  \blacktriangle \square ,   \bigstar  \square  \blacktriangle,   \square  \blacktriangle  \bigstar ,   \square  \bigstar  \blacktriangle\}$
        \end{tabular}
    \end{center}
\end{example}

\begin{crv}[Clusters/Partitions]\label{crv:clusters-partitions}
    $\sim$ is the equivalence relation defined in \Cref{def:cycle-equivalence}, where two sequences are equivalent if the cycles of their induced permutations contain the same elements.
\end{crv}
\begin{example}[Cycle CRV]
    Let $Z = \Naturals$.
    Then, the cycle $(1\ 2\ 1\ 3)$ is associated to the equivalence class $\{1213, 3121, 1312, 2131\} \in \quotient{\Z^4}{\sim}$.
\end{example}

\begin{crv}[Undirected Graphs]\label{crv:undirected-graphs}
    $\sim$ is the equivalence relation defined in \Cref{def:rec-undirected-graphs-equivalence-class}, where two vertex-sequences are equivalent if they are equal up to a permutation of edges or vertices within an edge.
\end{crv}

\begin{crv}[Directed Graphs]\label{crv:directed-graphs}
    $\sim$ is the equivalence relation defined in \Cref{def:rec-directed-graphs-equivalence-class}, where two vertex-sequences are equivalent if they are equal up to a permutation of edges.
\end{crv}

\begin{crv}[Cycles]\label{crv:cycles}
    $z^n \sim w^n$ if, and only if, the elements in one sequence can be shifted by some $\delta \in \Naturals$, with wrap-around, such that the resulting sequence equals the other,
    \begin{align}
        z_i = w_{i + \delta \bmod n}, \text{ for all } i \in [n].
    \end{align}
\end{crv}

\section{Distributions over CRVs}
Specifying a distribution over sequences in $\Z^n$ defines a distribution for the CRV.
It is sufficient to consider distributions over sequences that assign equal mass to equivalent sequences, as the following lemma characterizes.
\begin{lemma}[Sufficiency of Exchangeability]\label{lemma:sufficiency-of-exchangeability}
    Let $X \sim P_X$ be a CRV defined from $(\Z, n, \sim)$.
    For any $P_X$ there exists a distribution $Z^n \sim P_{Z^n}$ that assigns equal mass to equivalent sequences,
    \begin{align}\label{eq:crv-exchangeable}
        P_{Z^n}(z^n) = P_{Z^n}(w^n) \text{ if } z^n \sim w^n.
    \end{align}
\end{lemma}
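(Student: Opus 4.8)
The plan is to mirror the proof of the multiset version of this statement (the ``Sufficiency of Exchangeability'' theorem in \Cref{chapter:roc}): given $P_X$, I would spread the mass $P_X(x)$ of each equivalence class $x \in \quotient{\Z^n}{\sim}$ uniformly over the sequences it contains. Concretely, define
\begin{align}\label{eq:crv-exchangeable-construction}
    P_{Z^n}(z^n) \defeq \frac{P_X(\eclass{z^n})}{\abs{\eclass{z^n}}},
\end{align}
where $\eclass{z^n} \in \quotient{\Z^n}{\sim}$ is the unique equivalence class containing $z^n$ (unique because an equivalence relation partitions its set).

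First I would check that \eqref{eq:crv-exchangeable-construction} is well-defined, which is the one place the hypothesis of \Cref{def:combinatorial-random-variables} is used. Since $\sim$ is finer than permutation-equivalence (\Cref{def:permutation-equivalence}), every class $\eclass{z^n}$ is contained in the permutation-equivalence class of $z^n$, whose cardinality is the multinomial coefficient of $\multiset{z^n}$ and hence at most $n!$; combined with $\abs{\eclass{z^n}} \geq 1$ by reflexivity of $\sim$, the denominator is a finite positive integer even when $\Z$ is only countably infinite. Next I would verify $P_{Z^n}$ is a probability mass function: non-negativity is immediate, and grouping the sum over sequences by equivalence class gives
\begin{align}
    \sum_{z^n \in \Z^n} P_{Z^n}(z^n)
     & = \sum_{x \in \quotient{\Z^n}{\sim}} \sum_{z^n \in x} \frac{P_X(x)}{\abs{x}} \\
     & = \sum_{x \in \quotient{\Z^n}{\sim}} P_X(x)                                  \\
     & = 1.
\end{align}

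The equal-mass property \eqref{eq:crv-exchangeable} then follows directly: if $z^n \sim w^n$ then $\eclass{z^n} = \eclass{w^n}$, so \eqref{eq:crv-exchangeable-construction} returns the same value for both. Finally I would confirm that this $P_{Z^n}$ actually induces the original $P_X$, i.e.\ that the distribution of the CRV $\eclass{Z^n}$ under $P_{Z^n}$ is $P_X$: for any class $x$,
\begin{align}
    \sum_{z^n \in x} P_{Z^n}(z^n) = \abs{x} \cdot \frac{P_X(x)}{\abs{x}} = P_X(x).
\end{align}
I do not expect a genuine obstacle here — the argument is essentially a counting identity. The only point requiring care is the finiteness of equivalence classes used to make the uniform-spreading construction meaningful, which is precisely why \Cref{def:combinatorial-random-variables} insists that $\sim$ be finer than permutation-equivalence; without that restriction a class could be infinite and \eqref{eq:crv-exchangeable-construction} would be ill-posed.
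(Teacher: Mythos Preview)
Your proposal is correct and takes exactly the same approach as the paper: defining $P_{Z^n}(z^n) = \abs{[z^n]}^{-1} P_X([z^n])$. The paper's proof is in fact just that one-line construction, so your additional checks (well-definedness via finiteness of classes, normalization, and that the induced CRV distribution recovers $P_X$) are welcome elaborations rather than a different route.
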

\begin{proof}
    Given $P_X$, construct the following distribution assigning equal mass to all sequences in the equivalence class,
    \begin{align}
        P_{Z^n}(z^n) = \abs{[z^n]}^{-1} P_X([z^n]).
    \end{align}
\end{proof}
A distribution satisfying \Cref{eq:crv-exchangeable} is said to be \emph{exchangeable} under $\sim$.
Despite this sufficiency result, it can be useful in practice to use non-exchangeable distributions if they offer a computational advantage in computing  and range values required for coding with ANS.

The equivalence class is a deterministic function of the sequence.
Their probabilities are related by,
\begin{align}
    -\log P_{X, Z^n}(x, z^n) & = -\log P_{X}(x) - \log P_{Z^n \g X}(z^n \g x)     \\
                             & = -\log P_{Z^n}(z^n) - \log P_{X \g Z^n}(x \g z^n) \\
                             & = -\log P_{Z^n}(z^n) - \log \delta_{x}([z^n]),
\end{align}
where $\delta_{x}$ is a PMF that places all mass on $x$.
For any $z^n \in x$, this gives,
\begin{align}
    -\log P_{X}(x) = -\log P_{Z^n}(z^n) - \left(-\log P_{Z^n \g X}(z^n \g x)\right).
\end{align}
The information content of the equivalence class is equal to that of the sequence \emph{minus} the \emph{order information}, shown in parenthesis.
The order information was defined in \cite{varshney2006} as the number of bits required to communicate a sequence if the frequency count of symbols, i.e., multiset, was known.
The order information defined here is a generalization to arbitrary combinatorial objects, and falls back to the definition of \cite{varshney2006} when $\sim$ is permutation-equivalence (\Cref{def:permutation-equivalence}).
%
When the sequence distribution $P_{Z^n}$ is exchangeable, then $P_{Z^n \g X}$ is uniform over the equivalence class,
\begin{align}\label{eq:order-information}
    - \log P_{Z^n \g X}(z^n \g x) = \log\abs{x} \leq \log(n!),
\end{align}
with equality only when $\sim$ is permutation-equivalence.

\section{Random Permutation Codes (RPCs)}
The algorithms presented in this thesis, ROC (\Cref{chapter:roc}), RCC (\Cref{chapter:rcc}), and REC (\Cref{chapter:rec}) encode CRVs through the generation of random permutations with bits-back coding.
This section shows a unifying view of these algorithms.
We refer to this family of algorithms as \emph{Random Permutation Codes}.

In practice, a sequence in the equivalence class is compressed to represent the combinatorial object in digital mediums.
For example, in the case of graphs, an edge-list is used to store and communicate the graph structure.
Any choice of lossless code for the elements of the sequence, referred to as the \emph{symbol codec}, defines a joint distribution $Q_{Z^n}$.
The distribution over $Z^n$ defines a distribution for the CRV,
\begin{align}\label{eq:crv-implied-model}
    Q_X(x)
     & \defeq \sum_{z^n \in x} Q_{Z^n}(z^n).
\end{align}
Similar to \Cref{lemma:sufficiency-of-exchangeability}, any rate, measured by the cross-entropy $\E*{-\log Q_X}$, can be achieved by a $Q_X$ induced from a symbol codec $\bar{Q}_{Z^n}$ that is exchangeable under the equivalence relation defining the CRV,
\begin{align}
    \bar{Q}_{Z^n}(z^n) \defeq \frac{1}{\abs{[z^n]}} \cdot \sum_{z^n \in x} Q_{Z^n}(z^n),
\end{align}
\begin{align}
    \E*{-\log Q_X}
     & = - \sum_{x \in \X} P_X(x) \cdot \log Q_X(x)                                         \\
     & = - \sum_{x \in \X} P_X(x) \cdot \log \left(\sum_{z^n \in x} Q_{Z^n}(z^n)\right)     \\
     & = - \sum_{x \in \X} P_X(x) \cdot \log \left(\abs{x} \cdot \bar{Q}_{Z^n}(z^n)\right).
\end{align}

The assumption made in this thesis is that $Q_X$ is not directly available to perform coding with ANS, in the sense that computing the probability and cumulative probability values is not possible.
This can happen, for example, if the size of the equivalence class is large, implying the sum in \Cref{eq:crv-implied-model} is intractable.

To solve this, the family of algorithms presented in this thesis employ a sampling mechanism, defined by a conditional distribution $P_{Z^n \g X}$, which generates a representative sequence of the non-sequential object, $X$, using asymmetric numeral systems (ANS) \cite{duda2009asymmetric}.
The sequence is selected via ANS decodes (\Cref{sec:ans-decode}), where at each step an element of the sequence is sampled and encoded into a common ANS state using a chosen symbol codec.
Tallying up all decode operations decreases the ANS state by the information content $-\log P_{Z^n \g X}(z^n \g x)$ of the sampled sequence $z^n$.
Interleaving decoding and encoding avoids the initial bits issue (\Cref{sec:bbans-initial-bits}) common to this family of coding algorithms.

Random Permutation Codes can be unified under a common mathematical framework, described next.
Given an instance of a CRV, $X = x$, with alphabet $\X = \quotient{\Z^n}{\sim}$, a random sequence $z^n \in x$ is encoded with ANS.
We assume the encoder observes the CRV instance in the form of a \emph{reference sequence}, defined by $s\colon \X \mapsto \Z^n$, where,
\begin{align}
    s(x) \in x \text{ such that } s(x)_1 \leq s(x)_2 \leq \dots \leq s(x)_n,
\end{align}
with $\leq$ an arbitrary total order on $\Z$.

\begin{example}[Reference Sequence - Multisets]
    Let $X$ be a multiset CRV (\Cref{crv:multisets}) with $\Z = [3]$ and $n=3$, and $\leq$ the usual total order on natural numbers.
    Each column below shows an equivalence class in $\quotient{\Z^n}{\sim}$.
    The reference sequence of each equivalence class is in the first row.
    \begin{center}
        \begin{tabular}{cccccccccc}
            $111$ & $222$ & $333$ & $112$ & $113$ & $122$ & $133$ & $223$ & $233$ & $123$ \\
                  &       &       & $121$ & $131$ & $212$ & $313$ & $232$ & $323$ & $132$ \\
                  &       &       & $211$ & $311$ & $221$ & $331$ & $322$ & $332$ & $213$ \\
                  &       &       &       &       &       &       &       &       & $231$ \\
                  &       &       &       &       &       &       &       &       & $312$ \\
                  &       &       &       &       &       &       &       &       & $321$
        \end{tabular}
    \end{center}
\end{example}

At step $i$, the $j_i$-th element, $s(x)_{j_i}$, is selected via an ANS decode and then encoded using the symbol codec into the same ANS state used for decoding.

After sampling an index, $Z_i$ is encoded with the symbol codec adding $-\log Q_{Z^n}(Z^n)$  bits to the ANS state (after $n$ steps) in the large state regime (see \Cref{theorem:source-coding-ans}).
The rate of any code designed for the CRV $X$ will be lower bounded by the cross-entropy between the true data distribution, $P_X$,  and the implied model, $Q_X$.
Decoding to sample the latents decreases the ANS state by exactly the order information $\eqref{eq:order-information}$, which in expectation is equal to $H(P_{Z^n \g X})$.
Due to this, the average change in the ANS state, in the large state regime, is equal to the cross-entropy, establishing the definition of optimality for this class of algorithms.
\begin{theorem}[Optimality of CRV Codes]
    For a CRV $X \sim P_X$, sampling distribution $P_{Z^n \g X}$, and symbol codec $Q_{Z^n}$,
    \begin{align}
        \E*{-\log Q_{Z^n}(Z^n)} - H(P_{Z^n \g X}) = \E{-\log Q_X(X)}.
    \end{align}
    where $Q_{X}(x) \defeq  \abs{x} \cdot Q_{Z^n}(s(x))$.
\end{theorem}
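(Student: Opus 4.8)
The plan is to expand both expectations on the left-hand side as explicit double sums over equivalence classes $x \in \X$ and their representatives $z^n \in x$, and then collapse them using two structural facts. First, the symbol codec $Q_{Z^n}$ may be taken exchangeable under $\sim$ without loss of generality: as in the discussion following \Cref{lemma:sufficiency-of-exchangeability}, replacing $Q_{Z^n}$ by the exchangeable $\bar{Q}_{Z^n}$ leaves $Q_X$ literally unchanged, since $\abs{x}\cdot\bar{Q}_{Z^n}(s(x)) = \sum_{z^n \in x} Q_{Z^n}(z^n) = Q_X(x)$, and likewise leaves the achieved cross-entropy unchanged. Second, the sampling distribution $P_{Z^n \g X}(\cdot \g x)$ is the uniform distribution on the finite set $x$; this is exactly the property that makes $-\log P_{Z^n \g X}(z^n \g x) = \log\abs{x}$ the \emph{order information} of \eqref{eq:order-information}, and it is what the RPC construction enforces — decoding each index from the ANS state in the large-state regime selects a representative uniformly from $x$, the general-$\sim$ analogue of the ``all orderings are equally likely'' property used for ROC.

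First I would record the pointwise identity that drives everything. For any $x \in \X$ and any representative $z^n \in x$, exchangeability of $Q_{Z^n}$ gives $Q_{Z^n}(z^n) = Q_{Z^n}(s(x))$, so the definition $Q_X(x) \defeq \abs{x} \cdot Q_{Z^n}(s(x))$ yields
\begin{align*}
    -\log Q_{Z^n}(z^n) = -\log Q_X(x) + \log\abs{x}.
\end{align*}
Taking expectations under the joint law $P_X \cdot P_{Z^n \g X}$ (so $X \sim P_X$ and then $Z^n \g X$ is produced by the sampler), and using that $Z^n \in X$ with probability one, this gives $\E*{-\log Q_{Z^n}(Z^n)} = \E{-\log Q_X(X)} + \E{\log\abs{X}}$.

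Next I would handle the entropy term. Since $P_{Z^n \g X}(\cdot \g x)$ is uniform on a set of size $\abs{x}$, its entropy is $\log\abs{x}$; averaging over $X \sim P_X$ gives $H(P_{Z^n \g X}) = \E{\log\abs{X}}$. Subtracting this from the previous display cancels the $\E{\log\abs{X}}$ contributions and leaves $\E{-\log Q_X(X)}$, which is the claim.

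The only genuine obstacle is making the two implicit hypotheses precise and checking they are harmless: exchangeability of the symbol codec is WLOG by \Cref{lemma:sufficiency-of-exchangeability} and the cross-entropy computation that follows it (with the caveat above that $Q_X$ must be verified to be untouched), while uniformity of $P_{Z^n \g X}(\cdot \g x)$ must be read off the RPC sampling mechanism rather than assumed abstractly — it is precisely where the $\log\abs{x}$ bits of order information are recovered. Once both are in place, the two short computations above close the argument.
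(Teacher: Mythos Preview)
Your core argument is correct and arguably more direct than the paper's. The paper first decomposes $H(P_{X,Z^n})$ in two ways to get $H(P_{Z^n}) = H(P_X) + H(P_{Z^n \g X})$, then shows $\KL{P_{Z^n}}{Q_{Z^n}} = \KL{P_X}{Q_X}$ by the same exchangeability substitution $z^n \mapsto s(x)$ you use, and finally combines cross-entropy $=$ entropy $+$ KL on both levels. Your route skips the entropy/KL bookkeeping by writing the pointwise identity $-\log Q_{Z^n}(z^n) = -\log Q_X(x) + \log\abs{x}$ and averaging once. Both arguments rest on exactly the same two facts: exchangeability of $Q_{Z^n}$ under $\sim$, and uniformity of $P_{Z^n \g X}(\cdot \g x)$ on $x$.

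One genuine flaw in your write-up is the WLOG justification. Replacing a non-exchangeable $Q_{Z^n}$ by $\bar{Q}_{Z^n}$ does \emph{not} leave $\E{-\log Q_{Z^n}(Z^n)}$ unchanged: with $Z^n \g X$ uniform on $x$, Jensen gives $\frac{1}{\abs{x}}\sum_{z^n \in x}(-\log Q_{Z^n}(z^n)) \geq -\log\bar{Q}_{Z^n}(s(x))$ with strict inequality in general. Nor is the theorem's $Q_X(x) = \abs{x}\cdot Q_{Z^n}(s(x))$ preserved (it changes from $\abs{x}\cdot Q_{Z^n}(s(x))$ to $\sum_{z^n \in x} Q_{Z^n}(z^n)$, and you silently identify these). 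So the identity you are proving can fail for non-exchangeable $Q_{Z^n}$; exchangeability is a genuine hypothesis, not a WLOG reduction. The paper sidesteps this by simply using $Q_{Z^n}(z^n) = Q_{Z^n}(s(x))$ inside its proof without comment, i.e., it tacitly assumes exchangeability rather than reducing to it. Your proof is cleanest if you do the same: state exchangeability of $Q_{Z^n}$ as a standing assumption (justified by the preceding discussion in the chapter) and drop the WLOG paragraph.
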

\begin{proof}
    The equivalence class is deterministic conditioned on the sequence,
    \begin{align}
        H(P_{X, Z^n})
         & = H(P_X) + H(P_{Z^n \g X})                         \\
         & = H(P_{Z^n}) + \underbrace{H(P_{X \g Z^n})}_{= 0}.
    \end{align}
    Furthermore,
    \begin{align}
        \KL{P_{Z^n}}{Q_{Z^n}}
         & = \sum_{z^n \in \Z^n} P_{Z^n}(z^n)\cdot\log\left(\frac{P_{Z^n}(z^n)}{Q_{Z^n}(z^n)}\right)                                           \\
         & = \sum_{x \in \X} \sum_{z^n \in x} P_{Z^n}(z^n)\cdot\log\left(\frac{P_{Z^n}(z^n)}{Q_{Z^n}(z^n)}\right)                              \\
         & = \sum_{x \in \X} \sum_{z^n \in x} P_{Z^n}(s(x))\cdot\log\left(\frac{P_{Z^n}(s(x))}{Q_{Z^n}(s(x))}\right)                           \\
         & = \sum_{x \in \X} \abs{x} \cdot P_{Z^n}(s(x))\cdot\log\left(\frac{P_{Z^n}(s(x))}{Q_{Z^n}(s(x))}\right)                              \\
         & = \sum_{x \in \X} \abs{x} \cdot P_{Z^n}(s(x))\cdot\log\left(\frac{\abs{x} \cdot P_{Z^n}(s(x))}{ \abs{x} \cdot Q_{Z^n}(s(x))}\right) \\
         & = \sum_{x \in \X} P_X(x)\cdot\log\left(\frac{P_X(x)}{ Q_X(x)}\right)                                                                \\
         & = \KL{P_X}{Q_X}.
    \end{align}
    Finally,
    \begin{align}
        \E*{-\log Q_{Z^n}(Z^n)}
         & = H(P_{Z^n}) + \KL{P_{Z^n}}{Q_{Z^n}}       \\
         & = H(P_X) + H(P_{Z^n \g X}) + \KL{P_X}{Q_X} \\
         & = \E{-\log Q_X(X)} + H(P_{Z^n \g X})
    \end{align}
\end{proof}

\section{Constructive and Destructive RPCs}
This section discuss an alternative view for the family of Random Permutation Codes.
In this view, the encoder observes an instance of $X=x$ to compress and progressively transforms the sample into a reference $x_0$ known by the decoder.
The steps taken by the encoder are communicated using bits-back coding such that the decoder can reconstruct the sample starting from the reference.
The rate achieved by RPC codes can be shown to equal a negated Evidence Lower Bound \cite{jordan1999introduction} with an exact approximate posterior.

To understand the alternative view, first note that an RPC algorithm necessarily generates a sequence of CRVs during encoding.

\begin{remark}[RPC Generates a Sequence of CRVs]
    Given a CRV $X$, defined by $(\Z, n, \sim)$, let $X_i$ be a random variable defined as the equivalence class of the \emph{remaining} symbols at the $i$-th step of an RPC,
    \begin{align}
        X_i = [(Z_{i+1}, \dots, Z_n)].
    \end{align}
    Let $\sim_i$ be an equivalence relation on $\Z^{n-i}$ such that
    \begin{align}
        (z_{i+1}, \dots, z_n) \sim_i (w_{i+1}, \dots, w_n),
    \end{align}
    if, and only if, $z^n \sim w^n$ for some $z_1, \dots, z_i$ and $w_1, \dots, w_i$.
    Then, $X_i$ is a CRV defined by $(\Z, n-i, \sim_i)$.
\end{remark}

\begin{example}[Multiset CRV Sequence]\label{example:multiset-crv-sequence}
    For a multiset-valued CRV, the following figure shows the possible values for $X_i$, conditioned on $X_4 = \{a, a, b, c\}$.
    The path indicated by the directed edges is that which encodes the sequence $z^4 = abca$.

    \begin{figure}[!h]
        \center
        \begin{tikzpicture}
            \node (root) at (0,0) {$\{a,a,b,c\}$};
            \node (abc) at (-2,-1) {$\{a,b,c\}$};
            \node (aac) at (0,-1) {$\{a,a,c\}$};
            \node (aab) at (2,-1) {$\{a,a,b\}$};
            \node (ab) at (-3,-2) {$\{a,b\}$};
            \node (ac) at (-1,-2) {$\{a,c\}$};
            \node (bc) at (1,-2) {$\{b,c\}$};
            \node (aa) at (3,-2) {$\{a,a\}$};
            \node (a) at (-2,-3) {$\{a\}$};
            \node (b) at (0,-3) {$\{b\}$};
            \node (c) at (2,-3) {$\{c\}$};
            \node (empty) at (0,-4) {$\{\}$};
            \draw[->, line width=1.1pt] (root) -- (abc);
            \draw[->, line width=1.1pt] (abc) -- (ac);
            \draw[->, line width=1.1pt] (ac) -- (a);
            \draw[->, line width=1.1pt] (a) -- (empty);

            \node (X4) at (-4, 0) {$X_4$};
            \node (X3) at (-4,-1) {$X_3$};
            \node (X2) at (-4,-2) {$X_2$};
            \node (X1) at (-4,-3) {$X_1$};
            \node (X0) at (-4.45,-4) {$X_0 = x_0$};

            \foreach \y in {0,...,4}{
                    \draw[dashed, rounded corners] (-3.5,-\y-0.4) rectangle (3.5,-\y+0.4);
                }
        \end{tikzpicture}
    \end{figure}

\end{example}

\begin{theorem}[Markov Property of CRV Sequences]
    Let $(X_1, \dots, X_n) \sim P_{X^n}$ be a CRV sequence generated via an RPC, where $X = X_n$. Then, $X_{i-1}$ is independent of $X_{i+1}, \dots, X_n$ conditioned on $X_i$,
    \begin{align}
        P_{X_{i-1} \g X_i, \dots, X_n} = P_{X_{i-1} \g X_i}.
    \end{align}
\end{theorem}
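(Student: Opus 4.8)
The plan is to exhibit the sequence $(X_n, X_{n-1}, \dots, X_0)$ as the output of a step-wise ``peeling'' process whose transition at each step depends on the past only through the current remaining object, and then to read the Markov property off from that structure. So the first thing I would do is make the generative mechanism of an RPC explicit: a representative $Z^n$ of $X = X_n$ is produced one symbol at a time, and at the step in which the remaining object shrinks from $X_i$ to $X_{i-1}$ a single symbol $W_i \in \Z$ is drawn by an ANS decode and removed, with $X_{i-1}$ being exactly the equivalence class of what remains. Two facts, both inherited from the defining ``sampling without replacement'' design of RPCs — verified directly for ROC in \eqref{eq:roc-swor-prob}, and analogously for RCC and REC — are what I would isolate: (i) \emph{peeling descends to equivalence classes}, i.e.\ $X_{i-1}$ is a deterministic function $X_{i-1} = \phi(X_i, W_i)$; this uses that $\sim$ is finer than permutation-equivalence together with the reduced relations $\sim_i$ of the preceding remark, which declare two suffixes equivalent exactly when they extend to $\sim$-equivalent full sequences; and (ii) the conditional law of $W_i$ given everything produced so far depends on that history \emph{only} through $X_i$, say $P_{W_i \g X_n, \dots, X_i}(\cdot) = g_i(\cdot\,;\,X_i)$ for a kernel $g_i$.

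Granting (i) and (ii), the conclusion comes from a short measurability/tower argument, which I would carry out next. The objects $X_n, X_{n-1}, \dots, X_i$ are all deterministic functions of $X$ and the earlier-peeled symbols $W_n, \dots, W_{i+1}$, so $\sigma(X_i, \dots, X_n) \subseteq \sigma(X, W_{i+1}, \dots, W_n)$; on the other hand the kernel $g_i(\cdot\,;\,X_i)$ is already $\sigma(X_i)$-measurable, so conditioning the law of $W_i$ on the larger $\sigma$-algebra returns the same thing as conditioning on $\sigma(X_i)$. Hence, for any value $a$,
\begin{align}
    P(X_{i-1} = a \g X_i, \dots, X_n)
     & = \sum_{w \,:\, \phi(X_i, w) = a} P(W_i = w \g X_i, \dots, X_n) \\
     & = \sum_{w \,:\, \phi(X_i, w) = a} g_i(w \,;\, X_i)             \\
     & = P(X_{i-1} = a \g X_i),
\end{align}
which is the asserted identity $P_{X_{i-1} \g X_i, \dots, X_n} = P_{X_{i-1} \g X_i}$.

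I expect the real work to be in (i)–(ii) rather than in the final deduction. One must check that the abstract ``remove one symbol'' map is single-valued on the quotient sets on which the $\sim_i$ act — so that $X_{i-1}$ is genuinely a function of $(X_i, W_i)$ and not of the chosen representative — and that the ANS-decode distribution used at each peeling step truly factors through the current object. Both hold by the RPC construction and by $\sim$ being finer than permutation-equivalence, but phrasing them cleanly at the level of general CRVs, rather than invoking the case-by-case verifications for ROC, RCC, and REC, is the delicate part; once they are in place the Markov property is essentially a formality.
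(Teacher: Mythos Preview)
Your proposal is correct and follows essentially the same route as the paper: both arguments reduce the claim to the fact that the conditional law of the next removed symbol given the full history depends only on the current equivalence class $X_i$, and both ground that fact in the exchangeability built into the RPC sampling mechanism (the paper invokes \Cref{lemma:sufficiency-of-exchangeability} directly, you invoke the sampling-without-replacement design, which is the same thing). Your write-up is more explicit about the measurability step and about why the peeling map $\phi$ is well-defined on quotients, whereas the paper compresses this into a one-line bijection claim between $Z^{i-1}$ and the trajectory $(X_i,\dots,X_n)$; but the underlying decomposition and the key lemma are identical.
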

\begin{proof}
    This proof is a direct consequence of the sufficiency of exchangeability of $P_{Z^n}$ (\Cref{lemma:sufficiency-of-exchangeability}).

    As exemplified in \Cref{example:multiset-crv-sequence}, conditioned on $X=X_n$, there is a bijection between the instances of $Z^{i-1}$ and $(X_i, \dots, X_{n - (i-1)})$.
    This implies the following equality,
    \begin{align}
        P_{X_{i-1} \g X_i, \dots, X_n}(x_{i-1} \g x_i, \dots, x_n)
        = P_{Z_i \g Z^{i-1}}(z_i \g z^{i-1}).
    \end{align}
    However, it is sufficient to know the equivalence class $[Z^{i-1}]$ to compute the probability under $P_{Z_i \g Z^{i-1}}$ due to sufficiency of exchangeability,
    \begin{align}
        P_{Z_i \g Z^{i-1}}(z_i \g z^{i-1}) = P_{Z_i \g Z^{i-1}}(z_i \g w^{i-1}),
    \end{align}
    for any $w^{i-1} \in [z^{i-1}]$.
    Since the equivalence class is exactly $X_i$,  we can write,
    \begin{align}
        P_{X_{i-1} \g X_i}(x_{i-1} \g x_i)
         & =  P_{Z_i \g Z^{i-1}}(z_i \g z^{i-1})                          \\
         & = P_{X_{i-1} \g X_i, \dots, X_n}(x_{i-1} \g  x_i, \dots, x_n).
    \end{align}
\end{proof}

In this alternative view, the symbol codec can be seen as defining a \emph{constructive} conditional distribution $Q_{X_i \g X_{i-1}}$, used for encoding.
An encoder implementing an RPC code samples $X_{i-1}$ from $X_i$ via ANS decoding by selecting a value for $Z_{n-(i-1)}$.
Then, $X_i$ is encoded conditioned on $X_{i-1}$, using $Q_{X_i \g X_{i-1}}$.
The encoder repeats these steps until it reaches the reference.
Since the reference is known to the decoder,
\begin{align}
    -\log P_{X_0 \g X_1}(x_0 \g x_1) = -\log \delta_{x_0}(x_0) = 0,
\end{align}
for all values of $x_1$, implying $0$ bits are required to communicate the reference.

\begin{example}[Constructive/Destructive Multiset RPC]
    Let $X_n=X$ be a multiset-valued CRV with RPC sequence $X^n$ and symbol codec $Q_{Z_i} = Q_Z$ for all $i \in [n]$.
    Let $z_i$ be the only element in the multiset difference $x_i \setminus x_{i-1}$.
    Then, the constructive and destructive conditional distributions of Random Order Coding (\Cref{chapter:roc}) are,
    \begin{align}
        P_{X_{i-1} \g X_i}(x_{i-1} \g x_i) & = \frac{X_i(z_i)}{i} \cdot \1\{ z_i \in X_i \}, \\
        Q_{X_i \g X_{i-1}}(x_i \g x_{i-1}) & = Q_Z(z_i),
    \end{align}
    where $X_i(z)$ is the frequency count of $z$ in $X_i$.
\end{example}

The change in ANS state, in the large state regime, of an RPC code in this alternative view is exactly,
\begin{align}\label{eq:rpc-alternative-rate}
    - \log Q_{X_0} - \sum_{i=1}^n \log\left( \frac{Q_{X_i \g X_{i-1}}}{P_{X_{i-1} \g X_i}} \right),
\end{align}
where $- \log Q_{X_0} = 0$.
This quantity equals the entropy of the source when $Q_{X_i \g X_{i-1}} = P_{X_i \g X_{i-1}}$, as implied by the following lemma.

\begin{lemma}[Constructive-Destructive Decomposition]\label{lemma:constructive-destructive-decomposition}
    The data distribution of a CRV, $P_{X_n} = P_X$, can be written as a ratio of constructive and destructive conditional distributions,
    \begin{align}
        P_X = P_{X_0} \cdot \prod_{i=1}^n \frac{P_{X_i \g X_{i-1}}}{P_{X_{i-1} \g X_i}}.
    \end{align}
    where $X^n$ is a CRV sequence and $X_0 = x_0, P_{X_0} = \delta_{x_0}$, the reference known to both the encoder and decoder.
\end{lemma}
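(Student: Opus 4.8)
The plan is to apply Bayes' rule at each step of the CRV sequence $X_0, X_1, \dots, X_n$ and telescope. First I would note that $X_0 = x_0$ is a constant, so $P_{X_0} = \delta_{x_0}$ and $P_{X_0}(x_0) = 1$; hence the claimed identity is equivalent to showing
\begin{equation}
    P_X(x) = \prod_{i=1}^n \frac{P_{X_i \g X_{i-1}}(x_i \g x_{i-1})}{P_{X_{i-1} \g X_i}(x_{i-1} \g x_i)}
\end{equation}
for every path $x_0, x_1, \dots, x_{n-1}, x_n = x$ arising with positive probability from the RPC's sampling mechanism (equivalently, every path along which all the conditionals below are defined and strictly positive).

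Next, for each $i \in [n]$, Bayes' rule applied to the pair $(X_{i-1}, X_i)$ gives $P_{X_i \g X_{i-1}}(x_i \g x_{i-1})\, P_{X_{i-1}}(x_{i-1}) = P_{X_{i-1}, X_i}(x_{i-1}, x_i) = P_{X_{i-1} \g X_i}(x_{i-1} \g x_i)\, P_{X_i}(x_i)$, which rearranges to $\tfrac{P_{X_i \g X_{i-1}}(x_i \g x_{i-1})}{P_{X_{i-1} \g X_i}(x_{i-1} \g x_i)} = \tfrac{P_{X_i}(x_i)}{P_{X_{i-1}}(x_{i-1})}$ on the support. Taking the product over $i$, the right-hand side telescopes to $\tfrac{P_{X_n}(x_n)}{P_{X_0}(x_0)} = P_{X_n}(x_n) = P_X(x)$, which is exactly the claim. (I would also remark that the telescoping shows the product does not depend on the intermediate values $x_1, \dots, x_{n-1}$, so the statement is well-posed even though $x$ alone does not determine the full path; and that specializing $Q_{X_i \g X_{i-1}} = P_{X_i \g X_{i-1}}$ in \eqref{eq:rpc-alternative-rate} recovers $\E{-\log P_X}$, i.e.\ the source entropy, which is the intended consequence.)

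I expect the only delicate point — and the main, albeit mild, obstacle — to be bookkeeping about supports: the conditional $P_{X_{i-1} \g X_i}$ is only defined on the support of $X_i$, so one must restrict to the paths the RPC can actually generate. This is where the construction of the CRV sequence, together with the sufficiency-of-exchangeability argument behind the Markov property of CRV sequences, is used to guarantee that these are the genuine conditionals and that every factor in the product is nonzero along such a path. Once that is pinned down, the computation is immediate, so the proof is essentially a one-line telescoping of Bayes' theorem.
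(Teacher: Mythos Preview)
Your proposal is correct and is essentially the same argument as the paper's: the paper writes the joint $P_{X_0,X^n}$ two ways via the chain rule (forward and backward) and divides, while you apply Bayes' rule pairwise and telescope the resulting ratio of marginals $P_{X_i}/P_{X_{i-1}}$. The only minor difference is that your pairwise-Bayes telescoping does not actually need the Markov property you mention at the end, whereas the paper's two chain-rule factorizations implicitly rely on it to reduce each factor to a pairwise conditional; your route is thus slightly more elementary, but the content is the same.
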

\begin{proof}
    \begin{align}
        P_{X_0, X^n}
         & = P_{X_n} \cdot P_{X_{n-1} \g X_n} \cdots P_{X_0 \g X_1}  \\
         & = P_{X_0} \cdot P_{X_1 \g X_0} \cdots P_{X_n \g X_{n-1}}.
    \end{align}

    \begin{align}
        P_{X_n}
         & = P_{X_0} \cdot \left( \frac{P_{X_1 \g X_0} \cdots P_{X_n \g X_{n-1}}}{P_{X_{n-1} \g X_n} \cdots P_{X_0 \g X_1}}\right)               \\
         & =  P_{X_0} \cdot \left(\frac{P_{X_1 \g X_0}}{P_{X_0 \g X_1}}\right) \cdots \left(\frac{P_{X_n \g X_{n-1}}}{P_{X_{n-1} \g X_n}}\right) \\
         & = P_{X_0} \cdot \prod_{i=1}^n \frac{P_{X_i \g X_{i-1}}}{P_{X_{i-1} \g X_i}}
    \end{align}
\end{proof}

\begin{example}[Constructive/Destructive Graph CRV - Erdős–Rényi]
    Consider an undirected-graph CRV $G$ with the following constructive and destructive process from some RPC, known as the Erdős–Rényi model \cite{erdHos1960evolution}.
    At each step, the encoder removes an edge from the graph and encodes it into the ANS state.
    Let $(E_1, \dots, E_n)$ be the random edge sequence (see \Cref{sec:combinatorial-objects-graphs}) where $G_i$ is the graph on $m$ nodes with edge-set $E(G) \defeq \{E_1, \dots, E_i\}$.
    The conditional distributions of the process are,
    \begin{equation}\label{eq:edge-oriented-edge-prob}
        P_{G_i \g G_{i-1}} = \frac{1}{\binom{n}{2}-(i-1)}\cdot\1\{E(G_i) = E(G_{i-1}) \cup \{ E_i \} \},
    \end{equation}
    which induces a uniform distribution over $E_i$, with alphabet the set of all possible edges $e \notin E(G_{i-1})$ that can be added to the graph.
    The destructive process, where edges are removed, is defined as
    \begin{equation}
        P_{G_{i-1} \g G_i} = \frac{1}{i},
    \end{equation}
    where we have dropped the term $\1\{.\}$ to ease notation.
    Similarly, this defines a uniform distribution over all edges that can be removed from the graph.
    For a fixed integer $m$, the marginal distribution of the first $m$ graphs $G^m = (G_1, \dots, G_m)$ is
    \begin{align}
        \prod_{i=1}^{m} P_{G_i \g G_{i-1}}
         & = \prod_{i=1}^{m} \frac{1}{\binom{n}{2} - (i-1)}          \\
         & = \frac{1}{\binom{n}{2}!/\left(\binom{n}{2} - m\right)!}.
    \end{align}
    The term in the denominator counts the number of ways in which we can permute $m$ out of $\binom{n}{2}$ edges, which makes intuitive sense considering that $G^m$ defines an ordering on selected edges.
    That is, $G^m$ implicitly defines a sequence of edges $e^m$.
    The marginal for a single graph $G_m$ is
    \begin{align}
        P_{G_m}
         & = \frac{P_{G^m}}{\prod_{i=1}^m P_{G_{i-1} \g G_i}}                       \\
         & = \frac{\left(\binom{n}{2} - m\right)!/\binom{n}{2}!}{\prod_{i=1}^m 1/i} \\
         & = \frac{m!\left(\binom{n}{2} - m\right)!}{\binom{n}{2}!}                 \\
         & = \binom{\binom{n}{2}}{m}^{-1},
    \end{align}
    which is a uniform distribution over all edge sets of size $m$.
\end{example}

\begin{remark}[NELBO Rate with Latent Trajectories]
    The generated RPC sub-sequence can be interpreted as a latent trajectory, $T = (X_{n-1}, \dots, X_0)$, generated by the sampling mechanism $P_{X_{i-1} \g X_i}$.
    \Cref{eq:rpc-alternative-rate} can be rewritten to reveal an upper bound on the entropy of the source in the form of a NELBO \cite{jordan1999introduction} (see \Cref{theorem:vae-elbo}),
    \begin{align}
         & - \log Q_{X_0} - \sum_{i=1}^n \log\left( \frac{Q_{X_i \g X_{i-1}}}{P_{X_{i-1} \g X_i}} \right)                                 \\
         & = - \log\left( \frac{Q_{X_n \g X_{n-1}} \cdots Q_{X_1 \g X_0} \cdot Q_{X_0}}{P_{X_0 \g X_1} \cdots P_{X_{n-1} \g X_n}} \right) \\
         & = - \log\left( \frac{Q_{T, X}}{P_{T \g X}} \right),
    \end{align}
    where $X_n = X$%
    \footnote{In the variational inference \cite{jordan1999introduction} literature, $Q_{T \g X}$ would be the default notation to denote the posterior from which latents are generated.
        However, we chose to maintain consistency with earlier chapter of the thesis where $Q$ is reserved for models used for encoding with ANS.}.
\end{remark}

\changed{
    \section{Discussion}
    The equivalence relation in the definition of a CRV is restricted to be finer, or as fine as, permutation-equivalence.
    This allows us to make strong statements regarding the achievable rates of CRVs, as well as develop the family of RPCs that can efficiently achieve these rates.

    A suggestion for future work is to relax this requirement to allow arbitrary equivalence relations.
    This would permit modeling other, more complex, objects such as unlabelled graph, where graphs are equivalent up to a permutation of their vertex labels that preserves the edges in the graph.

    Similarly, another direction is to change the restriction to be an equivalence relation \emph{coarser}, or as coarse as, permutation-equivalence.
    This would mean that the equivalence classes of the CRV can be written as the disjoint union of equivalence classes under permutation-equivalence.
    The advantage of this direction is the potential to re-use RPCs as sub-routines in the to be developed compression algorithms.
    The coarseness, discussed previously, need not be with respect to permutation-equivalence, but instead any of the previously discussed equivalence relations such as those underlying REC or RCC.
    However, it is unclear what type of real-world non-sequential objects could be modeled with this construction.

    Finally, as a last remark on potential future work, this manuscript deals exclusively with lossless source coding, and does not investigate the limits of \emph{lossy} source coding: such as when some distortion can be tolerated in the sequences $Z^n$ or equivalence classes $X$.

}

\appendix
\backmatter
\printbibliography[heading=bibintoc]
\end{document}